\newtheorem{theorem}{Theorem}
\newtheorem{proposition}[theorem]{Proposition}
\newtheorem{lemma}{Lemma}
\newtheorem{claim}[lemma]{Claim}
\newtheorem{problem}{Problem}
\newtheorem{corollary}[lemma]{Corollary}
\newtheorem{example}{Application}
\def\reals{\mathbb{R}}
\def\R{\reals}
\def\sph{\mathbb{S}}
\def\eps{\epsilon}
\def\Lag{\mathcal{L}}
\def\poly{\mathrm{poly}}
\def\HH{{\cal H}}
\def\B{\mathbb{B}}
\newcommand{\abs}[1]{\left|#1\right|}
\newcommand{\norm}[1]{\left\|#1\right\|}
\newcommand{\prob}[1]{{\sf Pr}\left(#1\right)}
\newcommand{\vol}[1]{\operatorname{vol}\left(#1\right)}
\newcommand{\E}[1]{\mathbb{E}\left(#1\right)}
\newcommand{\T}[1]{#1^\mathrm{T}}
\newcommand{\innerprod}[2]{\left<{ #1},{#2}\right>}
\newcommand{\spn}[1]{\mathrm{span}\left(#1\right)}
\newcommand{\EE}[2]{\mathbb{E}_{#1}\left(#2\right)}
\title{Structure from Local Optima: Factoring Distributions and Learning Subspace Juntas\\ (Extended version)}
\author{Santosh Vempala
and Ying Xiao\\
School of Computer Science \\
Georgia Institute of Technology \\
{\small\tt \{vempala,ying.xiao\}@gatech.edu}}
\date{\today}
\begin{document}
\maketitle

\begin{abstract}
  Independent Component Analysis (ICA), a well-known approach in
  statistics, assumes that data is generated by applying an affine
  transformation of a fully independent set of random variables, and
  aims to recover the orthogonal basis corresponding to the
  independent random variables. We consider a generalization of ICA,
  wherein the data is generated as an affine transformation applied to
  a product of distributions on two orthogonal subspaces, and the goal
  is to recover the two component subspaces.  Our main result,
  extending the work of Frieze, Jerrum and Kannan, is an algorithm for
  generalized ICA that uses local optima of high moments and recovers
  the component subspaces. When one component is on a $k$-dimensional
  ``relevant" subspace and satisfies some mild assumptions while the
  other is ``noise" modeled as an $(n-k)$-dimensional Gaussian, the
  complexity of the algorithm is $T(k,\eps) + \poly(n)$ where $T$
  depends only on the $k$-dimensional distribution.  We apply this
  result to learning a {\em $k$-subspace junta}, i.e., an unknown
  $0$-$1$ function in $\R^n$ determined by an unknown $k$-dimensional
  subspace. This is a common generalization of learning a $k$-junta in
  $\R^n$ and of learning an intersection of $k$ halfspaces in $\R^n$,
  two important problems in learning theory.

  Our main tools are the use of local optima to recover global
  structure, a gradient-based algorithm for optimization over tensors,
  and an approximate polynomial identity test. Together, they
  significantly extend ICA and the class of $k$-dimensional labeling
  functions that can be learned efficiently.
\end{abstract}
\thispagestyle{empty}
\newpage
\pagenumbering{arabic}

\section{Introduction}\label{sec:introduction}
Independent Component Analysis (ICA) \cite{Jutten1991} is a
statistical approach that models data in $\R^n$ as generated by a
distribution consisting of $n$ linear combinations of $n$ independent
univariate component distributions, $y = Ax$ with $x,y \in \R^n$,
$x_i$ are independent random variables and $A$ is an invertible $n
\times n$ matrix; in other words, an affine transformation of a
product distribution. The goal is to recover the underlying component
distributions of the $x_i$ given only a set of observations
$y$. Special cases of ICA are of interest in many application areas
with large or high-dimensional data sets \cite{ICA01}. An important
feature of ICA, as we will presently see, is that it can provide an
insightful representation even when Principal Component Analysis (PCA)
does not.

In this paper, we consider {\em generalized ICA}, where instead of $n$
independent one-dimensional distributions, we only assume {\em two
  independent distributions on complementary subspaces}. This natural
extension of ICA provides a common generalization of two fundamental
problems in high-dimensional learning, where one sees labeled points
(examples) from an unknown distribution labeled by an unknown $0$-$1$
function and the goal is to find a labeling function that agrees on
most of the distribution \cite{PAC-Valiant}. The first, introduced by
A.~ Blum \cite{Blum94}, is learning a function of $k$ coordinates in
$\R^n$, known as a $k$-junta. The second is the problem of learning an
intersection of $k$ halfspaces in $\R^n$ \cite{BK93, BK97} ($k=1$ is
the classic problem of learning a halfspace). Although the complexity
of both problems is far from settled, there has been much progress in
recent years for special cases, as we discuss in Section
\ref{sec:relatedwork}. Indeed, generalized ICA can be applied to the
problem of learning an unknown function of an unknown $k$-dimensional
subspace of $\R^n$, provided the distribution on points can be
factored into independent distributions on the $k$-dimensional
``relevant" subspace and the $(n-k)$-dimensional ``noise" subspace.

We give an algorithm for generalized ICA that can be viewed as a
tensor version of PCA applied to higher moments, specifically local
optima of moment functions to infer the component distributions. The
algorithm uses a second-order gradient descent method and an
approximate version of the Schwartz-Zippel polynomial identity test,
while its analysis needs tools from convex geometry and probability.
Before we describe our results and techniques in detail, we summarize
the known algorithmic approaches to ICA.

For the problem of identifying the source components given only their
linear combinations as data, PCA suggests the approach of using
principal components of the data as candidates for the component
directions. This would indeed recover the components if the covariance
matrix of the data has distinct nonzero eigenvalues. However, if
variances along two or more directions are equal, then the principal
components are not uniquely defined and PCA does not work. In more
detail, assume that the data is centered, i.e., its mean is zero.
Then PCA can be viewed as finding vectors on the unit sphere that are
local optima of the second moment of the projection, $\max_{x \in
  \sph^{n-1}} \|Ax\|^2$ where $A$ is $m \times n$ with each row being
a data point. These maxima are eigenvalues of $A^TA$, the covariance
matrix of $A$ and hence attain at most $n$ distinct values. The values
and the corresponding vectors can be approximated to arbitrary
accuracy efficiently.

What to do when eigenvalues are repeated? To address this, the idea in
ICA is to consider a broader class of functions to optimize. A natural
choice is higher moments. The use of local optima of fourth moments
was suggested as early as 1991 \cite{lacoumeruiz, Comon91}. When the
component distributions are sufficiently far from being Gaussian, the
local optima of a family of functions on the unit sphere are the
component directions \cite{sw90, dl95} (if component distributions are
Gaussians, then their linear combinations are also Gaussian and the
linear transformation $A$ might not uniquely defined). This approach
can be turned into an a polynomial-time algorithm for unraveling a
product distribution of a wide class of one-dimensional distributions.


We now describe generalized ICA, which significantly weakens
the ICA assumption of a full product distribution. Namely, we assume that the distribution $F$
in $\R^n$ can be factored into a product of two independent marginal
distributions $F_V$ and $F_W$ on unknown orthogonal subspaces $V$ and
$W=V^\perp$, i.e., $F=F_V F_W$. We call such an $F$
\emph{factorizable}. Thus, a random point in $F$ is generated by first
picking its coordinates in $V$ according to $F_V$ and then
independently picking coordinates in $W$ according to $F_W$. The
corresponding problem is the following.
\begin{problem}[Factoring distributions]\label{problem:factoring}
  Given (unlabeled) samples from a factorizable distribution $F=F_V
  F_W$ over $\R^{n}$ (with $V$ and $W$ unknown), recover a
  factorization of $F$.
\end{problem}
If $F$ in fact factorizes further into product of more distributions,
or even a full product distribution of one-dimensional component
distributions as in ICA, an algorithm for the above problem can be
applied recursively to find the full factorization. We will give an
algorithm for this problem under further mild assumptions (roughly
speaking, at least one of $F_V,F_W$ is sufficiently different from
being a Gaussian). Our approach is based on viewing PCA as a second
moment optimization problem, then extending this to higher moments
(alternatively, optimization over tensors). Although such tensor
optimization is intractable in general, for our setting, it will turn
out that {\em local optima} provide valuable information, and can be
approximated efficiently.

The factoring problem above has direct applications to learning in
high dimension. Let $\pi_V$ denote projection to a subspace $V$. We
consider labeling functions $\ell:\R^n \rightarrow \{0,1\}$ 
of the form $\ell(x)
=\ell(\pi_V(x))$. We are given points according to some distribution
$F$ over $\R^n$ along with their labels $\ell(x) = \ell(\pi_V(x))$ for
some unknown subspace $V$ of dimension $k$ (the `relevant' subspace),
and wish to learn the unknown concept $\ell$, i.e., find a function
that agrees with $\ell$ on most of $F$. We call this the problem of
learning a {\em $k$-subspace junta}. We further assume that $F$ is
factorizable as $F=F_VF_W$, with $W = V^{\perp}$ (the `irrelevant'
subspace). The justification for this factorizability assumption is
that coordinates in the $W$ subspace are not relevant to the labeling
function and can be considered to be noisy attributes. The full
statement of our learning problem is as follows:
\begin{problem}[Learning a $k$-subspace junta]\label{problem:learning}
  For $\epsilon, \delta > 0$, given samples drawn from a factorizable
  distribution $F=F_{V}F_{W}$, and labeled by a $\ell = \ell \circ \pi_V$,
  find a $0$-$1$ function $f$ such that with probability at least
  $1-\delta$,
\[
\Pr_F(\ell(x) \neq f(x)) \le \epsilon.
\]
\end{problem}
Our algorithm for generalized ICA leads to an efficient algorithm
for learning $k$-subspace juntas for a large class of ambient
distributions $F$.

\subsection{Related work}\label{sec:relatedwork}
Jutten and Herault formalized the ICA problem
\cite{Jutten1991} and mention in their paper that
variants of this problem had appeared in a variety of different fields
prior to this (the earliest such mention is in \cite{bcs82}). The notion that random variables
should be far from being Gaussian pervades ICA research. By the central
limit theorem, sums of independent random variables converge to a
Gaussian, whereas individually the latent random variables
are not Gaussian. Thus finding directions that maximize some notion of
non-Gaussianity might reveal the latent variables. This intuition is
formalized by introducing functions which serve as a proxy for
non-Gaussianity, called
``contrast functions'' in the ICA literature. 
The definition of a contrast function is that
maximizing a contrast function will give an independent
component. Some examples of contrast functions include the kurtosis
(4th order analogue of variance)\cite{lacoumeruiz, Comon91}, various
cumulants, and functions based on the so-called {\em negentropy}
(\cite{Comon94}). Additionally, there are a variety of tensor methods
and maximum likelihood methods used \cite{cardoso89, cardoso96, Bellsejnowski}.
While there are many algorithms proposed for ICA, some of which appear to perform
well in practice (e.g., FastICA \cite{fastica99}),
there are almost no explicit time complexity bounds. 
Frieze, Jerrum
and Kannan \cite{Frieze96} were the first to give a polynomial
complexity bound for this special case of ICA, namely a product of
uniform distributions on intervals, which can also be viewed as the
problem of learning an unknown parallelopiped from samples.
They used fourth moments, an idea presented earlier in 
several papers in the ICA literature; the key structural lemma is
already present in \cite{dl95}, which was inspired by \cite{sw90}
(Lemma \ref{lemma:representation} of our paper is a
generalization). Subsequently, Nguyen and Regev \cite{nr06} simplified Frieze et al's 
gradient descent algorithm and provided some cryptographic applications.

A different motivation for our work comes
from computational learning theory, where
learning a $k$-{\em junta} is a fundamental problem \cite{Blum94}.
In this problem, one is given points
from some distribution over $\{0,1\}^{n}$, labeled by a
Boolean function that depends only on $k$ of the $n$ coordinates. The
goal is to learn the relevant $k$ coordinates and the labeling
function. Naive enumeration of $k$ subsets of the coordinates leads to
an algorithm of complexity roughly $n^{k}$. Mossel et al \cite{MOS03}
gave an algorithm of complexity roughly $O(n^{0.7k})$ assuming the
uniform distribution over $\{0,1\}^n$.
For other special cases of Problem \ref{problem:learning},
previous authors have applied standard low-dimensional representation
techniques, low-degree polynomials, random projection and Principal
Component Analysis (PCA) to identify $V$ under strong distributional
assumptions \cite{Baum90, KOS08, BK93, Vempala97,Vem10}. The strongest
result in this line achieves a fixed polynomial dependence on $n$ by
applying PCA to learn convex concepts over Gaussian input distributions
\cite{Vempala10}. Unfortunately, standard PCA does not work for other
distributions or more general concept classes, in part because PCA does not
provide useful information when the covariance matrices of the positive and
negative samples are equal.
In fact, the problem appears to be quite hard with no assumptions
on the input distribution, even for small values of $k$, e.g., a single
halfspace can be PAC-learned via linear programming, but learning an
intersection of two halfspaces (a $2$-subspace junta) in polynomial
time is an open problem.

There have been a number of extensions of PCA to tensors
\cite{Kolda09} analogous to SVD, although no method is known to have
polynomial complexity. One approach is to view PCA as an optimization
problem. The top eigenvector is the solution to a matrix optimization
problem:
\begin{align*}
  \max_{\norm{v}=1} v^T A v = \sum_{i_1,i_2} (A)_{i_1,i_2} v_{i_1}v_{i_2}
\end{align*}
where $A$ is the covariance matrix. A higher moment method optimizes
the \emph{multi}linear form defined by the tensors of higher moments:
\begin{align*}
  \max_{\norm{v}=1} A(v,\ldots,v) = \sum_{i_1, \ldots,i_r} A_{i_1,
    \ldots,i_r}v_{i_1}\ldots v_{i_r}.
\end{align*}
Unlike the bilinear case, finding the global maximum of a multilinear
form is hard. For $\alpha > 16/17$, it is NP-hard to approximate the
optimum to better than factor $\alpha^{\lfloor r/4 \rfloor}$
\cite{BruThesis}, and the best known approximation factor is roughly
$n^{r/2}$. Several local search methods have been proposed for this
problem as well\cite{kolda}. 

\subsection{Results}
To state our results formally, we need to define the distance of a
distribution from a Gaussian via moments. For a random vector
$x\in\R^{n}$ with distribution $F$, the $m^{th}$ moment tensor $M^m$
is a tensor of order $m$ with $n^{m}$ entries given by:
\begin{align*}
  M_{i_{1},\ldots,i_{m}}^{m}=\E{x_{i_{1}}\ldots x_{i_{m}}}.
\end{align*}
Let $\Gamma^n$ be the standard Gaussian distribution over $\R^n$ and
$\gamma_m$ denote the $m^{th}$ moment of a standard Gaussian random
variable: $\gamma_m = (m-1)!!$ when $m$ is even and 0 when $m$ is odd.

The {\em $m^{th}$-moment distance} of two distributions $F, G$ over
  $\R^n$ is defined as
\begin{align*}
  d_m(F,G) = \max_{\|u\|=1} \abs{ \EE{F}{ (\T{x} u)^m } - \EE{G}{
      (\T{x} u)^m } } = \norm{M^m_F - M^m_G}_2.
\end{align*}
We say that a distribution $F$ over $\R^k$ is $(m, \eta)$-{\em
  moment-distinguishable} along unit vector $u \in \R^k$, if either
there exists $j \le m$:
\begin{align*}
\abs{\EE{F}{(x^Tu)^j} - \gamma_j} \ge \eta
\end{align*}
or there exist unit vectors $\{v_1,\ldots,v_t\} \subset u^\perp $
where $t \le m$ such that
\begin{align*}
\abs{\EE{F}{(x^Tu)^{m-t}\Pi_{i=1}^t(x^T v_i )} -\EE{F}{(x^Tu)^{m-t}}\E{\Pi_{i=1}^t(x^T v_i )} } \ge \eta.
\end{align*}
In words, $F$ differs from a Gaussian either along some direction $u$,
or by exhibiting a correlation between its marginal along $u$ and vectors orthogonal to $u$ (for
a Gaussian such subsets have zero correlation). The rationale for this
definition is that if two continuous distributions are identical (or
close) in many moments, then one would expect them to be close in
$L^1$ distance. For example, the following holds for one-dimensional
logconcave distributions via an explicit bound on the number of
moments required.
\begin{lemma}[$L^1$ distance from
  Gaussian] \label{theorem:L1gaussian} Fix $m$ and $\eps > 0$. Let
  $f:\R \to \R $ be an isotropic logconcave density, whose first $m$
  moments satisfy $\abs{\EE{f}{x^m} - \gamma^m} < \eps$, then:
  \begin{align*}
  \norm{f-g}_1 \le \left( \frac{c }{m^{1/8}} + c' me^m
    \eps^2\right)^{1/2}\log m \le c\left( \frac{ \log m}{m^{1/16}} +
    \eps m e^m\right)
\end{align*}
\end{lemma}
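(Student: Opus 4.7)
The plan is to reduce the $L^1$ distance bound to a polynomial approximation problem, isolating the effect of the moment hypothesis. First, truncate to a bounded window $I = [-R, R]$ with $R$ of order $\log m$. Since $f$ is a one-dimensional isotropic logconcave density, it has exponential tails of the form $\int_{|x|>R} f(x)\,dx \le C e^{-cR}$, and the Gaussian density $g$ has even faster tail decay, so picking $R = \Theta(\log m)$ makes the tail contribution to $\|f-g\|_1$ polynomially small in $m$. Next, rewrite $\int_I |f-g| = \int_I (f-g)\,\phi$ with $\phi = \sgn{f-g}$, and approximate $\phi$ on $I$ by a polynomial $p$ of degree $m$, splitting $\int_I (f-g)\phi = \int_I (f-g)\,p + \int_I (f-g)(\phi-p)$.

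The first integral depends only on the first $m$ moments: if $p(x) = \sum_{j=0}^m a_j x^j$ then $\int_I (f-g)\,p$ is, up to negligible tail error, $\sum_j a_j(\EE{f}{x^j} - \gamma_j)$, which by hypothesis is at most $\eps \sum_j |a_j|$. Expressing $p$ in the scaled Chebyshev basis on $I$ with $\|p\|_{L^\infty(I)}=O(1)$, classical bounds on Chebyshev-to-monomial coefficients give $\sum_j |a_j| \le O((2R)^m)$, which produces the $\eps\cdot m e^m$ contribution after substituting $R=\Theta(\log m)$. The second integral is an approximation error: by Cauchy-Schwarz, $\bigl|\int_I (f-g)(\phi-p)\bigr| \le \|f-g\|_{L^2(I)}\cdot\|\phi-p\|_{L^2(I)}$. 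A Jackson-type theorem gives a best $L^2$ approximation of the bounded, bounded-variation function $\phi$ by degree-$m$ polynomials of error $O((R/m)^{1/2})$. The factor $\|f-g\|_{L^2(I)}$ can be controlled via $\|f-g\|_{L^2(I)}^2 \le \|f-g\|_{L^\infty}\,\|f-g\|_{L^1(I)} = O(\|f-g\|_1)$, turning the whole inequality into a self-referential one in $\|f-g\|_1$ that can be solved to yield the first term in the claim.

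Combining the two contributions and optimizing $R=\Theta(\log m)$ yields the stated bound, with the overall extra $\log m$ factor arising from the width of the truncation window. The main obstacle is the second integral: the discontinuity of $\sgn{f-g}$ rules out uniform approximation and forces us to work in $L^2$, while the inverse change of basis from Chebyshev back to monomials is the source of the unavoidable $e^m$ coefficient blowup that appears on the $\eps$ side. Squeezing out the specific $m^{-1/8}$ rate (rather than the more naive $m^{-1/4}$) requires the self-referential Cauchy-Schwarz bootstrap described above, and carefully balancing the polynomial degree, the window $R$, and the coefficient blowup is what produces the exact form stated in the lemma.
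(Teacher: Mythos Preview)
Your approach is genuinely different from the paper's, but it has a gap at the Jackson step.

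\textbf{What the paper does.} The paper works entirely on the Fourier side. By Parseval, $\norm{f-g}_2^2 = (2\pi)^{-1}\norm{\hat f - \hat g}_2^2$. It splits the frequency axis at $L=m^{1/8}$: for $|\xi|>L$ it uses the tail bound $|\hat f(\xi)|\le 2/|\xi|$ (proved for isotropic logconcave densities via an integration-by-parts argument), while for $|\xi|\le L$ it uses the Taylor expansion of the characteristic function, whose first $m$ terms are controlled by the moment hypothesis. This yields $\norm{f-g}_2^2 \lesssim m^{-1/8} + m e^m\eps^2$. A separate lemma converts $L^2$ to $L^1$ by truncating to $[-L,L]$ with $L=\Theta(\log m)$ and applying Cauchy--Schwarz, picking up the $\log m$ factor. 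The moment information enters cleanly because the characteristic function is literally the moment generating series.

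\textbf{Where your argument breaks.} Your duality step writes $\int_I |f-g| = \int_I (f-g)\,\phi$ with $\phi=\sgn{f-g}$ and then appeals to a Jackson-type bound to approximate $\phi$ in $L^2(I)$ by a degree-$m$ polynomial with error $O((R/m)^{1/2})$. That bound requires $\phi$ to have bounded variation, i.e.\ $f-g$ to have a bounded number of sign changes. But this is not true in general: $\sgn{f-g}$ changes sign at the zeros of $\log f + x^2/2 + \text{const}$, which is a sum of a concave function and a convex function and can oscillate arbitrarily often. One can build isotropic logconcave $f$ (e.g.\ with $\log f$ piecewise linear, chosen to interlace the parabola $-x^2/2$) so that $f-g$ has as many sign changes as desired. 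With $\phi$ merely bounded and measurable, there is no rate for polynomial $L^2$ approximation at all, so the second integral $\int_I (f-g)(\phi-p)$ is uncontrolled and the self-referential bootstrap collapses.

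\textbf{A secondary issue.} Your coefficient estimate $\sum_j|a_j|=O((2R)^m)$ with $R=\Theta(\log m)$ would give $(c\log m)^m$, which is $e^{m\log\log m}$, far larger than $me^m$. In fact the correct direction is the opposite: scaling to $[-R,R]$ with $R\ge 1$ makes monomial coefficients \emph{smaller}, and one gets $\sum_j|a_j|\le C^m$ for some absolute $C$ independent of $R$. So this part is salvageable, but not as written.

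The Fourier route in the paper sidesteps both problems: the characteristic-function expansion replaces your polynomial $p$ and directly encodes moments without any coefficient blowup accounting, and Parseval replaces your sign-function duality without needing any regularity of $f-g$.
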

We are now ready to state our first main result: 
we can efficiently factorize distributions assuming the
distribution on the relevant subspace is moment-distinguishable and
the distribution on the irrelevant noisy attributes is some
Gaussian. In what follows, it might be illustrative to regard $k$ as a
constant independent of $n$. Let $C_F(n,m,\eps)$ be the number of
samples needed to estimate each entry of the $m^{th}$ moment tensor of
$F$ to within additive error $\eps$ and $M$ be an upper bound on the $m^{th}$
moment along any direction.
\begin{theorem}[Factoring, Gaussian noise]\label{thm:factoring-gaussian}
  Let $F=F_V F_W$ be a distribution over $\R^n$ where $V$ is a
  subspace of dimension $k$, and $F_W=\Gamma^{n-k}$. Suppose that
  $F_V$ is $(m,\eta)$-moment-distinguishable for each unit vector $u
  \in V$. Then for any $\eps, \delta \ge 0$, in time $C_F(n,m,\eps)
  \poly(n,\eta,1/\eps, \log(1/\delta),M)$, Algorithm
  \textbf{FactorUnderGaussian} finds a subspace $U$ of dimension at
  most $k$ such that for $j \le m$, $d_j(F,F_UF_{U^\perp}) \le j(M+\gamma_j)\eps$
  with probability at least $1-\delta$. In addition, 
  for any vector in
  $u \in U$, $\|\pi_V(u)\| \ge 1-\eps$. 
\end{theorem}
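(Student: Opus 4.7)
The plan is to exploit the fact that $F_W$ is a standard Gaussian on $W=V^\perp$ to give the moment tensor of $F$ a very rigid structure, and then show that \textbf{local} optima of an appropriate moment functional on the unit sphere must lie essentially inside $V$. Concretely, I would first derive a clean decomposition: for any unit $u = u_V + u_W$ with $u_V\in V$, $u_W\in W$, independence of $F_V$ and $F_W$ together with the Gaussianity of $F_W$ gives
\begin{align*}
\EE{F}{(\T{x}u)^m} \;=\; \sum_{j=0}^m \binom{m}{j}\,\EE{F_V}{(\T{x_V}u_V)^j}\,\gamma_{m-j}\,\|u_W\|^{m-j}.
\end{align*}
The crucial point is that the ``excess'' over the standard Gaussian baseline, i.e.\ $\EE{F}{(\T{x}u)^m}-\gamma_m$, depends on $u_W$ only through $\|u_W\|$ times lower-order $V$-moments; all genuine non-Gaussian structure is carried by $u_V$. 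An analogous expansion holds for the mixed correlation quantities appearing in the second clause of moment-distinguishability.

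Second, I would use this decomposition to argue that if $u$ is a local optimum of a suitable functional constructed from moments up to order $m$ (the same object \textbf{FactorUnderGaussian} is optimizing), then $u_W$ must be small. The moment-distinguishability assumption guarantees that for any direction $u\in V$ there is a detectable signal of magnitude at least $\eta$ either in some marginal moment or in some cross-correlation with $u^\perp$. Taking any direction with a nontrivial $u_W$-component and rotating it towards $V$ changes the functional by a strictly monotone amount controlled by that $\eta$-signal, so a stationary point must satisfy $\|u_W\|\le\eps$, equivalently $\|\pi_V(u)\|\ge 1-\eps$. This is essentially a generalization of the Frieze--Jerrum--Kannan / Nguyen--Regev structural lemma (Lemma \ref{lemma:representation}) to subspaces, but with both the marginal-moment clause and the correlation clause of moment-distinguishability contributing.

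Third, I would invoke the gradient-based optimization machinery together with the approximate polynomial identity test mentioned in the introduction to extract one such local optimum $u$ efficiently. Sample complexity $C_F(n,m,\eps)$ estimates each entry of $M^m$ to accuracy $\eps$, so the functional we actually optimize is uniformly $\poly(n,M,\eps)$-close to the true one, and standard perturbation bounds carry the guarantee $\|\pi_V(u)\|\ge 1-\eps$ through finite-sample noise. Having found one direction in $V$, we project orthogonally to it and recurse on the remaining $(n-1)$-dimensional problem; the residual distribution is still factorizable with a Gaussian $W$-component, and the algorithm halts once no local optimum with detectable non-Gaussian signal remains. This terminates in at most $k$ rounds, producing a subspace $U\subseteq V$ (up to $\eps$-errors per vector) of dimension at most $k$.

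Finally, I would translate the subspace approximation into the moment-distance bound $d_j(F,F_UF_{U^\perp})\le j(M+\gamma_j)\eps$ for $j\le m$: since $U$ captures every direction along which $F$ deviates from a Gaussian $W$-marginal, the residual in $U^\perp$ is $\eps$-close (coordinate-wise in the moment tensor up to order $m$) to $\Gamma^{n-\dim U}$, and each perturbation of a coordinate of $u$ by $\eps$ changes $(\T{x}u)^j$ by at most $j(M+\gamma_j)\eps$ in expectation by a direct derivative bound. Probability amplification to $1-\delta$ is routine via repetition. The main obstacle is the second step: proving that local (not global) optima are forced into $V$ in the subspace regime, where the moment landscape has large flat directions coming from symmetries of the Gaussian $W$-marginal. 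This requires carefully choosing the functional so that its restriction to the $\|u_W\|$-axis is strictly monotone, and handling the correlation clause of $(m,\eta)$-moment-distinguishability, which has no direct analogue in the one-dimensional ICA analyses of \cite{Frieze96, nr06}.
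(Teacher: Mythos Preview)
Your high-level outline matches the paper's strategy, but there is a genuine gap in how you handle the recursion. You write that after finding one approximate direction $u_1$ and projecting orthogonally, ``the residual distribution is still factorizable with a Gaussian $W$-component.'' This is false once $u_1$ is only approximately in $V$. If $u_1 = \sqrt{1-\delta}\,v_1 + \sqrt{\delta}\,w_1$ with $v_1\in V$, $w_1\in W$, then $u_1^\perp$ contains $\hat u_1 = \sqrt{\delta}\,v_1 + \sqrt{1-\delta}\,w_1$, and the marginal of $F$ on $u_1^\perp$ genuinely mixes the $V$- and $W$-coordinates; it is no longer a product of a $V'$-part and a Gaussian $W'$-part. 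The paper devotes its heaviest technical work (Lemma~\ref{lemma:onestep} and Theorem~\ref{thm:approximate}) to precisely this issue, redoing the representation lemma and the local-optimum analysis in the perturbed basis and tracking how the contamination propagates. The upshot is that the error in the $j$-th recovered vector degrades to roughly $\eps_1^{(1/16)^j}$, so the initial accuracy $\eps_1$ must be chosen with $(\eps_1)^{(1/16)^k}\le\eps$. Your appeal to ``standard perturbation bounds'' does not supply this; without it the recursion does not close.

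There is also a smaller gap in step~2. Your monotonicity argument (``rotating towards $V$ changes the functional by a strictly monotone amount'') is only clean when the first $m-1$ moments already agree with the Gaussian, so that the representation collapses to $f_m(u)-\gamma_m = \|u_V\|^m\bigl(\EE{F_V}{(\T{x_V}u_V^0)^m}-\gamma_m\bigr)$. With your more general expansion, the excess is a polynomial in $(\|u_V\|,\|u_W\|)$ with several terms whose signs need not agree, and interior critical points on the arc are not ruled out. The paper avoids this by having \textbf{FindBasis} step through $j=2,3,\ldots$ and only invoke the local-optimum argument at the first $j$ where the moment deviates; at that point all lower moments match and Lemma~\ref{lemma:representation} applies. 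You should make this moment-by-moment iteration explicit rather than folding everything into a single ``suitable functional constructed from moments up to order $m$.''
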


Next we turn to learning. For a distribution $F$ and a
$k$-dimensional concept class $\HH$, we say that the triple
$(k,F,\HH)$ is \emph{ $(m,\eta)$-moment-learnable} if:
\begin{enumerate}
\item $F=F_V F_W$ is a factorizable distribution with $dim(V)=k$.
\item $\HH$ is a set of $k$-subspace juntas whose relevant subspaces
  are contained in $V$.
\item For $\ell \in \HH$ with minimal (with respect to dimension)
  relevant subspace $P \subseteq V$, for each unit vector $u \in P$,
  either $F_V$ or $F_V^+$ (the distribution over the positive samples)
  is $(m,\eta)$-moment distinguishable along $u$.
\end{enumerate}
In words, the third condition says that if $F_V$ resembles a Gaussian in its 
first $m$ moments along every direction, then $F_V^+$ does not. 
We will see examples of concept classes and distributions for which $m$ is bounded 
 under this definition. Indeed, we conjecture that a
concept class $\HH$ with bounded VC dimension $d$ is $(m,\eta)$
moment-learnable where $m$ depends only on $d$ and $\eta$.

To state our learning guarantee, we need one more definition: A triple
$(k,F,\HH)$ is called \emph{robust} if for any subspace $U$ of
dimension at most $k$ with orthonormal basis $\{u_i\}$ where
$\abs{u_i^T\pi_V(u_i)} \ge 1-\eps$, then $\ell(\pi_U(x))$ labels
correctly $1-g(\eps)$ fraction of $\R^n$ under $F$ where
$g(\eps)<\eps^{c}$ for constant $c>0$ and sufficiently small $\eps$.
The definition requires the distribution $F$ and labeling function
$\ell$ to be robust under small perturbations of the relevant
subspace. Once we identify the relevant subspace approximately, we can
project samples to it and use an algorithm that can learn $\ell$ in
spite of a $g(\eps)$ fraction of noisy labels.
\begin{theorem}[Learning, Gaussian noise]\label{thm:learning-gaussian-noise}
  Let $\eps,\delta >0$, let $\ell \in \HH$ where $(k,F,\HH)$ is $(m,
  \eta)$-moment-learnable and robust, and let $F_W=\Gamma^{n-k}$ be
  Gaussian. Suppose that we are given labeled examples from $F$, then
  Algorithm \textbf{LearnUnderGaussian} identifies a subspace $U$ and
  a hypothesis $h$ such that $h$ correctly classifies $1-\eps$ of $F$
  according to $\ell$ with probability at least $1-\delta$. The time
  and sample complexity of the algorithm are bounded by $T(k,\eps)+
  C_F(n,m,\eps) \poly(n,\eta,k,1/\eps, \log (1/\delta),M)$ where $T$
  is the complexity of learning the $k$-dimensional concept class
  $\HH$.
\end{theorem}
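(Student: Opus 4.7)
The plan is to reduce the high-dimensional PAC learning task to a low-dimensional one by first recovering a proxy for the relevant subspace via Theorem~\ref{thm:factoring-gaussian}, and then running the given $k$-dimensional learner for $\HH$ on samples projected into that proxy; the robustness assumption converts the residual subspace error into a small fraction of label noise that standard PAC techniques can absorb.

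Draw a labeled sample $S$ from $F$ and let $S^{+}=\{x\in S:\ell(x)=1\}$. Since $\ell$ depends only on $\pi_V(x)$ and $F=F_V F_W$, conditioning on $\ell=1$ modifies only the $V$-component, so $S^{+}$ is distributed as $F_V^{+}\cdot\Gamma^{n-k}$ and meets the hypotheses of Theorem~\ref{thm:factoring-gaussian}. Run \textbf{FactorUnderGaussian} on $S$ to obtain a subspace $U_1$, and on $S^{+}$ to obtain $U_2$. By the $(m,\eta)$-moment-learnable hypothesis, every unit vector of the minimal relevant subspace $P\subseteq V$ is moment-distinguishable for at least one of $F_V$ or $F_V^{+}$, so combining the two runs Theorem~\ref{thm:factoring-gaussian} guarantees that $P$ is $\eps$-contained in $U:=U_1+U_2$, with $\dim U\le 2k$ and $\norm{\pi_V(u)}\ge 1-\eps$ for every unit $u\in U$.

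Next, orthonormalize $U_1\cup U_2$ by Gram--Schmidt so that each basis vector $u_i$ still satisfies $\abs{u_i^{T}\pi_V(u_i)}\ge 1-O(\eps)$, and invoke the robustness assumption; this yields a concept $\tilde{\ell}\in\HH$ (expressed in $U$-coordinates) that agrees with $\ell$ on a $1-g(\eps)\ge 1-\eps^{c}$ fraction of $F$. Project all labeled samples to $U$ and feed them to the given $T(k,\eps)$-time learner for $\HH$, treating the $g(\eps)$ mismatch as agnostic label noise. Choosing the factoring accuracy so that $g(\eps)\le\eps/2$ and inflating the sample size by the standard PAC factors produces a hypothesis $h$ with $\prob{\ell(x)\ne h(\pi_U(x))}\le\eps$; a union bound over the two factoring runs and the final learner gives confidence $1-\delta$, and the total running time is $T(k,\eps)+C_F(n,m,\eps)\poly(n,\eta,k,1/\eps,\log(1/\delta),M)$ as claimed.

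The principal obstacle is confirming that $U_1+U_2$ covers every direction of $P$ to within $\eps$. Theorem~\ref{thm:factoring-gaussian} only certifies closeness of the reconstructed factorization in $d_j$-distance, so one must argue that the internal stopping condition of \textbf{FactorUnderGaussian} halts only when no direction remaining in the orthogonal complement exhibits an $(m,\eta)$-moment gap; applied once to $F$ and once to $F^{+}$, this ensures that every moment-distinguishable direction of $P$ is harvested into $U_1$ or $U_2$. Everything thereafter --- splitting by label, projecting into $U$, and running the supplied low-dimensional learner with mild label noise --- is a routine PAC reduction that leverages the orthogonal invariance of the Gaussian noise component $F_W=\Gamma^{n-k}$.
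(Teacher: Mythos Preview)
Your high-level strategy---recover a proxy for the relevant subspace by exploiting moment-distinguishability of both $F$ and $F^{+}$, then project and run the $k$-dimensional learner---matches the paper's. The implementation differs in a way that matters: the paper's \textbf{LearnUnderGaussian} does not call \textbf{FactorUnderGaussian} twice and take the union. It runs \textbf{FindBasis}$(F)$, then \textbf{FindBasis}$(F^{+})$ on the orthogonal complement of the first output, and then \emph{alternates} \textbf{ExtendBasis} between $F$ and $F^{+}$ with a single shared growing orthonormal set, so that the basis is built incrementally with $\abs{U}\le k$ throughout and no post-hoc merge is needed.

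Your independent-runs-then-union scheme has a genuine gap at the Gram--Schmidt step. Even if every vector in $U_1\cup U_2$ satisfies $\norm{\pi_V(u)}\ge 1-\eps$, orthonormalizing the union can produce basis vectors far from $V$: if $u\in U_1$ and $u'\in U_2$ are each $\eps$-close to $V$ but $\norm{u-u'}=O(\eps)$, the Gram--Schmidt residual $(u'-\langle u',u\rangle u)/\norm{\cdot}$ may have $V$-component of order $O(1)$ away from $1$, so your assertion that ``each basis vector $u_i$ still satisfies $\abs{u_i^{T}\pi_V(u_i)}\ge 1-O(\eps)$'' is unjustified. Since the robustness hypothesis is stated only for $\dim U\le k$ and for bases with $\abs{u_i^{T}\pi_V(u_i)}\ge 1-\eps$, it need not apply to your $U$. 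The paper sidesteps this entirely: each new vector is produced by \textbf{LocalOpt} in the orthogonal complement of the current set, so orthonormality and closeness to $V$ are maintained together, and $\abs{U}\le k$ automatically. A related point you flag but do not resolve: Theorem~\ref{thm:factoring-gaussian} requires $F_V$ to be moment-distinguishable along \emph{every} $u\in V$, which neither $F_V$ nor $F_V^{+}$ alone satisfies; the paper's alternation of \textbf{ExtendBasis} is exactly what allows the correlation clause of distinguishability for $F$ to be witnessed using probe vectors discovered from $F^{+}$ (and vice versa), whereas your separate runs give only $P\cap(U_1+U_2)^{\perp}=\{0\}$, not $P\subseteq U_1+U_2$.
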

We note here that for a concept class of VC-dimension $d$, a standard
reduction implies that the complexity of learning with $\eps$
arbitrary noise is at most $(2/\eps)^{O(d\log (1/\eps))}$ times the
complexity of learning with no noise (Proposition
\ref{proposition:vc}).
Our algorithms run in polynomial-time in $n$ provided $(k,F,\HH)$
satisfy the moment-learnable condition. Some special cases of this
result were previously known, e.g., when $F$ is a Gaussian and $\HH$
is a convex concept class \cite{KOS08, Vempala10}.
The application of PCA to learning convex bodies
in \cite{Vempala10} can be viewed as the assertion that convex concepts in $\R^k$
are moment-learnable: under a Gaussian distribution, the positive
distribution $F^+$ has variance less than 1 along any direction.
The following two examples further illustrate Theorem \ref{thm:learning-gaussian-noise}.
\begin{itemize}
\item When the full distribution in the
  relevant subspace is uniform in an ellipsoid, then robust concept
  classes can be learned in time $T(k,\eps) + C_{k,\eps}\cdot n^{2}$. Here
  $T$ depends on the $k$ and concept class, and $C$ is a constant
  fixed by $k$ and $\eps$ and independent of the concept class. Thus
  we can learn general concept classes beyond convex bodies and
  low-degree polynomials for uniform distributions over a ball in the
  relevant subspace.
\item When the distribution on the positive examples $F^+$ has bounded support,
i.e., the positive
  labels lie in a ball of radius $r(k)$, such robust concepts can be
  learned in time $T(k,\eps)+C_{k,\eps}\cdot n^{O(r(k)^2)}$ for an arbitrary
  distribution in the relevant subspace. Previously, for logconcave $F$,
  learning an intersection of $k$ half-spaces was known to
  have complexity growing as $n^{O(k)}$ \cite{Vem10, KLT09}.
\end{itemize}

\subsection{Techniques}
Our strategy is to identify the relevant subspace $V$ is to examine
higher moments of the distribution.


As mentioned earlier, our approach is inspired by viewing PCA as
finding the global maxima and minima of the bilinear form defined by
the covariance matrix. Instead of trying to compute \emph{global}
optima of the multilinear form, we use \emph{local} optima. These
local optima turn out to be highly structured. The use of local optima
can be viewed as an effective realization of higher-order PCA that
leads to efficient algorithms. Previous algorithmic problems have all
required the use of global optima --- for example, the planted clique
algorithms of \cite{FK08, Bru09}. We prove that local optima of the
$m^{th}$ moment $f_m(u)=\E{(x^{T}u)^{m}}$ must lie entirely in $V$ or
its complement $W$ (Lemma \ref{lemma:support}) unless its first $m$
moments are identical to those of a Gaussian.

To make these ideas algorithmic, we use a local search method that
increases the function value by performing first-order moves along the
gradient and then second-order moves in the direction of the top
eigenvector of the Hessian matrix. These second-order moves allow us
to avoid saddle points and other critical points which arise in higher
dimensions. Saddle points have a gradient of zero and look like maxima
in some directions and minima in others. While searching for a local
maximum, one could end up in a saddle point. The top eigenvector of
the Hessian shows directions of greatest quadratic increase, and hence
will move us from the saddle point to a true local maximum.

Another component in our algorithms is an approximate version of the
well-known Schwartz-Zippel polynomial identity test. Observing that
$f_m(u)$ is a polynomial of degree $m$ in the variables
$u_1,\ldots,u_m$, in principle we can test ( whether $f_m$ is a
constant function by evaluating $f_m$ at random points. We use a
robust version of this test (Lemma \ref{lemma:sz}) derived via a
result of Carbery and Wright \cite{Carbery}.


\section{Structure of local optima}\label{sec:structure}
We derive a representation of $f_m(u) = \E{(x^tu)^m}$ in Lemma
\ref{lemma:representation}. Using this representation, we show in
Lemma \ref{lemma:support} that each local optimum lies in $V$ or $W$
exclusively. Finding a sequence of orthogonal local optima will
give us basis vectors for the relevant subspace.

For convenience we often use $u_{V}=\pi_V(u)$ for the projection of
$u$ onto $V$, $u_{W}$ for the projection onto the orthogonal subspace
$W$, and $u^{0}$ for the unit vector in the direction of $u$.

We may assume that $\E{x}=0$: if otherwise, then we can apply a
translation $x-\E{x}$.
\ifthenelse{\boolean{longversion}}{
\begin{lemma}[Translation of product distributions]\label{lemma:translation}
  Let $x\in \R^n$ be a random vector drawn from $F=F_VF_W$, a product
  distribution.Then $x-\E{x}$ has a product distribution over $V$ and
  $W$.
\end{lemma}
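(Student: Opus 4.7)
The plan is to use the orthogonal decomposition together with the fact that deterministic functions of independent random variables remain independent. Write $x = x_V + x_W$ where $x_V = \pi_V(x)$ and $x_W = \pi_W(x)$; by the definition of a product distribution on $V \oplus W$, the random vectors $x_V$ and $x_W$ are independent, with marginals $F_V$ and $F_W$ respectively.

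First I would observe that since $V \perp W$, the expectation splits as $\E{x} = \E{x_V} + \E{x_W}$ with $\pi_V(\E{x}) = \E{x_V} \in V$ and $\pi_W(\E{x}) = \E{x_W} \in W$. Letting $y = x - \E{x}$, linearity of the projections gives
\begin{align*}
\pi_V(y) = x_V - \E{x_V}, \qquad \pi_W(y) = x_W - \E{x_W}.
\end{align*}
Each of these is a (deterministic) translation of $x_V$ and $x_W$ respectively by a fixed vector.

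Next I would invoke the standard fact that measurable functions of independent random variables are independent: since $x_V \perp\!\!\!\perp x_W$, the translated versions $\pi_V(y)$ and $\pi_W(y)$ are also independent. Therefore the distribution of $y$ factors as the product of the distribution of $\pi_V(y)$ (a translate of $F_V$ inside $V$) and the distribution of $\pi_W(y)$ (a translate of $F_W$ inside $W$), which is exactly what the lemma asserts.

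There is no real obstacle here — the statement is essentially a restatement of the fact that translation by a constant vector respects the orthogonal direct-sum structure and preserves independence of the marginals. The only point worth being slightly careful about is ensuring that $\E{x_V}$ lies entirely in $V$ (and similarly for $W$), which follows from linearity of expectation together with the linearity of $\pi_V$; after that, the conclusion is immediate.
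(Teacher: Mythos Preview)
Your proposal is correct and essentially the same as the paper's proof. The paper verifies independence of the translated marginals directly by computing $\prob{y_V \in B_1 \land y_W \in B_2}$ for Borel sets $B_1,B_2$ and factoring it, whereas you phrase the same step as an appeal to the standard fact that deterministic (measurable) functions of independent random variables are independent; both arguments rest on decomposing the translation vector along $V$ and $W$ and observing that each projected component is just a fixed shift of the original marginal.
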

\begin{proof}[Proof of Lemma \ref{lemma:translation}]
Take our translation  $y=T_a(x)=x+a$, for Borel sets $B_1$ and $B_2$:
\begin{align*}
  \prob{y_V \in B_1 \land y_W \in B_2} & = \prob{x_V + a_V \in B_1 \land
    x_W + a_W \in B_2} \\
  & = \prob{x_V \in B_1 - a_V \land x_W
    \in B_2 - a_W} \\
  & = \prob{x_V \in B_1 - a_V}\prob{x_W \in  B_2 - a_W} \\
  & = \prob{y_V \in B_1} \prob{y_W \in B_2}
\end{align*}
\end{proof}
} 
We can combine this with a linear transformation to obtain an
isotropic distribution, given by $y=\Sigma^{-1/2}(x-\mu)$ where $\mu$
is the expectation vector. This simplifies subsequent calculations
because the covariance matrix for $y$ is $I_n$. The following lemma, 
inspired by \cite{Frieze96,dl95,sw90}, 
provides the main insight for the structural theorem. 
\begin{lemma}[Representation of $f_{m}$]\label{lemma:representation}
  Let $F=F_VF_W$. Suppose that $x$ has the same $j^{th}$ moments as a
  Gaussian for all integers $j<m$, then for $u \in \sph^{n-1}$:
  \begin{align}
    f_{m}(u)=\norm{u_{V}}^{m}\left(\E{(\T{x_{V}}u_{V}^{0})^{m}}-\gamma_m\right)+
    \norm{u_{W}}^{m}\left(\E{(\T{x_{W}}u_{W}^{0})^{m}}-\gamma_m\right)+\gamma_m
  \end{align}
\end{lemma}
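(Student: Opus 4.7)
The plan is to expand $f_m(u)$ using the orthogonal decomposition $u = u_V + u_W$, $x = x_V + x_W$, exploit the independence of $x_V$ and $x_W$ across the factorization $F = F_V F_W$, and then use the Gaussian moment hypothesis to eliminate the middle terms of the binomial expansion.

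First I would write $x^{T}u = x_V^{T} u_V + x_W^{T} u_W$ (the cross terms vanish because $V \perp W$) and expand
\begin{align*}
f_m(u) = \E{(x_V^{T} u_V + x_W^{T} u_W)^m} = \sum_{j=0}^{m}\binom{m}{j}\E{(x_V^{T} u_V)^j}\,\E{(x_W^{T} u_W)^{m-j}},
\end{align*}
using independence of $x_V, x_W$ under $F = F_V F_W$. For the hypothesis, note that for $u_V \in V$ we have $x^{T} u_V = x_V^{T} u_V$, and similarly on $W$; so the assumption that $F$ matches the Gaussian in the first $m-1$ moments implies
\begin{align*}
\E{(x_V^{T} u_V)^j} = \gamma_j \norm{u_V}^j, \qquad \E{(x_W^{T} u_W)^j} = \gamma_j \norm{u_W}^j
\end{align*}
for all $j < m$. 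The endpoint terms $j=0$ and $j=m$ give exactly $\norm{u_W}^m \E{(x_W^{T} u_W^0)^m}$ and $\norm{u_V}^m \E{(x_V^{T} u_V^0)^m}$, which are the expressions that appear in the lemma.

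The only remaining work is to recognize the middle-sum identity
\begin{align*}
\sum_{j=0}^{m}\binom{m}{j}\gamma_j \gamma_{m-j}\,\norm{u_V}^j\norm{u_W}^{m-j} = \gamma_m.
\end{align*}
The conceptual reason is that this sum equals $\E{(g^{T}u)^m}$ for a standard Gaussian $g = g_V + g_W$ on $\R^n$, which equals $\gamma_m$ since $\norm{u}=1$. Subtracting off the $j=0$ and $j=m$ Gaussian contributions ($\gamma_m \norm{u_V}^m$ and $\gamma_m \norm{u_W}^m$ respectively, using $\gamma_0 = 1$) gives exactly the correction needed to rearrange the expansion into the stated closed form.

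No step is really a bottleneck; the main thing to get right is the bookkeeping and the observation that the Gaussian-moment assumption lets one evaluate \emph{both} factors $\E{(x_V^{T}u_V)^j}$ and $\E{(x_W^{T}u_W)^{m-j}}$ in closed form whenever $0 < j < m$, since each of $j$ and $m-j$ is strictly less than $m$. The pure endpoints $j=0, m$ are the only places where the distribution is allowed to differ from a Gaussian, and they produce precisely the two bracketed discrepancy terms in the lemma.
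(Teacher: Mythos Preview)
Your proposal is correct and follows essentially the same route as the paper: binomial expansion of $(x_V^{T}u_V + x_W^{T}u_W)^m$, independence to split the cross moments, the Gaussian-moment hypothesis on the middle terms, and the identity $\sum_j \binom{m}{j}\gamma_j\gamma_{m-j}\norm{u_V}^j\norm{u_W}^{m-j}=\gamma_m$ (justified identically, via the $m$-th moment of a sum of two independent Gaussians). The only difference is cosmetic: the paper splits into the cases $m$ odd (where each middle term vanishes because one factor is an odd Gaussian moment) and $m$ even (where the full identity is invoked), whereas you handle both parities at once by observing that the Gaussian identity holds regardless and then subtracting off the $j=0,m$ contributions; your unified bookkeeping is slightly cleaner but not a different argument.
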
 
\begin{proof}[Proof of Lemma \ref{lemma:representation}]
  Consider the case when $m$ is odd:
\begin{align*}
  f_m(u) & = \E{(x^Tu)^m}\\
  & =\E{\left(\T{(x_{V}+x_{W})} (u_{V}+u_{W})\right)^{m}}\\
  & = \E{(x_V^T u_V)^m}+\E{(x_W^T u_W)^m}+\sum_{i=1}^{m-1}\binom{m}{i}
  \E{(x_{V}^{T}u_{V})^{i}(x_{W}^{T}u_{W})^{m-i}} \\
  & = \E{(x_V^T u_V)^m}+\E{(x_W^T u_W)^m}+\sum_{i=1}^{m-1}\binom{m}{i}
  \E{(x_{V}^{T}u_{V})^{i}}\E{(x_{W}^{T}u_{W})^{m-i}}
\end{align*}
The last line follows by applying the independence of random variables
which depend only on the $V$ and $W$ subspaces. Each term in the last
sum contains an odd moment of a Gaussian, hence:
\begin{align}
  f_{m}(u)=\norm{u_{V}}^{m}\E{(\T{x_{V}}u_{V}^{0})^{m}}+
  \norm{u_{W}}^{m}\E{(\T{x_{W}}u_{W}^{0})^{m}}.
\end{align}
When $m$ is even, we need the following formula:
\begin{align*}
  \sum_{i=0}^{m}\binom{m}{i}\norm{u_{V}}^{i}\norm{u_{W}}^{m-i}
  \gamma_i \gamma_{m-i}=\gamma_m
\end{align*}
This follows from $\E{(aX+bY)^m}=\gamma_m$ where $a^2+b^2=1$ and
$X$ and $Y$ are independent standard normal variables:
\begin{align*}
  f_{m}(u) 
  &=  \sum_{i=0}^{m}\binom{m}{i}\norm{u_{V}}^{i}\norm{u_{W}}^{m-i}
  \E{(x_{V}^{T}u_{V}^{0})^{i}}\E{(x_{W}^{T}u_{W}^{0})^{m-i}} \\
  &= \norm{u_V}^m \E{ (\T{x} u_V^0 )^m} + \norm{u_W}^m \E{ (\T{x} u_V^0 )^m}
+ \sum_{i=1}^{m-1}\binom{m}{i}\norm{u_{V}}^{i} \norm{u_{W}}^{m-i}
  \gamma_{i} \gamma_{m-i} \\
  &= \norm{u_V}^m \left( \E{ (\T{x} u_V^0 )^m} - \gamma_m \right) 
  + \norm{u_W}^m \left( \E{ (\T{x} u_V^0 )^m} - \gamma_m \right)
+ \sum_{i=0}^{m}\binom{m}{i}\norm{u_{V}}^{i} \norm{u_{W}}^{m-i}
  \gamma_i \gamma_{m-i} \\
  &= \norm{u_V}^m \left( \E{ (\T{x} u_V^0 )^m} - \gamma_m \right) 
  + \norm{u_W}^m \left( \E{ (\T{x} u_V^0 )^m} - \gamma_m \right)
+ \gamma_m
\end{align*}
\end{proof} 
Using this representation, we can characterize all local optima of
$f_m$.
\begin{lemma}[Support]\label{lemma:support}
  Let distribution $F=F_VF_W$ have the same first $m-1$ moments as a
  Gaussian but a different $m^{th}$ moment. Then for a local maximum
  (local minimum) $u^\ast$ of $f_{m}$ restricted to the unit sphere,
  where $f_m(u^\ast) > \gamma_m$ ($f_m(u^\ast) < \gamma_m$), either
  $\norm{u_{V}^\ast}=1$ or $\norm{u_{W}^\ast}=1$.
\end{lemma}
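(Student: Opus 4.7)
The plan is to reduce the claim, via Lemma~\ref{lemma:representation}, to a one-variable analysis along the arc that redistributes mass between the $V$- and $W$-components of $u^\ast$. Suppose for contradiction that $u^\ast$ is a local maximum of $f_m|_{\sph^{n-1}}$ with $f_m(u^\ast) > \gamma_m$, yet $\alpha := \norm{u_V^\ast} \in (0,1)$, so also $\beta := \norm{u_W^\ast} \in (0,1)$. Let $\hat v = u_V^\ast / \alpha \in V$ and $\hat w = u_W^\ast / \beta \in W$, and set
\begin{align*}
A = \E{(\T{x_V}\hat v)^m} - \gamma_m, \qquad B = \E{(\T{x_W}\hat w)^m} - \gamma_m,
\end{align*}
so that by Lemma~\ref{lemma:representation} we have $f_m(u^\ast) - \gamma_m = \alpha^m A + \beta^m B$. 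Now consider the smooth curve on the sphere that freezes the two unit directions and only rotates their weighting:
\begin{align*}
u(\theta) = \cos\theta \, \hat v + \sin\theta \, \hat w, \qquad u(\theta^\ast) = u^\ast \text{ with } \cos\theta^\ast = \alpha.
\end{align*}
Lemma~\ref{lemma:representation} applied along this curve gives the clean univariate function $h(\theta) := f_m(u(\theta)) - \gamma_m = A\cos^m\theta + B\sin^m\theta$.

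Next I would extract the first-order information. Because $u^\ast$ is a local maximum on $\sph^{n-1}$ and $\theta \mapsto u(\theta)$ is a smooth spherical curve through $u^\ast$, we must have $h'(\theta^\ast) = 0$. Direct differentiation gives
\begin{align*}
h'(\theta) = m\sin\theta\cos\theta\bigl(B\sin^{m-2}\theta - A\cos^{m-2}\theta\bigr),
\end{align*}
and since $\sin\theta^\ast, \cos\theta^\ast \neq 0$ this forces $A\cos^{m-2}\theta^\ast = B\sin^{m-2}\theta^\ast$; call this common value $C$. Multiplying the two sides by $\cos^2\theta^\ast$ and $\sin^2\theta^\ast$ respectively and summing yields the key bookkeeping identity
\begin{align*}
f_m(u^\ast) - \gamma_m \;=\; A\alpha^m + B\beta^m \;=\; C(\alpha^2 + \beta^2) \;=\; C,
\end{align*}
so the hypothesis $f_m(u^\ast) > \gamma_m$ gives $C > 0$.

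The main step, and the one that requires the most care, is the second-order computation. Differentiating $h'$ once more and substituting $A\cos^{m-2}\theta^\ast = B\sin^{m-2}\theta^\ast = C$ to eliminate the cross-terms, the terms involving $(m-1)$ collapse via $\sin^2\theta^\ast + \cos^2\theta^\ast = 1$, and I expect the identity
\begin{align*}
h''(\theta^\ast) \;=\; m(m-1)\bigl(C\sin^2\theta^\ast + C\cos^2\theta^\ast\bigr) - m\bigl(A\cos^m\theta^\ast + B\sin^m\theta^\ast\bigr) \;=\; m(m-2)\,C.
\end{align*}
The second-derivative test on the smooth spherical curve $u(\theta)$ requires $h''(\theta^\ast) \le 0$ at a local maximum. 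For $m \ge 3$ (which is forced, since the hypothesis that the first $m{-}1$ moments match the Gaussian together with the isotropic normalization adopted at the start of the section rules out $m = 2$), $m(m-2) > 0$, and combining with $C > 0$ yields $h''(\theta^\ast) > 0$, a contradiction. Hence $\alpha \in \{0,1\}$, which is exactly the conclusion $\norm{u_V^\ast} = 1$ or $\norm{u_W^\ast} = 1$. The local-minimum case is handled by the identical computation with all signs flipped: $f_m(u^\ast) < \gamma_m$ forces $C < 0$ and hence $h''(\theta^\ast) < 0$, contradicting minimality. The only subtle point I foresee is confirming that probing along this single one-parameter family is enough; but any smooth spherical curve through a local max must satisfy $h'' \le 0$, and we have already found one that violates this whenever both components of $u^\ast$ are nonzero.
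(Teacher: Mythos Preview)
Your proof is correct and follows essentially the same strategy as the paper: restrict $f_m$ to the one-parameter arc $\{s\,\hat v + t\,\hat w : s^2+t^2=1\}$ joining the $V$- and $W$-directions of $u^\ast$, and show there is no interior local maximum. The only difference is cosmetic---you parametrize by the angle $\theta$ and differentiate directly, whereas the paper uses Lagrange multipliers in the $(s,t)$ coordinates; your route has the minor bonus that the second-derivative identity $h''(\theta^\ast)=m(m-2)C$ handles $m=3$ uniformly, while the paper treats $m=3$ as a side case.
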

\ifthenelse{\boolean{longversion}}{ 
\begin{proof}[Proof of Lemma \ref{lemma:support}]
  Consider the curve $C=\{s (u^\ast_V)^0 + t (u^\ast_W)^0: s^2+t^2=1, s\ge 0,t \ge
  0\}$. The point $u^\ast$ lies on $C$: thus if $u^\ast$ is a local
  maximum in full space, it had better be a local maximum on $C$. On
  the other hand, we will show that there are no local maxima interior
  to $C$, whence we must have $\norm{u_{V}^\ast}=1$ or
  $\norm{u_{W}^\ast}=1$. 

  Let us denote $a_v = \left( \E{ (\T{x} u_V^0 )^m} - \gamma_m
  \right)$ and $a_w = \left( \E{ (\T{x} u_w^0 )^m} - \gamma_m
  \right)$. By the assumption that $f_m(u^\ast) > \gamma_m$, we know
  that least one of $a_v$ or $a_w$ is positive.  Suppose that $s \ne
  0$ and $s \ne 1$: we form the associated Lagrangian with positive
  real multiplier $\lambda$:
  \begin{align*}
    \Lag = a_v s^m + a_w t^m + \gamma_m - \lambda ( s^2 + t^2 - 1)
\end{align*}
At every critical point in the interior of $C$, we must have $D \Lag = 0$:
\begin{align*}
\left(
\begin{array}{l}
m a_v s^{m-1} - 2 \lambda s \\
m a_w t^{m-1} - 2 \lambda t
\end{array}
\right) = 0
\end{align*}
If we consider only the interior critical points where $s,t > 0$, then
both $a_v > 0$ and $a_w > 0$ (otherwise we would have $\lambda > 0$
and $\lambda \le 0$). There is only one solution:
\begin{align*}
s =  \frac{a_w^{1/(m-2)}}{\sqrt{a_v^{2/(m-2)}+a_w^{2/(m-2)}}}  \qquad
t =  \frac{a_v^{1/(m-2)}}{\sqrt{a_v^{2/(m-2)}+a_w^{2/(m-2)}}} \qquad
\lambda = (m/2) (a_v s^m + a_w t^m)
\end{align*}
If we now consider the Hessian on the tangent plane orthogonal to the
gradient of the constraint (equivalent to considering the bordered
Hessian) , we see that it is positive definite for $m>3$ (when $m=3$,
differentiating $a_v s^3 + a_w (1-s^2)^{1.5}$ twice at the critical
point gives a positive value):
\begin{align*}
D^2 \Lag = \left( 
  \begin{array}{ll} 
    m(m-1) a_v s^{m-2} - 2 \lambda& 0 \\
    0 & m(m-1) a_w t^{m-2} - 2 \lambda \\
  \end{array}
\right) = \frac{m(m-3) a_v a_w}{\left[a_v^{2/(m-2)}+a_w^{2/(m-2)}\right]^{(m-2)/2}} I
> 0
\end{align*}
In particular, there are no local maxima interior to $C$, that is,
$\norm{u_V^\ast}=1$ or $\norm{u_V^\ast}=0$.
\end{proof}

}{}

\section{Finding a basis}\label{sec:basis}
Our two basic algorithms exploit the property that local
optimum to $f_m(u)=\E{(x^Tu)^m}$ on the unit sphere must lie in either
$V$ or $W$ (Lemma \ref{lemma:support}). 
In this section, we assume that the algorithms have access
to exact moment tensors and can compute exact local optima. 
We provide
efficient algorithms (with error analysis) in Section
\ref{sec:gaussian}. 

The basic idea of the algorithm is to start with a random direction
and evaluate the $j$'th moment in that direction. If it is different
from a Gaussian we go to a local max or local min (whichever keeps it
different from a Gaussian) and thus find a vector of interest; if many
random unit vectors have Gaussian moments, then all directions have
Gaussian moments due to the Schwartz-Zippel Lemma and we go to the
next higher moment. At the end of the algorithm we have a subset of an
orthogonal basis consistent with $V$ and $W$, and the property that
all orthogonal directions have Gaussian moments.

\begin{lemma}[Schwartz-Zippel\cite{Schwartz}]\label{lemma:schwartz}
  Let $P \in F[x_1,\ldots,x_n]$ be a nonzero polynomial of degree $d
n  \ge 0$ over field $F$. Let $S$ be a finite subset of $F$ and let
  $r_1,\ldots,r_n$ be selected randomly from $S$. Then:
  \begin{align*}
  \prob{P(r_1,\ldots,r_n)=0}\leq\frac{d}{|S|}.
  \end{align*}
\end{lemma}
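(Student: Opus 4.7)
The plan is to proceed by induction on the number of variables $n$, reducing the multivariate case to the familiar univariate fact that a nonzero degree-$d$ polynomial over a field has at most $d$ roots.

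For the base case $n=1$, the polynomial $P(x_1)$ has degree $d$ and is nonzero, so it has at most $d$ roots in $F$, hence at most $d$ roots in $S$. A uniform random $r_1 \in S$ hits one of these with probability at most $d/|S|$.

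For the inductive step, I would write $P$ as a polynomial in $x_1$ with coefficients in $F[x_2,\ldots,x_n]$:
\begin{align*}
P(x_1,\ldots,x_n) = \sum_{i=0}^{k} x_1^{i}\, Q_i(x_2,\ldots,x_n),
\end{align*}
where $k \le d$ is the largest index with $Q_k \not\equiv 0$. Then $Q_k$ is a nonzero polynomial of degree at most $d-k$ in $n-1$ variables, so by the inductive hypothesis $\prob{Q_k(r_2,\ldots,r_n)=0} \le (d-k)/|S|$. Conditional on the event $Q_k(r_2,\ldots,r_n) \ne 0$, the specialized polynomial $P(x_1, r_2, \ldots, r_n)$ is a nonzero univariate polynomial of degree $k$ in $x_1$, so by the base case $\prob{P(r_1,\ldots,r_n)=0 \mid Q_k(r_2,\ldots,r_n) \ne 0} \le k/|S|$.

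Combining via a union bound on the event ``$P$ vanishes'' versus ``$Q_k$ vanishes'' yields
\begin{align*}
\prob{P(r_1,\ldots,r_n) = 0} \le \frac{d-k}{|S|} + \frac{k}{|S|} = \frac{d}{|S|},
\end{align*}
completing the induction. There is no real obstacle here: the only care needed is to pick the ``leading variable'' correctly (any variable that actually appears works, and choosing the highest power of $x_1$ ensures the conditional univariate polynomial is of exact degree $k$ and hence nonzero).
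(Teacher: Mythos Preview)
Your proof is correct and is the standard induction argument for the Schwartz--Zippel lemma. The paper does not actually prove this statement: it is quoted as a known result with a citation to \cite{Schwartz}, so there is no ``paper's own proof'' to compare against. Your write-up matches the classical proof one finds in the original references.
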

 
\ifthenelse{\boolean{pictures}}{
\begin{algorithm}
  \caption{FindBasis}
  \begin{algorithmic}[1]
    \REQUIRE{Moment bound $m$, Distribution $F$}
    \STATE Orthonormal vectors $B\leftarrow\phi$, moment tensor $j \leftarrow2$.
    \WHILE{$\abs{B}<n$ and $ j \le m$}

    \STATE Compute the $j^{th}$ moment tensor $M_j^B$ orthogonal to $B$, so that for any $v \in B^\perp$, 
 $f_{j}(v)=\E{(\T{x}v)^{j}} = M_j^B(v,\ldots,v)$.
    
    \IF{$f_j(v/\norm{v}) \equiv \gamma_j $}

    \STATE $j\leftarrow j+1$

    \ELSE
    \IF{$f_{j}(v)>\gamma_j$ for some $v$} 

    \STATE Compute a local maximum $u^{\ast}$ to $f_{j}$ starting from $v$.

    \ELSE 

    \STATE Compute a local minimum $u^{\ast}$ to $f_{j}$ starting from $v$.
    \ENDIF
    \STATE $B\leftarrow B\cup\{u^{\ast}\}$.
    \ENDIF

   \ENDWHILE
   \RETURN $B$
  \end{algorithmic}
\end{algorithm}
}{}

For Line 3, let $A:\R^n \to \R^n$ denote the linear map that projects
orthogonal to $B$. Then
  \begin{align*}
    M(Au,\ldots,Au) & = \sum_{i_1,\ldots,i_m} M_{i_1,\ldots,i_m}
    (Au)_{i_1}\cdots (Au)_{i_m} & = \sum_{j_1,\ldots,j_m} \left(
      \sum_{i_1,\ldots,i_m} M_{i_1,\ldots,i_m} A_{i_1,j_1}\cdots
      A_{i_m,j_m}\right) u_{j_1} \cdots u_{j_m}
  \end{align*}
  The identity check in Line 4 is performed by selecting a random
  vector $x$ with i.i.d. uniform coordinates from $\{1,\ldots,2m\}$
  and evaluating the polynomial $f_j(x/\|x\|) -\gamma_j$. Repeating
  $O(\log(n/\delta))$ times gives a $1-\delta$ probability of success.

  Algorithm \textbf{FindBasis} does not suffice on its own. Although
  every direction orthogonal to $B$ looks Gaussian up to the $m$'th
  moment, it is possible that some directions are correlated with
  vectors in $B$. The next procedure identifies such directions.

\ifthenelse{\boolean{pictures}}{
\begin{algorithm}
  \caption{ExtendBasis}
  \begin{algorithmic}[1]
    \REQUIRE Moment bound $m$, distribution $F$, candidate vectors
    $S$ and non-Gaussian directions $T$.

    \STATE $S' \leftarrow S$, $j \leftarrow 2$.

    \WHILE{$\abs{S'} < k$ and $j \le m$}

    \FOR{each choice (with repetition) $\{v_1,\ldots,v_l\} \subseteq S'$
     where $1 \le l < j$. }

   \STATE Compute the $(j-l)$ tensor $M^{S',T}_{l,j}$ so that for any $u \in (S' \cap T)^\perp$,
\[
g(u) = \E{(\T{x}u)^{j-l}\prod
     (\T{x}v_t)} - \E{(\T{x}u)^{j-l}}\E{\prod (\T{x}v_t)} = M^{S',T}_{l,j}(u,..,u,v_1,\ldots,v_l). 
\]
  
   \IF{$g(u) \equiv 0$}

   \STATE Continue with next choice of $\{ v_i \}$.

   \ELSE

   \IF{$g(u) > 0$ for any $u$}
   \STATE Compute a local maximum $u^\ast$ 
   to $g$ starting with $u/\norm{u}$.
   \ELSE
   \STATE Compute a local minimum $u^\ast$ 
   to $g$ starting with $u/\norm{u}$.
   \ENDIF

   \STATE $S' \leftarrow S' \cup \{u^\ast\}$ and restart the while loop
   with $j=3$.

   \ENDIF
   
   \ENDFOR
   \STATE $ j \leftarrow j+1$.
   \ENDWHILE

   \RETURN $S'$.
  \end{algorithmic}
\end{algorithm}
}{}

\begin{theorem}[Find Basis]\label{thm:exact}
  Let $F=F_V F_W$ be a factorizable distribution over
  $\R^{n}$. 
Then, with probability at least
  $1-\delta$, each vector in the output of \textbf{FindBasis} lies in
  either $V$ or $W$.
\end{theorem}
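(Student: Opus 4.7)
The plan is to proceed by induction on $|B|$, maintaining the invariant that every vector currently in $B$ lies in either $V$ or $W$. The observation enabling the induction is that when each $b \in B$ lies in $V \cup W$, the orthogonal complement decomposes as $B^\perp = (V \cap B^\perp) \oplus (W \cap B^\perp)$, and the marginal of $F$ on $B^\perp$ is still a product distribution over these two subspaces (we are merely dropping a subset of $V$-coordinates and a subset of $W$-coordinates, which preserves independence). So at every iteration, the situation inside $B^\perp$ is structurally identical to the original setting, but with $V, W$ replaced by $V' := V \cap B^\perp$ and $W' := W \cap B^\perp$.

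Now consider the iteration in which the algorithm examines moment order $j$. The while-loop structure guarantees that for each $j' < j$ the algorithm has already certified (via Schwartz-Zippel) that $f_{j'}(\cdot) \equiv \gamma_{j'}$ on the unit sphere of $B^\perp$; that is, the first $j-1$ moments of the restricted distribution agree with a Gaussian. When the test for $f_j$ finds a direction $v$ with $f_j(v) \neq \gamma_j$, the hypotheses of Lemma \ref{lemma:representation} apply to the marginal on $B^\perp$, so $f_j$ takes the structured form claimed there. Lemma \ref{lemma:support} then forces every local maximum $u^\ast$ with $f_j(u^\ast) > \gamma_j$ (and similarly every local minimum with value $< \gamma_j$) to satisfy $\|\pi_{V'}(u^\ast)\| = 1$ or $\|\pi_{W'}(u^\ast)\| = 1$, i.e., $u^\ast \in V \cup W$. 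The algorithm's branching (maximize if $f_j(v) > \gamma_j$, minimize otherwise) ensures that the value of $f_j$ along the ascent/descent trajectory stays strictly on the correct side of $\gamma_j$, so the lemma is indeed applicable to $u^\ast$, and the invariant is preserved.

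The main obstacle is guaranteeing the correctness of the Schwartz-Zippel identity tests across all iterations. On $B^\perp$, the function $f_j$ is a homogeneous polynomial of degree $j$ in the coordinates of $v$; for even $j$, checking $f_j(v/\|v\|) \equiv \gamma_j$ on the sphere amounts to testing whether the degree-$j$ polynomial $f_j(v) - \gamma_j \|v\|^j$ vanishes identically, while for odd $j$ we have $\gamma_j = 0$ and $f_j$ itself is a degree-$j$ polynomial to be tested. By Lemma \ref{lemma:schwartz}, evaluating at a uniformly random integer vector drawn from $\{1,\ldots,2m\}^n$ fails with probability at most $j/(2m) \le 1/2$; repeating $O(\log(nm/\delta))$ times drives the per-test failure probability below $\delta/(nm)$. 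The algorithm performs at most $n$ basis extensions and at most $m$ increments of $j$, so there are $O(nm)$ identity tests, and a union bound yields total failure probability at most $\delta$. Conditioned on that event, the invariant holds throughout execution and every vector returned in $B$ lies in $V$ or $W$, as claimed.
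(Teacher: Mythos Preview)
Your proposal is correct and follows essentially the same approach as the paper: establish that the factorizable structure persists on $B^\perp$ (the paper writes this as ``$F_{V\setminus\spn{B}}F_{W\setminus\spn{B}}$ is a factorizable distribution over $\pi_{B^{\perp}}(x)$''), then invoke Lemma~\ref{lemma:support} at each step, with a union bound over the Schwartz--Zippel tests. Your write-up is in fact more explicit than the paper's about why the lower-moment hypothesis of Lemma~\ref{lemma:representation} is inherited on $B^\perp$ and why the local optimum lands on the correct side of $\gamma_j$; the only minor slip is the iteration count (since $j$ never resets, there are at most $n+m$ identity tests rather than $O(nm)$), which does not affect the argument.
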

\begin{proof}[Proof of Theorem \ref{thm:exact}] From the above
  comment, at each step, with probability at least $1-\delta/n$ (hence
  total failure probability $\delta$), we are able to find a point $u$
  where $f_r(u) \ne \gamma_m$. In particular, if $f_r(u) > \gamma_m$,
  then we find a local maximum $u^\ast$. By Lemma \ref{lemma:support},
  $u^\ast$ is contained entirely within $V$ or $W$. The analysis is
  identical when our initial point $u$ satisfies $f_r(u) < \gamma_m$.

  Observe that $F_{V\setminus\spn{B}}F_{W\setminus\spn{B}}$ is a factorizable
  distribution over $\pi_{B^{\perp}}(x)$, and hence a local optimum in
  $B^\perp$ also will lie in either $V$ or $W$. 
\end{proof}

The next theorem states that \textbf{ExtendBasis} finds all
vectors which are correlated with $S \subseteq V$, and that all remaining
vectors at the end of the algorithm are uncorrelated up to the
$m^{th}$ moment.
\begin{theorem}[Basis Extension]\label{thm:extend}
  The output $S'$ of \textbf{ExtendBasis} on input $S \subseteq V, T \subseteq W$ satisfies:
  \begin{enumerate}
    \item $S \subseteq S' \subseteq V$.
    \item For $ \{ v_t \} \subset S'$ and $ \{ u_i \} \subset
      (S')^\perp$:
\begin{align*}
  \E{\prod_{i=1}^{j-l} (\T{x}u_i) \prod_{t=1}^l (\T{x}v_t)} =
  \E{\prod_{i=1}^{j-l} (\T{x}u_i)} \E{\prod_{t=1}^l (\T{x}v_t)}
\end{align*}
\end{enumerate}
\end{theorem}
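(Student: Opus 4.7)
The inclusion $S \subseteq S'$ is immediate from the algorithm; the plan is to prove the two substantive claims in turn.

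\emph{Plan for Part 1 ($S' \subseteq V$).} I would argue by induction on $|S'|$. Suppose inductively that $S' \subseteq V$ at the start of the step $(j,l)$ at which a new vector $u^\ast$ is added as a local optimum of $g$. Because the outer loop restarts at $j=3$ after every addition, the current $S'$ has already been certified at every pair $(j',l')$ with $j' < j$: $g \equiv 0$ on the unit sphere of $(S' \cup T)^\perp$ for every such pair and every choice of the $v_t$'s. Writing $u = u_V + u_W$ and noting that $u_V^0, u_W^0 \in B^\perp$ so their marginals inherit Gaussian moments up to order $m$ from \textbf{FindBasis}, and that $\prod_t (x^T v_t) = P(x_V)$ since $v_t \in S' \subseteq V$, I would apply independence of $F_V$ and $F_W$ together with the binomial expansion of $(x^T u)^{j-l}$ to derive
\begin{align*}
g(u) = \sum_{i=0}^{j-l} \binom{j-l}{i} \norm{u_V}^i \norm{u_W}^{j-l-i} \gamma_{j-l-i} \bigl( \E{(x_V^T u_V^0)^i P(x_V)} - \gamma_i \E{P(x_V)} \bigr).
\end{align*}
The $i=0$ summand vanishes trivially, and for each $1 \le i < j-l$ the bracketed factor is exactly the quantity whose vanishing is certified by the earlier step $(i+l,l)$ (since $i+l < j$), so the inductive invariant kills all these terms. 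Only the $i = j-l$ summand survives, and $g$ collapses to $\norm{u_V}^{j-l} R(u_V^0)$ where $R(u_V^0) := \E{(x_V^T u_V^0)^{j-l} P(x_V)} - \gamma_{j-l} \E{P(x_V)}$. On the curve $\{s u_V^0 + t u_W^0 : s,t \ge 0,\ s^2+t^2 = 1\}$ through $u^\ast$ with $u_V^0$ fixed, this becomes $s^{j-l} R(u_V^0)$, which is monotone in $s$; any local maximum with $g > 0$ (or minimum with $g < 0$) therefore forces $s=1$, i.e., $u^\ast = u_V^0 \in V$, closing the induction.

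\emph{Plan for Part 2 (decorrelation).} At termination $g \equiv 0$ for every $j \le m$, every $l < j$, every $\{v_t\} \subseteq S'$, and every unit $u \in (S' \cup T)^\perp$. For fixed $\{v_t\}$ the map
\begin{align*}
T(u_1,\ldots,u_{j-l}) = \E{\prod_{i=1}^{j-l} (x^T u_i) \prod_{t=1}^l (x^T v_t)} - \E{\prod_{i=1}^{j-l} (x^T u_i)} \E{\prod_{t=1}^l (x^T v_t)}
\end{align*}
is a symmetric multilinear form on $(S' \cup T)^\perp$ whose diagonal $T(u,\ldots,u) = g(u)$ vanishes identically, so the polarization identity for symmetric multilinear forms yields $T \equiv 0$. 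Decomposing an arbitrary $u_i \in (S')^\perp$ as $u_{i,V} + u_{i,W}$ with $u_{i,V} \in V \cap (S')^\perp$ and $u_{i,W} \in W$, and invoking independence of $F_V, F_W$ together with $v_t \in V$, I would then extend this decorrelation from $(S' \cup T)^\perp$ to all of $(S')^\perp$, finishing the second claim.

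\emph{Main obstacle.} The hard part is Part 1, and specifically the collapse of $g$ to the single monomial $\norm{u_V}^{j-l} R(u_V^0)$. Without the algorithm's restart of the outer loop after each addition, the coefficients $\binom{j-l}{i} \gamma_{j-l-i} (\cdot)$ for $1 \le i < j-l$ need not vanish, and then $g$ is a genuine degree-$(j-l)$ homogeneous polynomial in $(\norm{u_V}, \norm{u_W})$ that can exhibit interior local optima on the unit circle, defeating a direct adaptation of Lemma \ref{lemma:support}. Recognizing that the $i$-th bracket is precisely the quantity zeroed out at an earlier step $(i+l,l)$ is what lets the argument go through.
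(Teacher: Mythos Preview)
Your plan for Part 1 is essentially the paper's argument: expand $(x^Tu)^{j-l}$ binomially, use independence of $F_V,F_W$ together with $v_t\in V$, kill the intermediate summands via the ``earlier-step'' invariant, and conclude $g(u)=\norm{u_V}^{j-l}R(u_V^0)$ so that a nonzero local optimum must have $\norm{u_V^\ast}=1$. One correction: the substitutions $\E{(x_W^Tu_W^0)^{j-l-i}}=\gamma_{j-l-i}$ and (in the bracket) $\E{(x_V^Tu_V^0)^i}=\gamma_i$ are \emph{not} hypotheses of the theorem, which assumes only $S\subseteq V$, $T\subseteq W$; the paper does not make them. Without the substitution the bracket reads $\E{(x_V^Tu_V^0)^iP}-\E{(x_V^Tu_V^0)^i}\E{P}$, and \emph{this} is exactly $g_{i+l,l}(u_V^0)$, hence zero by the invariant you describe; so your ``main obstacle'' resolution is right, but the displayed formula should carry $\E{(x_V^Tu_V^0)^i}$ rather than $\gamma_i$. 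With that fix your Part 1 matches the paper's proof.

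For Part 2 you take a slightly different route. The paper invokes its Claim~\ref{claim:spectral} (equality of the symmetric and multilinear spectral norms) to deduce that $g\equiv 0$ forces the full multilinear form to vanish; you instead use the polarization identity for symmetric multilinear forms, which is equally valid and arguably more direct here since you only need vanishing, not a norm bound. Your final step---extending the decorrelation from $(S'\cup T)^\perp$ to all of $(S')^\perp$ by decomposing each $u_i$ across $V$ and $W$ and using independence---is correct and fills in something the paper's proof leaves implicit.
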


\ifthenelse{\boolean{longversion}}{
\begin{proof}[Proof of Theorem \ref{thm:extend}]
  The Schwartz-Zippel lemma returns a correct decision at every
  iteration (there are at most $n^k$ of these, so if we pick our
  domain to be of size $2n^k$ and run $O( \log{n^k/\delta})$
  iterations each time, then we have a correct decision for all
  iterations with probability at least $1-\delta$.

  Let $u^\ast$ be a local maximum found by \textbf{ExtendBasis} using
  the $j^{th}$ moment. Consider the $\{v_1,\ldots,v_l\}$ where
  $u^\ast$ was found.
\begin{align*}
  g(u) &= \E{(\T{x}u)^{(j-l)}\prod_{t=1}^l (\T{x}
    v_t)}-\E{(\T{x}u)^{(j-l)}}\E{
    \prod_{t=1}^l (\T{x} v_t)} \\
  & = \sum_{i=0}^{j-l} \binom{j-l}{i} \E{(\T{x}u_W)^i} \left[
    \E{(\T{x}u_V)^{(j-l-i)}\prod_{t=1}^l (\T{x} v_t)} - \E{\prod_{t=1}^{l}
      (\T{x}v_t)} \E{(\T{x}u_V)^{(j-l-i)}}\right]
\end{align*}
Since $u^\ast$ was found at moment $j$, then for all $0<i<j-l$:
\begin{align*}
  \E{(\T{x}u_V)^{(j-l-i)}\prod_{t=1}^l (\T{x} v_t)} =
  \E{\prod_{t=1}^{l} (\T{x}v_t)} \E{(\T{x}u_V)^{(j-l-i)}}
\end{align*}
Only the first and last terms survive:
\begin{align*}
g(u) &= \E{(\T{x}u_V)^{(j-l)}\prod_{t=1}^l (\T{x} v_t)}  - \E{\prod_{t=1}^l (\T{x} v_t)}  \E{(\T{x}u_V)^{(j-l)}}\\
 & = \norm{u_V}^{j-l} \left[ \E{(\T{x}u_V^0)^{(j-l)}\prod_{t=1}^l (\T{x} v_t)}  - \E{\prod_{t=1}^l (\T{x} v_t)}  \E{(\T{x}u_V^0)^{(j-l)}} \right]
\end{align*}
Having a positive local maximum implies that $\norm{u_V}=1$.

For the second part of this lemma: we already know that all the
remaining vectors must have Gaussian moments. Fix $j \le m$ and a
choice of $\{v_1,\ldots v_l\}$ from $S'$ and consider the symmetric
tensor $\hat{M}$ represented by $f(u) -
\E{(\T{x}u)^{j-l}}\E{\prod_{t=1}^{l} (\T{x}v_t)}$. We require the
following claim for symmetric tensors where for any permutation
$\sigma:[m] \to [m]$:
\begin{align*}
  \E{\Pi_{k=1}^{m} x_{i_k}} =\E{\Pi_{k=1}^{m} x_{i_{\sigma(k)}}}.
\end{align*}
\begin{claim}\label{claim:spectral}
If $A$ is a symmetric order $r$ tensor, then:
\begin{align*}
\max_{\norm{v}=1} A(v,\ldots,v) =
\max_{\norm{v_1}=1,\ldots,\norm{v_r}=1} A(v_1,\ldots,v_r)
\end{align*}
\end{claim}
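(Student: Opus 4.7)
The plan is to prove both directions of the claimed equality. The inequality $\max_{\|v\|=1} A(v,\ldots,v) \le \max_{\|v_i\|=1} A(v_1,\ldots,v_r)$ is immediate: take $v_1 = \cdots = v_r = v^\ast$ at a diagonal maximizer. Proving the reverse inequality is the substance of the claim, and is where the symmetry of $A$ enters.

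Let $(v_1^\ast,\ldots,v_r^\ast)$ attain the RHS with value $M'$, which I will assume is positive (the case $M' \le 0$ can be reduced to this by applying the same argument to $-A$, or handled by parity considerations when $r$ is odd). Applying Lagrange multipliers separately in each coordinate gives scalars $\lambda_i$ with
\[
A(v_1^\ast,\ldots,w,\ldots,v_r^\ast) \;=\; \lambda_i \langle w, v_i^\ast \rangle \quad \text{for every } w \in \R^n,
\]
since the linear functional in $v_i$ attains its unit-sphere maximum at $v_i^\ast$. Pairing with $w = v_i^\ast$ forces $\lambda_i = M'$ uniformly in $i$. This ``common Lagrange multiplier'' identity is the structural fact that drives the rest of the argument.

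The key step is a pairwise symmetrization. For two positions $i \ne j$ with $v_i^\ast \ne -v_j^\ast$, set $\bar v = (v_i^\ast + v_j^\ast)/s$ where $s = \|v_i^\ast + v_j^\ast\|$, and expand $A(\ldots,\bar v,\ldots,\bar v,\ldots)$ (with $\bar v$ inserted in positions $i$ and $j$) by multilinearity into four terms. The two off-diagonal terms each equal $M'$, the original maximum value, using the symmetry of $A$; and each of the two ``diagonal'' terms $A(\ldots,v_i^\ast,\ldots,v_i^\ast,\ldots)$ and $A(\ldots,v_j^\ast,\ldots,v_j^\ast,\ldots)$ evaluates to $M'\langle v_i^\ast,v_j^\ast\rangle$ via the Lagrange identity applied at the other position. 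Summing and using $s^2 = 2(1+\langle v_i^\ast,v_j^\ast\rangle)$ gives $A(\ldots,\bar v,\ldots,\bar v,\ldots) = \tfrac{1}{s^2}\bigl[2M' + 2M'\langle v_i^\ast,v_j^\ast\rangle\bigr] = M'$, so the substitution yields another maximizer. Iterating pairwise averaging (re-deriving Lagrange identities at the updated maximizer after each step) collapses the entire tuple to a common unit vector $v$, at which $A(v,\ldots,v) = M'$, establishing $\max_{\|v\|=1} A(v,\ldots,v) \ge M'$.

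The main obstacle I expect is the antipodal edge case $v_i^\ast = -v_j^\ast$, where the averaging step is undefined. I plan to handle it either by a slight rotation of the starting maximizer so that no pair is exactly antipodal, or by checking that antipodality together with $M'>0$ leads to a contradiction, since the Lagrange identity would force $A(\ldots,v_j^\ast,\ldots,v_j^\ast,\ldots) = -M'$, which is incompatible with being a unit tuple beneath a strictly positive maximum. The secondary technical point is keeping track of how the Lagrange identities transfer to each updated maximizer so that the iterative averaging closes properly; this is routine but needs to be written out carefully.
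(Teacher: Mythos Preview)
The paper does not actually prove this claim; it is stated without justification inside the proof of Theorem~\ref{thm:extend}. More importantly, the claim as written is \emph{false} for even $r$. Take $r=2$ and the symmetric matrix $A=\operatorname{diag}(-2,1)$: then $\max_{\|v\|=1} v^{\mathrm T}Av=1$, while $\max_{\|v_1\|=\|v_2\|=1} v_1^{\mathrm T}Av_2=2$, attained at $v_1=e_1$, $v_2=-e_1$. What \emph{is} true (Banach, 1938) is the equality with absolute values, $\max_{\|v\|=1}|A(v,\ldots,v)|=\max_{\|v_i\|=1}|A(v_1,\ldots,v_r)|$; and for the paper's actual application one only needs the polarization consequence that $A(u,\ldots,u)=0$ for all unit $u$ forces $A\equiv 0$.

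Your argument breaks exactly where it must. The Lagrange identities and the pairwise-averaging computation are correct, and they are essentially the key step in standard proofs of the absolute-value version. The gap is your handling of the antipodal case $v_i^\ast=-v_j^\ast$. You claim that $A(\ldots,v_j^\ast,\ldots,v_j^\ast,\ldots)=-M'$ is ``incompatible with being a unit tuple beneath a strictly positive maximum,'' but there is no incompatibility: a unit tuple with value $-M'\le M'$ is perfectly consistent with the maximum being $M'$. In the $r=2$ counterexample this is exactly what happens, $A(-e_1,-e_1)=-2$. Your alternative fix, perturbing so no pair is antipodal, also fails: a perturbed tuple is no longer a maximizer, the Lagrange identities no longer hold, and averaging no longer preserves the value. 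Concretely, averaging $e_1$ with $(-\cos\theta,\sin\theta)$ in the example above and evaluating the quadratic form gives $(3\cos\theta-1)/2\to 1$ as $\theta\to 0$, recovering only the diagonal maximum, not $M'=2$. A secondary issue you flag as ``routine'' is not: iterated pairwise averaging does not obviously collapse all $r$ slots to a common vector (merging slots $1,2$ and then $2,3$ can separate $1$ and $2$ again); the standard route is induction on $r$, not repeated averaging.
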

Using Claim \ref{claim:spectral}:
\begin{align*}
\max_{\norm{u}=1} \hat{M}(u,\ldots, u) =
  \max_{\norm{u_1}=1,\ldots,\norm{u_{j-l}}=1} \hat{M}(u_1,\ldots,u_{j-l})
\end{align*}
In particular, there exists $\{u_i\}$ such that
\begin{align*}
  \E{\prod_{i=1}^{j-l} (\T{x}u_i) \prod_{t=1}^l (\T{x}v_t)} > \E{\prod_{i=1}^{j-l} (\T{x}u_i)}
  \E{\prod_{t=1}^l (\T{x}v_t)}
\end{align*}
if and only if there exists $u$ such that
\begin{align*}
\E{(\T{x}u)^{j-l} \prod_{t=1}^l (\T{x} v_t)} > \E{(\T{x}u)^{j-l}} \E{\prod_{t=1}^l (\T{x}v_t)}
\end{align*}
But at the end of the algorithm, we know that there are no such $u$,
hence there can be no such $u_i$ either. Thus, we can factor any $u
\notin S'$ through the expectations which contain only $v_i$ from
$S'$.
\end{proof}
}{}

\subsection{Generalized ICA}\label{sec:general}
We can now give an algorithm for generalized ICA, assuming access to
exact moment tensors and local optima. If $F$ is factorizable,
Algorithm \textbf{Factor} will provide a factoring into subspaces such that 
the marginal distributions look independent up to $m$
moments. 
The output of \textbf{FindBasis} is a set of vectors that each lie in $V$ or in $W$. We try all possibilities for the subset from $V$, 
then extend this using \textbf{ExtendBasis}, consider the resulting decomposition and take the option that gives a 
product factorization. The factorization found will be $U, U^\perp$ for some $U \subseteq V$.
 
\ifthenelse{\boolean{pictures}}{
\begin{algorithm}
\caption{Factor}

\begin{algorithmic}[1]
  \REQUIRE Highest moment $m$, distribution $F$.
  \STATE $B\leftarrow\boldsymbol{FindBasis(m,F)}$. 
  \FOR{every subset $T\subseteq B$ of at most $k$ vectors}
  \STATE $S' \leftarrow \boldsymbol{ExtendBasis(m,F,T,B-T)}$.
  
  \STATE $T' \leftarrow\boldsymbol{ExtendBasis(m,F,B-T,S')}$.

    \IF{$\abs{S'} > k$ or $\abs{T'} > n-k$}
  \STATE Continue with the next choice of $T$.
  \ENDIF

  \STATE Augment $S'$ with $k-\abs{S'}$ orthonormal vectors from
  $\R^n-\spn{T'}$, forming basis $U$.

  \STATE Compute $m$ moments of $F$, $F_U$ and $F_{U^\perp}$:

  \FOR{$l \le m$}
  \STATE Compute:
  \begin{align*}
    \Delta_{U}^l=\sum_{(i_{1},\ldots,i_{l})}\left( \EE{F}{x_{i_1}\cdots
        x_{i_l}}- \EE{F_U}{x_{p_1}\cdots
        x_{p_j}}\EE{F_{U^\perp}}{x_{p_{j+1}}\cdots x_{p_l}} \right)^2
  \end{align*}
  where $\{x_{p_1},\ldots,x_{p_k}\}$ correspond to coordinates in $U$,
  and $\{x_{p_{k+1}},\ldots,x_{p_l}\}$ correspond to coordinates in
  the $U^\perp$ subspace.
  \ENDFOR
  \ENDFOR
  \RETURN $U$ with lowest $\Delta_{U}^3$, breaking ties by considering
  $\Delta_{U}^4,\Delta_{U}^5,\ldots$
\end{algorithmic}
\end{algorithm}
}{}
\begin{theorem}[Factoring, general noise]\label{thm:factor} 
  For any $\eps,\delta>0$, given the moment tensors of distribution
  $F$ over $\R^{n}$ and the ability to compute exact local optima, if
  there exists a subspace $V$ with $dim(V)=k$ such that for
  $j=1,\ldots, m$ $d_j(F,F_V F_W)=0$, Algorithm \textbf{Factor} finds
  a subspace $U$ such that $d(F,F_U F_{U^\perp}) =0$ with probability
  at least $1-\delta$.  The time and sample complexity of the
  algorithm are bounded by $n^{O(k+m)}$.
\end{theorem}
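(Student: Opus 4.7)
The plan is to identify a single iteration of \textbf{Factor} that is guaranteed to produce a candidate $U$ with $\Delta_U^l = 0$ for every $l \le m$, and then to bound the total running time by $n^{O(k+m)}$. First I apply Theorem \ref{thm:exact} to \textbf{FindBasis}, setting its internal failure probability to $\delta / n^{O(k+m)}$ so that a union bound over all randomized identity tests performed throughout the algorithm yields total success probability at least $1-\delta$. Conditioning on this event, $B = B_V \sqcup B_W$ with $B_V \subseteq V$, $B_W \subseteq W$, and $|B_V| \le k$; moreover, every unit direction in $B^\perp$ has its first $m$ moments equal to those of a standard Gaussian, for otherwise \textbf{FindBasis} would not have terminated.

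Next I focus on the iteration of \textbf{Factor} with $T = B_V$, which is among those enumerated since $|T| \le k$. Theorem \ref{thm:extend} applied to the two invocations of \textbf{ExtendBasis} yields $S'$ with $B_V \subseteq S' \subseteq V$ and $T'$ with $B_W \subseteq T' \subseteq W$, and both size checks pass. The theorem also guarantees that all cross-moments of order $\le m$ between $S'$ and $\spn{S'}^\perp$ vanish, and similarly between $T'$ and $\spn{T'}^\perp$. Let $C = (\spn{S'} \cup \spn{T'})^\perp \subseteq B^\perp$, so any augmenting vectors lie in $C$. The key structural observation is that $F|_C$ matches a standard Gaussian in every joint moment of order $\le m$: every one-dimensional marginal on $C$ already does so, and by polarization a symmetric $j$-tensor whose diagonal evaluation $(v,\ldots,v)$ vanishes identically on $C$ is the zero tensor on $C$; applied to the difference between the $j^{th}$ moment tensor of $F|_C$ and the Gaussian tensor, this forces equality on the full symmetric tensor. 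Hence an arbitrary orthogonal change of basis inside $C$ preserves all order-$\le m$ moments of $F$, and the augmentation is harmless. Combining this rotation invariance with the two \textbf{ExtendBasis} factorizations and the independence of $S' \subseteq V$ from $T' \subseteq W$, a direct expansion shows that every order-$\le m$ moment of $F$ factors across $U$ and $U^\perp$, so $\Delta_U^l = 0$ for all $l \le m$.

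Since \textbf{Factor} returns the $U$ minimizing $\Delta^l$ and the iteration above achieves $\Delta^l = 0$, the returned subspace satisfies $d_j(F, F_U F_{U^\perp}) = 0$ for every $j \le m$. For the running time, the outer loop has $\binom{n}{\le k} = n^{O(k)}$ choices of $T$; each call to \textbf{ExtendBasis} iterates over subsets of $S'$ of size at most $m$, contributing $n^{O(m)}$ inner steps; each polynomial identity test and each $\Delta_U^l$ evaluation over $l$-tuples costs $n^{O(m)}$. Multiplying yields the claimed $n^{O(k+m)}$. The main obstacle is the moment-factoring argument for the augmented basis: one has to verify that choosing arbitrary orthonormal augmenting vectors in $C$ cannot break the $U$-versus-$U^\perp$ factorization, and this is where the two-step chain of \textbf{FindBasis} exhausting non-Gaussian directions in $C$ together with polarization (promoting direction-wise Gaussian moments to full-tensor equality on $C$) is essential.
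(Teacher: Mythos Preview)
Your proposal is correct and follows essentially the same route as the paper's proof: single out the iteration $T = B \cap V$, apply Theorems \ref{thm:exact} and \ref{thm:extend} to obtain $S' \subseteq V$ and $T' \subseteq W$ together with the cross-moment factorization, and conclude $\Delta_U^l = 0$ for all $l \le m$. The paper is terser, jumping from Part 2 of Theorem \ref{thm:extend} directly to ``in the augmented basis, the expectations separate,'' whereas you make explicit the one step that the paper leaves implicit --- that the residual subspace $C$ has Gaussian \emph{joint} moments up to order $m$ (promoted from directional equality via polarization of the symmetric moment tensor), so an arbitrary orthonormal augmentation inside $C$ cannot disturb the $U$ versus $U^\perp$ factorization.
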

\ifthenelse{\boolean{longversion}}{
\begin{proof}[Proof of Theorem \ref{thm:factor}] In the enumeration of
  all subsets of size at most $k$ subsets of $B$ at line 2, we
  encounter $T = B \cap V$. By Theorem \ref{thm:extend}, the output
  $S'$ of \textbf{ExtendBasis} contains only vectors in $V$ and $T'$
  contains only vectors from $W$. By Part 2 of Theorem
  \ref{thm:extend}, the following holds for any choice of vector $\{
  u_i \} \subset \sph^{n-1}-\spn{S',T'}$, we have $\E{(x^t u_i)^j} =
  \gamma_m$ for $j \le m$ and that:
  \begin{align*}
  \E{\prod_{i=1}^{j-l} (\T{x}u_i) \prod_{t=1}^l (\T{x}v_t)} =
  \E{\prod_{i=1}^{j-l} (\T{x}u_i)} \E{\prod_{t=1}^l (\T{x}v_t)}
\end{align*}
for $v_t \in S' \cup T'$. In particular, every such $u$ is independent
from $S'$ and $T'$ up to the $m^{th}$ moment. In the augmented basis,
the expectations separate into the products over the two subspaces:
  \begin{align*}
    \E{x_{i_{1}}\ldots x_{i_{l}}}=\E{x_{p_{1}}\ldots
      x_{p_{j}}}\E{x_{p_{j+1}}\ldots x_{p_{l}}}
  \end{align*}
  where $\{x_{p_1},\ldots,x_{p_j}\}$ correspond to coordinates in the
  $U$ subspace, and $\{x_{p_{j+1}},\ldots,x_{p_l}\}$ correspond to
  coordinates in the $U^\perp$ subspace. In particular, the entries of
  the moment tensor of $F$ are equal to the entries of the moment
  tensor of $F_U F_{U^\perp}$, and hence will return $\Delta^j=0$.
 \end{proof}
}{}

\section{Gaussian noise model}\label{sec:gaussian}

We now give a complete algorithm assuming $F_W$ is a Gaussian, assuming we only have access
to $F$ through samples (not exact moment tensors). 
The main difficulty is handling the error accumulation
over multiple iterations, as in each round we can only hope to approximately compute moments and find approximate
local optima.
The idea is that \textbf{FindBasis} and \textbf{ExtendBasis} find
vectors where $\E{(x^T u)^m} \neq \gamma_m$. If $F_W$ is Gaussian, our
algorithms only find directions in $V$. Thus, the error accumulates
over only $k$ steps, and the total error depends on $k$ rather than
$n$.

\subsection{Local search}\label{sec:localsearch}
To compute approximate local optima, we perform gradient ascent,
moving in the direction of the gradient. If moving along the gradient
does not increase the function value by a certain value, we switch to
second-order moves based on the Hessian.  We will use the notation
that $Dg_u$ is the gradient of $g$ at $u$ and $D^2g_u$ for the Hessian matrix.  The
top eigenvalue of a matrix on a subspace orthogonal to a vector can be
computed via a coordinate transformation.  We denote $M
=\norm{M^m}_2$.  \ifthenelse{\boolean{pictures}}{
\begin{algorithm}
  \caption{LocalOpt}
  \begin{algorithmic}[1]
    \REQUIRE Function $g$, error parameter $\eps_1$,
    \STATE $u \leftarrow$ uniformly at random over unit sphere.
    \WHILE{ $\abs{\innerprod{u}{Dg_{u} }} \le (1 - \eps_1)\norm{Dg_{u}}
      $ or $\lambda_{\max}(D^2g_{u}) \ge \eps_2$ on $u^\perp$ }
    \IF{ $\abs{\innerprod{u}{Dg_{u} }} \le (1 - \eps_1)\norm{Dg_{u}} $}
    \STATE Direction $v \leftarrow \pi_{u^\perp}(Dg_{u})$.
    \STATE $u \leftarrow u + r_1 v/\norm{v}$.
    \STATE Renormalize $u \leftarrow u /\norm{u}$.
    \ELSIF{$\lambda_{\max}(D^2g_{u}) \ge \eps_2$ on $u^\perp$}
    \STATE Direction $v \leftarrow$ top eigenvector of $D^2g_{u}$ on $u^\perp$.
    \STATE $u \leftarrow u - r_2 v/\norm{v}$.
    \STATE Renormalize $u \leftarrow u /\norm{u}$.
    \ENDIF
    \ENDWHILE
    \RETURN $u$.
  \end{algorithmic}
\end{algorithm}
}{} \textbf{LocalOpt} terminates in polynomial time when the
parameters $\eps_1$, $r_1$, $\eps_2$ and $r_2$, the thresholds and
step sizes for the first-order moves and second-order moves are chosen
appropriately. Note that $\eps_2$ varies with the function value, but
the remaining parameters are fixed.
\begin{lemma}[Local search termination]\label{lemma:gradient}
  Let $g(u)$ satisfy $g(tu)=t^m g(u)$ for some integer $m$. Suppose
  that for our starting point $u$ that $g(u) \ge \eta > 0$. Choose the
  parameters as follows:
  \begin{align*}
    \eps_1 &\le \left( \frac{81m(m-1)^2 \eta^2}{1048M} \right)^2 \qquad
    r_1 \le \frac{\sqrt{\eps_1}}{4m^2M}\\
    \eps_2 &= \frac{3 m(m-1)g(u)}{4} \qquad r_2 \le
    \frac{9\eta}{256(m-2)M}
  \end{align*}
  where $M$ is an upper bound for $g$ on the unit sphere. Then
  \textbf{LocalOpt} will terminate in at most
  $\poly(M,m,1/\eps_1,1/r_1,1/r_2, 1/\eta)$ iterations.
\end{lemma}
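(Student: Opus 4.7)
The plan is to show that each iteration of \textbf{LocalOpt} strictly increases $g(u)$ by a quantity $\delta > 0$ that is polynomially bounded below in the given parameters. Since $g$ is bounded above by $M$ on the unit sphere and the starting value is $g(u) \ge \eta$, monotonicity of progress preserves $g(u) \ge \eta$ throughout, and the total number of iterations is at most $M/\delta = \poly(M, m, 1/\eps_1, 1/r_1, 1/r_2, 1/\eta)$. The central tool is Euler's identity for $m$-homogeneous functions: differentiating $g(tu) = t^m g(u)$ at $t=1$ yields $\innerprod{u}{Dg_u} = m\, g(u)$, so on the unit sphere $\norm{Dg_u} \ge m\eta$. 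Homogeneity also gives uniform operator-norm bounds $\norm{D^2 g} \le m(m-1)M$ and $\norm{D^3 g} \le m(m-1)(m-2)M$, which I will use to control Taylor remainders.

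For the first-order branch, the loop condition $\abs{\innerprod{u}{Dg_u}} \le (1-\eps_1)\norm{Dg_u}$ guarantees a large tangential gradient,
\[
\norm{\pi_{u^\perp}(Dg_u)}^2 = \norm{Dg_u}^2 - \innerprod{u}{Dg_u}^2 \ge (2\eps_1 - \eps_1^2)\norm{Dg_u}^2 \ge \eps_1 m^2 \eta^2.
\]
Taylor expanding $g(u + r_1 v/\norm{v})$ with $v = \pi_{u^\perp}(Dg_u)$, the linear gain is at least $r_1 m\eta\sqrt{\eps_1}$, while the quadratic error is bounded by $r_1^2 m(m-1)M/2$; renormalizing the post-step vector (of norm $\sqrt{1+r_1^2}$) back onto the sphere via $m$-homogeneity contributes an $O(mr_1^2) M$ loss. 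The choice $r_1 \le \sqrt{\eps_1}/(4 m^2 M)$ makes the linear term dominate both corrections, yielding per-step progress $\Omega(r_1 \eta \sqrt{\eps_1})$.

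The main technical difficulty is the second-order branch. Here the step direction $v$ is a top eigenvector of $D^2 g_u$ on $u^\perp$ with eigenvalue at least $\eps_2 = 3m(m-1)g(u)/4$, but because $v \perp u$ the linear Taylor term $-r_2 \innerprod{Dg_u}{v}$ need not vanish. I would fix its sign by choosing the sign of the eigenvector $v$ so that $-r_2 \innerprod{Dg_u}{v} \ge 0$; in any case its magnitude is bounded by $r_2 \norm{\pi_{u^\perp}(Dg_u)} \le r_2 m M \sqrt{2\eps_1}$ via the negation of the first-order condition. The quadratic term is at least $r_2^2 \eps_2 / 2 = 3r_2^2 m(m-1)g(u)/8 \ge 3r_2^2 m(m-1)\eta/8$, so the prescribed bound $\eps_1 \le (81 m(m-1)^2 \eta^2/(1048 M))^2$ forces the linear-term error to be a small fraction of the quadratic gain, and $r_2 \le 9\eta/(256(m-2)M)$ ensures the cubic remainder $r_2^3 m(m-1)(m-2)M/6$ is similarly controlled. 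The renormalization loss of $O(m r_2^2) M$ is absorbed in the same way, giving net per-step gain $\Omega(r_2^2 m(m-1)\eta)$.

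Combining the two branches, every iteration increases $g(u)$ by at least the minimum of the two bounds above, which is a polynomial function of $\eps_1, r_1, r_2, \eta, 1/M, 1/m$. Hence the algorithm terminates after at most $M/\delta = \poly(M, m, 1/\eps_1, 1/r_1, 1/r_2, 1/\eta)$ iterations. The delicate point throughout is the book-keeping between Taylor expansion in the ambient space and renormalization onto the sphere, both of which are handled cleanly by the $m$-homogeneity of $g$.
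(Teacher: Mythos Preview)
Your approach is essentially the paper's: Taylor expand, control lower-order terms, then renormalize using $m$-homogeneity, and show a multiplicative increase in $g$ per iteration. The use of Euler's identity to get $\norm{Dg_u}\ge m g(u)$ is a clean way to make the first-order bound explicit (the paper leaves this implicit), and your optional sign choice for the eigenvector in the second-order step is a harmless simplification.

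There is, however, a genuine gap in your second-order renormalization accounting. You bound the renormalization loss by $O(m r_2^2)M$ and the quadratic gain by $\Omega(r_2^2 m(m-1)\eta)$; for the gain to dominate you would need $(m-1)\eta \gtrsim M$, which is not assumed and is generally false. The point of the specific choice $\eps_2=\tfrac{3}{4}m(m-1)g(u)$ is precisely that both gain and loss scale with the \emph{current} value $g(u)$, not with the global constants $M$ or $\eta$. Concretely, since $\norm{u+h}^2=1+r_2^2$, homogeneity gives
\[
g\!\left(\frac{u+h}{\norm{u+h}}\right)=(1+r_2^2)^{-m/2}\,g(u+h)\ \ge\ \Bigl(1-\tfrac{m+2}{2}r_2^2\Bigr)\,g(u+h),
\]
and your own Taylor bound yields $g(u+h)\ge g(u)\bigl(1+\tfrac{31}{64g(u)}r_2^2\eps_2\bigr)$ once the linear and cubic terms are controlled. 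Substituting $\eps_2$, the net multiplicative factor is at least $1+\bigl(\tfrac{93}{256}m(m-1)-\tfrac{m+2}{2}\bigr)r_2^2$, which is $\ge 1+\Omega(r_2^2)$ for all $m\ge 3$. In other words, compare coefficients of $r_2^2 g(u)$ rather than pitting $\eta$ against $M$. The same remark applies, in milder form, to your first-order renormalization bound: writing the loss as $O(m r_1^2)M$ against a gain of $r_1 m\eta\sqrt{\eps_1}$ again requires an unproven relation between $\eta$ and $M$; keeping the loss as $O(m r_1^2)g(u)$ (and the gain as $r_1\sqrt{\eps_1}\,m g(u)$ via Euler) fixes this.
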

\ifthenelse{\boolean{longversion}}{
\begin{proof}[Proof of Lemma \ref{lemma:gradient}]
  Consider an iteration of the algorithm where the first derivative
  condition is unsatisfied, and we make a step of size $r_1$ in the
  direction of $v/\norm{v}$ (call the step $h$). The new function
  value at this point $u+h$ is given by the Taylor series expansion
  with error (where $\zeta$ lies between $u$ and $u+h$):
  \begin{align*}
    g(u+h) & = g(u) + Dg_u \cdot h + \frac{1}{2}h^T( D^2g_\zeta) h
  \end{align*}
  The increase in function value is lower bounded as follows:
  \begin{align*}
    Dg_u \cdot h + \frac{1}{2}h^T D^2g_\zeta h & \ge r_1
    \innerprod{Dg_u}{\frac{v}{\norm{v}}} - \frac{1}{2}r_1^2
    (v/\norm{v})^T D^2g_\zeta (v/\norm{v}) \\
    & \ge r_1 \sqrt{\eps_1} - \frac{1}{2}r_1^2 m^2 M \\
    & \ge r_1 \sqrt{\eps_1} - \frac{1}{8} r_1 \sqrt{\eps_1} \\
    & \ge \frac{7}{8} r_1 \sqrt{\eps_1}
  \end{align*}
  Thus, we have lower bounded the increase in the function value. When
  we rescale $u+h$ back to norm 1, we can apply the $m$-homogeneity of
  $f$ to deduce that:
  \begin{align*}
    g\left( \frac{u+h}{\norm{u+h}} \right) =
    \frac{1}{\norm{u+h}^{m/2}} g(u+h)
  \end{align*}
  We can compute $\norm{u+h} = 1 + r_1^2$ because $r_1$ is
  perpendicular to $u$. Hence:
  \begin{align*}
    g\left( \frac{u+h}{\norm{u+h}} \right) & =
    \frac{1}{(1+r_1^2)^{m/2}} g( u + h ) \\
    & \ge \left(1-\frac{m+2}{2} r_1^2\right) g(u+h) \\
    & \ge g(u) \left( 1 + \frac{7}{8g(u)} r_1 \sqrt{\eps_1}\right)
    \left( 1- \frac{m+2}{2} r_1^2 \right)
  \end{align*}
  where we used the estimate:
  \begin{align*}
    (1+x)^k \ge 1 + ( k + 1/2) x 
  \end{align*}
  for $x \le 2/k^2$. To finish this calculation, we simply substitute
  our value for $r_1$ in terms of $\eps_1$:
  \begin{align*}
    g\left( \frac{u+h}{\norm{u+h}} \right) & \ge g(u) \left( 1 +
      \frac{7}{8g(u)} r_1 \sqrt{\eps_1} \right) \left( 1-
      \frac{1}{8M} r_1 \sqrt{\eps_1}  \right) \\
    & \ge g(u) \left( 1 + \frac{5}{8M} r_1 \sqrt{\eps_1} \right)
  \end{align*}
  Hence, there are at most at most a polynomial number of iterations
  of this form.  Consider now an iteration where the second derivative
  condition is unsatisfied (and the first derivative condition must be
  satisfied). We take the same Taylor series expansion with error term
  (to one further term), where $h = r_2 v / \norm{v}$:
  \begin{align*}
    g(u+h) & = g(u) + Dg_u \cdot h + \frac{1}{2}h^T D^2g_u h +
    \frac{1}{6} D^3g_\zeta (h,h,h)
  \end{align*}
  We will show that the contributions from the first and third
  derivative terms are small, and that the second derivative term
  dominates. In the first derivative term, note that $h$ is orthogonal
  to $u$, and hence the component of $Dg_u$ parallel to $h$ has norm
  at most $\sqrt{2\eps_1-\eps_1^2} \norm{Dg_u}$. We estimate the other
  terms as before:
  \begin{align*}
    Dg_u \cdot h + \frac{1}{2}h^T D^2g_u h + \frac{1}{6} D^3g_\zeta
    (h,h,h) & \ge -\sqrt{2\eps_1-\eps^2}mM + \frac{1}{2}r_2^2 \eps_2 -
    \frac{m(m-1)(m-2)}{6}r_2^3 M \\
    & \ge -\frac{1}{128}r_2^2\eps_2 + \frac{1}{2}r_2^2 \eps_2 -
    \frac{1}{128}r_2^2 \eps_2 \\
    & \ge \frac{31}{64} r_2^2 \eps_2
  \end{align*}
  Once again, we have to rescale back to norm 1. In this case:
  \begin{align*}
    g\left( \frac{u+h}{\norm{u+h}} \right) & \ge g(u) \left( 1 +
      \frac{31}{64g(u)} r_2^2 \eps_2 - \frac{m+1}{2} r_2^2 \right) \\
     &  \ge g(u) \left( 1 + \frac{93}{256} m(m-1) r_2^2 - \frac{m+1}{2}
       r_2^2 \right) \\
     & \ge g(u) \left( 1 + \frac{93}{256}r_2^2 \right)
  \end{align*}
  The last bound follows because the worst possible lower bound occurs
  at $m=3$.  Hence, there are only a polynomial number of iterations
  of this form as well.
\end{proof}
}{}

\subsection{Exact moments, approximate local optima}\label{sec:partial}
We are now ready to extend the analysis of Theorem \ref{thm:exact} to
the case when we have access to the exact moment tensor, but instead
of using exact moments, we will use \textbf{LocalOpt} with
appropriately chosen $\eps_1$. On the other hand, using a weaker local
optimum algorithm will also give us weaker guarantees on the quality
of the output, giving a weaker form of Lemma \ref{lemma:support}. Over
$\R^n$ (instead of $\sph^{n-1}$), Lemma \ref{lemma:representation}
gives us a formula of the form:
\begin{align*}
  f_m(u) = \norm{u_{V}}^{m}\left(\E{(\T{x_{V}}u_{V}^{0})^{m}}-\gamma_m\right)+
    \norm{u_{W}}^{m}\left(\E{(\T{x_{W}}u_{W}^{0})^{m}}-\gamma_m\right)+\gamma_m \norm{u}^m
\end{align*}
We will optimise the function $g(u) = f_m(u) - \gamma_m \norm{u}^m$
using \textbf{LocalOpt} over the unit sphere. This is equivalent to
optimising $f_m$ and simplifies our derivative calculations.

\begin{lemma}[Exact moments, inexact optima] \label{lemma:exactapprox}
  Let $F=F_V F_W$ have the same first $m-1$ moments as a Gaussian but
  a different $m^{th}$ moment. Suppose we run \textbf{LocalOpt} on
  $g(u)$, starting from a point $u$ where $g(u) \ge \eta$, setting
  $\eps_1 \le m \eta^{2/(m-2)}/M^{2/(m-2)}$. After
  $\poly(n,1/\eps_1,\eta)$ iterations, we will have a point $u^\ast$
  where either $\norm{\pi_V(u^\ast)} \ge 1 - 16 \eps_1$ or
  $\norm{\pi_W(u^\ast)} \ge 1 - 16 \eps_1$.
\end{lemma}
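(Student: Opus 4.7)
The plan is to use the two termination criteria of \textbf{LocalOpt} together with the explicit formula for $g$ from Lemma \ref{lemma:representation}. Since $g(u) = f_m(u) - \gamma_m\norm{u}^m$ is $m$-homogeneous on $\R^n$, Euler's identity gives $\innerprod{u}{Dg_u} = mg(u)$, and on the unit sphere Lemma \ref{lemma:representation} yields $g(u) = s^m a_V + t^m a_W$ where $s=\norm{\pi_V(u)}$, $t=\norm{\pi_W(u)}$, $a_V = \E{(\T{x_V} u_V^0)^m} - \gamma_m$, and $a_W$ is analogous. Restricting $g$ to the great circle through $u_V^0$ and $u_W^0$ gives $h(\theta) = a_V\cos^m\theta + a_W\sin^m\theta$, whose first derivative is $h'(\theta) = mst(a_W t^{m-2} - a_V s^{m-2})$. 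Because \textbf{LocalOpt} is monotone, $g(u^\ast)\ge \eta$ at termination.

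Using Euler's identity, the first termination criterion $\norm{\pi_{u^\perp}Dg_{u^\ast}} \le \sqrt{2\eps_1 - \eps_1^2}\,\norm{Dg_{u^\ast}}$ together with $\norm{Dg_{u^\ast}} \ge mg(u^\ast)$ yields $\norm{\pi_{u^\perp}Dg_{u^\ast}} \le 2\sqrt{\eps_1}\,mg(u^\ast)$ for small $\eps_1$. Projecting onto the unit tangent $\xi$ lying in the $u_V^0$–$u_W^0$ plane then gives
\[
\bigl|st(a_W t^{m-2} - a_V s^{m-2})\bigr| \le 2\sqrt{\eps_1}\,(a_V s^m + a_W t^m).
\]
I would first dispose of the easy cases. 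If $a_V \le 0$, then $g(u^\ast)\ge \eta > 0$ forces $a_W > 0$; the left-hand side is at least $a_W s t^{m-1}$ while the right-hand side is at most $2\sqrt{\eps_1}\,a_W t^m$, giving $s \le 2\sqrt{\eps_1}\,t$ and hence $s^2 \le 4\eps_1$. Therefore $\norm{\pi_W(u^\ast)} = t \ge \sqrt{1-4\eps_1} \ge 1-16\eps_1$. The case $a_W \le 0$ is symmetric.

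The main obstacle is the case $a_V, a_W > 0$: on $[0,\pi/2]$ the function $h$ has two endpoint maxima (the desired true local optima) and an interior local minimum $\theta^\ast$ identified in Lemma \ref{lemma:support}, and the gradient bound alone cannot separate $\theta^\ast$ from the endpoints. Here I invoke the Hessian criterion. A direct computation gives $h''(\theta) = m(m-1)(a_V s^{m-2}t^2 + a_W t^{m-2}s^2) - mg(u)$, and the sphere-to-ambient identity (from $u''(0) = -u$ along the geodesic) gives $\T{\xi} D^2 g_{u^\ast}\xi = h''(\theta) + mg(u^\ast)$. At $\theta^\ast$, where $a_V s^{m-2} = a_W t^{m-2} = g(u^\ast)$, this evaluates to $m(m-1)g(u^\ast)$, strictly above the threshold $\eps_2 = 3m(m-1)g(u^\ast)/4$, so \textbf{LocalOpt} cannot terminate near $\theta^\ast$. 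Continuity of $A(\theta) := a_V s^{m-2}t^2 + a_W t^{m-2}s^2$, with $g(u^\ast)\ge\eta$ controlling the scale, gives an explicit neighborhood of $\theta^\ast$ excluded by the Hessian bound; on each of the two complementary intervals $h'$ has definite sign, so the same algebraic manipulation used in the $a_V\le 0$ case pins $\theta$ within $O(\sqrt{\eps_1})$ of either $0$ or $\pi/2$. The hypothesis $\eps_1 \le m\eta^{2/(m-2)}/M^{2/(m-2)}$ makes all the constants involved in these two estimates small enough that $\min(s,t) \le 4\sqrt{\eps_1}$, and hence either $\norm{\pi_V(u^\ast)} \ge 1-16\eps_1$ or $\norm{\pi_W(u^\ast)} \ge 1-16\eps_1$, as claimed.
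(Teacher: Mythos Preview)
Your framework matches the paper's exactly: restrict to the great-circle arc through $u_V^0$ and $u_W^0$, write $g = a_V s^m + a_W t^m$ there, and combine the two termination criteria. The sign-definite cases are handled correctly. One slip: you write ``$\norm{Dg_{u^\ast}}\ge mg(u^\ast)$ yields $\norm{\pi_{u^\perp}Dg_{u^\ast}}\le 2\sqrt{\eps_1}\,mg(u^\ast)$'', but a lower bound on $\norm{Dg_{u^\ast}}$ cannot bound its orthogonal part from above; what you actually need is the \emph{upper} bound $\norm{Dg_{u^\ast}}\le mg(u^\ast)/(1-\eps_1)$, which follows directly from the termination inequality $(1-\eps_1)\norm{Dg_{u^\ast}}\le\abs{\innerprod{u^\ast}{Dg_{u^\ast}}}=mg(u^\ast)$.

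The real gap is the main case $a_V,a_W>0$. You correctly compute the tangential Hessian $\T{\xi}D^2g_u\xi = m(m-1)(a_V s^{m-2}t^2+a_W t^{m-2}s^2)$ and observe it equals $m(m-1)g$ at the interior critical point $\theta^\ast$. But then you invoke ``continuity'' for an unspecified excluded neighborhood and assert that on its complement ``the same algebraic manipulation used in the $a_V\le 0$ case'' applies. Neither step is carried out quantitatively, and the $a_V\le 0$ manipulation relied on $a_W t^{m-2}-a_V s^{m-2}\ge a_W t^{m-2}$ and $g\le a_W t^m$, both of which fail when $a_V>0$. Without explicit bounds you cannot extract the constant in $\min(s,t)\le 4\sqrt{\eps_1}$. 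The paper avoids this partition-into-regions entirely: it assumes $s,t\ge 16\sqrt{\eps_1}$ and shows the Hessian condition must then be violated, by writing the tangential Hessian as $m(m-1)(a_V s^{m-1},a_W t^{m-1})\cdot(t^2/s,\,s^2/t)$, using the first-order condition to decompose $(a_V s^{m-1},a_W t^{m-1})=(1-\eps)\norm{\cdot}(s,t)+\sqrt{2\eps-\eps^2}\norm{\cdot}\,r$ with $r\perp(s,t)$ and $\eps\le\eps_1$, and bounding the error term by $\sqrt{2\eps_1}\norm{\cdot}(1/s+1/t)\le\sqrt{2\eps_1}\norm{\cdot}\cdot 2/(16\sqrt{\eps_1})$. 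That substitution, and the way the assumed lower bound on $s,t$ cancels the $\sqrt{\eps_1}$, is the quantitative ingredient your argument is missing.
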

\begin{proof}
  We proceed as in Lemma \ref{lemma:support}. $u^\ast$ lies on a curve
  $C=\{s (u^\ast_V)^0 + t (u^\ast_W)^0: s^2+t^2=1, s\ge 0,t \ge
  0\}$. We will show that neither $s$ nor $t$ is bounded away from 0
  and 1.

  Restricted to the curve $g(u)=g(s,t) = a_v s^m + a_w t^m$. Suppose
  that $a_w \le 0$, then we must have that $s \ge (\eta/M)^{1/m}$. In
  this case, a direct calculation comparing $\innerprod{Dg_u}{u} = m
  a_v s^{m-1} + m a_w t^{m-1}$ with $\norm{Dg_u}=m\sqrt{a_v^2
    s^{2(m-1)} + a_w^2 t^{2(m-1)}}$ will yield $s \ge 1 -
  2\eps_1$. Thus, we may assume that both $a_v$ and $a_w$ are
  positive, and that $a_v \ge a_w$.

  Suppose that for a unit vector $u$, we have $s,t \ge 16 \sqrt{\eps_1}$,
  and the first-order gradient condition:
  \begin{align*}
    \frac{\innerprod{Dg_u}{u}}{\norm{Dg_u}} \ge 1 - \eps_1,
  \end{align*}
  then,
  \begin{align*}
    \lambda_{\max} (D^2g_u) \ge \frac{3m(m-1)g(u)}{4}
  \end{align*}
  (where the eigenvalue is taken only in the subspace orthogonal to
  $u$). Thus, the algorithm continues making progress at such a vector
  $u$. To do this, we lower bound the top eigenvalue by the quadratic
  form in the direction $-tu_V^0 + su_W^0$, which is orthogonal to
  $u$.
  \begin{align*}
    \lambda_{\max} (D^2g_u) & \ge (-tu_V^0 + su_W^0)^T D^2g_u (-tu_V^0
    + su_W^0) \\
    & = m (m-1) ( a_v s^{m-2} t^2 + a_w s^2 t^{m-2} )\\
    & = m(m-1) ( a_v s^{m-1},a_w t^{m-1}) \left( \begin{array}{c}
        t^2/s \\ s^2 /t \end{array} \right)
  \end{align*}
  By construction, $u$ has two nonzero coordinates, taking values $s$
  and $t$ and all other coordinates zero. $Dg_u$ has partial derivatives
  $ma_vs^{m-1}$ and $ma_wt^{m-1}$ in these directions. Thus,
  \begin{align*}
    \frac{\innerprod{Dg_u}{u}}{\norm{Dg_u}} \le \frac{
      \left( \begin{array}{c} a_v s^{m-1} \\ a_w t^{m-1} \end{array}
      \right)^T \left( \begin{array}{c} s \\ t \end{array}
      \right)}{\norm{\left( \begin{array}{c} a_v s^{m-1} \\ a_w
            t^{m-1} \end{array} \right)}}
\end{align*}
Thus we obtain the condition that:
\begin{align*}
    \left( \begin{array}{c} a_v s^{m-1} \\ a_w t^{m-1} \end{array}
    \right) 
    = (1-\eps) \norm{\left( \begin{array}{c} a_v s^{m-1} \\ a_w t^{m-1} \end{array}
    \right)} \left( \begin{array}{c} s \\
        t \end{array} \right) + \sqrt{2\eps - \eps^2} \norm{\left( \begin{array}{c} a_v s^{m-1} \\ a_w t^{m-1} \end{array}
    \right)} r
  \end{align*}
  where $0 \le \eps \le \eps_1$ and $r$ is a unit vector orthogonal to
  $(s,t)$. Substituting this into the previous equation:
  \begin{align*}
    \lambda_{\max}(D^2g_u) & \ge m(m-1) \left[ (1-\eps) \norm{\left( \begin{array}{c} a_v s^{m-1} \\ a_w t^{m-1} \end{array}
    \right)} +
      \sqrt{2\eps-\eps^2} \norm{\left( \begin{array}{c} a_v s^{m-1} \\ a_w t^{m-1} \end{array}
    \right)} r^T \left( \begin{array}{c} t^2 / s \\ s^2 /
        t \end{array} \right) \right] \\
  & \ge m(m-1) \norm{\left( \begin{array}{c} a_v s^{m-1} \\ a_w t^{m-1} \end{array}
    \right)}\left( (1-\eps) - \sqrt{2 \eps-
      \eps^2} \left( \frac{1}{s} + \frac{1}{t}\right)
  \right) \\
  & \ge m(m-1) \norm{\left( \begin{array}{c} a_v s^{m-1} \\ a_w t^{m-1} \end{array}
    \right)} \left( (1-\eps) - 2\sqrt{2\eps-\eps^2} \frac{1}{16
    \sqrt{\eps}}\right) \\
& \ge \frac{3m(m-1)g(u)}{4}
  \end{align*}
  where the last estimate follows from the Cauchy-Schwartz inequality.
\end{proof}

\subsection{Approximate moments and approximate local optima}\label{sec:approx}
By using the robust Schwartz Zippel Lemma (Lemma \ref{lemma:sz})
instead of the usual form, and \textbf{LocalOpt} at Lines 10 and 11 of
\textbf{FindBasis} and Lines 13 and 15 of \textbf{ExtendBasis}, we can
obtain an efficient randomized algorithm. The major difficulty
remaining is that we must bound the error incurred every time we call
\textbf{LocalOpt}. The error analysis is technical: the idea is to
obtain approximate versions of Lemmas \ref{lemma:representation} and
\ref{lemma:support}, and to show that \textbf{LocalOpt} behaves well
on these approximate versions. Consider the first iteration:

\begin{lemma}[Two steps]\label{lemma:onestep}
  Let $x$ have the same first $m-1$ moments as a Gaussian but a
  different $m^{th}$ moment.  Let $u_1 = \sqrt{1-\delta} v_1 -
  \sqrt{\delta} w_1$ be the vector found in the first iteration of
  \textbf{FindBasis}, where $v_1$ and $w_1$ are unit vectors in $V$
  and $W$ respectively.  Suppose we run \textbf{LocalOpt} on
  $g(u)=f_m(u)-\gamma_m \norm{u}^m$ on the orthogonal subspace
  $u_1^\perp$, starting from a point $u$ where $g(u) \ge
  \eta=M\delta^{1/16}$, setting $\eps_1 \le \frac{m
    \eta^{2/(m-2)}}{M^{2/(m-2)}}-60m^2M^2\delta^{5/16}$ as the error
  parameter in \textbf{LocalOpt}. After $\poly(n,1/\eps_1,\eta)$
  iterations, we will have a point $u^\ast$ where either
  $\norm{\pi_V(u^\ast)} \ge 1 - \delta^{1/8}$ or $\norm{\pi_W(u^\ast)}
  \ge 1 - \delta^{1/8}$
\end{lemma}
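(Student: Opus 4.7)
The plan is to adapt the argument of Lemma \ref{lemma:exactapprox} while tracking two new sources of error: (i) the search is restricted to the codimension-one sphere in $u_1^\perp$ rather than the full unit sphere, and (ii) the vector $u_1 = \sqrt{1-\delta}\,v_1 - \sqrt{\delta}\,w_1$ found in the first iteration lies only $\sqrt{\delta}$-close to the exact subspaces $V$ and $W$. Since Lemma \ref{lemma:representation} holds pointwise, for every unit $u \in u_1^\perp$ I still have
\begin{align*}
g(u) = \norm{u_V}^m a_v(u_V^0) + \norm{u_W}^m a_w(u_W^0),
\end{align*}
where the only consequence of $u \perp u_1$ is the coupling $\sqrt{1-\delta}\,\innerprod{u_V}{v_1} = \sqrt{\delta}\,\innerprod{u_W}{w_1}$, a constraint of order $\sqrt{\delta}$ between the two marginals of $u$.

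Next I would imitate Lemma \ref{lemma:exactapprox} by considering, at a candidate point $u^*$, the two-dimensional spherical curve $C = \{s u_V^{*0} + t u_W^{*0} : s^2+t^2=1\}$. The auxiliary tangent direction used there to witness large second derivatives, $-t u_V^{*0} + s u_W^{*0}$, is automatically orthogonal to $u^*$, but its inner product with $u_1$ is of order $\sqrt{\delta}$ rather than zero. Projecting this direction onto the admissible tangent space $u^{*\perp} \cap u_1^\perp$ perturbs it by $O(\sqrt{\delta})$, which propagates to an additive error of order $m^2 M \sqrt{\delta}$ in the lower bound for $\lambda_{\max}(D^2 g_u)$ on the restricted tangent plane, and to comparable lower-order error in the gradient-alignment test.

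I would then rerun the case analysis of Lemma \ref{lemma:exactapprox} with these perturbed estimates. If $a_w \le 0$ the argument forcing $s \ge (\eta/M)^{1/m} = \delta^{1/(16m)}$ still goes through, and the first-order test, weakened by $O(\sqrt{\delta})$, yields $s \ge 1 - O(\eps_1) - O(\sqrt{\delta}) \ge 1 - \delta^{1/8}$. If both $a_v, a_w > 0$ and $s,t \ge \delta^{1/16}$, the perturbed tangent-plane Hessian bound reads, qualitatively,
\begin{align*}
\lambda_{\max}(D^2 g_u) \ge \tfrac{3m(m-1)g(u)}{4} - C m^2 M \sqrt{\delta}
\end{align*}
for an absolute constant $C$. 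The threshold $\eps_1 \le m\eta^{2/(m-2)}/M^{2/(m-2)} - 60 m^2 M^2 \delta^{5/16}$ is calibrated so that this lower bound still exceeds the \textbf{LocalOpt} trigger $\eps_2$ whenever $\min(s,t) \ge \delta^{1/16}$. Hence \textbf{LocalOpt} continues making progress until $\min(s,t) \le \delta^{1/16}$, and squaring gives $\norm{\pi_V(u^*)} \ge 1 - \delta^{1/8}$ or $\norm{\pi_W(u^*)} \ge 1 - \delta^{1/8}$. Termination in $\poly(n, 1/\eps_1, \eta)$ iterations follows from Lemma \ref{lemma:gradient} applied on the codimension-one sphere, since the starting energy $g(u) \ge \eta = M \delta^{1/16}$ is preserved under the restriction.

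The main obstacle is the careful bookkeeping of the $\sqrt{\delta}$-error through every step of the first- and second-derivative argument, and verifying that the particular exponents in the statement fit together: $\eta = M\delta^{1/16}$ fixes the starting energy scale, the Hessian loses a term of order $m^2 M \sqrt{\delta} = m^2 M \delta^{8/16}$ from the projection of the auxiliary tangent direction back into $u_1^\perp$, and the subtraction $60 m^2 M^2 \delta^{5/16}$ in the $\eps_1$ bound is precisely what is needed so that the residual Hessian still dominates $\eps_2$ down to the boundary $\min(s,t) = \delta^{1/16}$ at which the desired conclusion $1-\delta^{1/8}$ kicks in.
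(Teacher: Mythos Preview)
Your strategy differs from the paper's in a meaningful way. The paper does \emph{not} reuse the curve $C=\{s\,u_V^{*0}+t\,u_W^{*0}\}$ from Lemma~\ref{lemma:exactapprox} and project tangents back into $u_1^\perp$. Instead it builds a new orthonormal basis containing $u_1$ and $\hat u_1=\sqrt{\delta}\,v_1+\sqrt{1-\delta}\,w_1$, writes $f_m$ explicitly on $u_1^\perp$ in these coordinates, and constructs a curve (with parameters $y,z$ and an auxiliary $\alpha$) lying \emph{entirely inside} $u_1^\perp$. The price is that along this curve the coefficient $a_v$ is no longer constant but a function $a_v(y,z)$; the bulk of the paper's proof is two technical claims bounding $|\partial a_v/\partial y|$, $|\partial a_v/\partial z|$, and all second partials by $O(m^2M\sqrt\delta)$ via direct tensor calculus, after which the computation of Lemma~\ref{lemma:exactapprox} goes through with these extra error terms. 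Your projection idea, if it works, sidesteps those derivative bounds entirely, which would be a genuine simplification.

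There is, however, a gap in your gradient step. The first-order test in \textbf{LocalOpt} on $u_1^\perp$ involves the \emph{restricted} gradient $\pi_{u_1^\perp}(Dg_{u^*})$, and to deduce the two-component alignment $(a_v s^{m-1},a_w t^{m-1})\approx\|(\cdot)\|(s,t)$ you need $\|\pi_{u_1^\perp}(Dg_{u^*})\|\ge m\|(a_v s^{m-1},a_w t^{m-1})\|-(\text{small})$. But
\[
\langle Dg_{u^*},u_1\rangle=\sqrt{1-\delta}\,m s^{m-1}A_V(u_V^{*0},\ldots,u_V^{*0},v_1)-\sqrt{\delta}\,m t^{m-1}A_W(u_W^{*0},\ldots,u_W^{*0},w_1)
\]
is $O(mM)$ in general, not $O(\sqrt\delta)$, so the restricted and unrestricted gradients need not be close and your ``comparable lower-order error in the gradient-alignment test'' is unjustified as stated. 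The repair is to lower-bound $\|\pi_{u_1^\perp}(Dg)\|$ directly via its components along the two orthonormal vectors $u^*$ and $\tilde w/\|\tilde w\|$ you already have in $u_1^\perp$: these recover $m(a_v s^m+a_w t^m)$ and $m(-t\,a_v s^{m-1}+s\,a_w t^{m-1})$ up to $O(mM\delta^{7/16})$ when $s,t\ge\delta^{1/16}$, and their Pythagorean sum is exactly $m\|(a_v s^{m-1},a_w t^{m-1})\|$. With that computation inserted your route does close, but the proposal as written skips the one place where the projection error is \emph{not} of order $\sqrt\delta$.
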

\ifthenelse{\boolean{longversion}}{ 
Tthe sequence of ideas in this proof is not unlike the proofs in
Section \ref{sec:basis}: first we derive a nice representation of
$f_m$ (cf Lemma \ref{lemma:representation}, then we analyse the
support of a local optimum under this representation (cf Lemma
\ref{lemma:support}) -- we are not able to claim that the local
optimum found is contained wholly in $V$ or $W$, but since we are
satisfied with approximate local optima, we can bound the components
around 0 and 1. All through this, we must bound the error, and try to
push through the calculations of Lemma \ref{lemma:exactapprox}.

\begin{proof}[Proof of Lemma \ref{lemma:onestep}]
  First, we will construct an orthonormal basis which includes $u_1$:
  extend $\{v_1\}$ and $\{w_1\}$ to orthonormal bases $\{v_i\}$ and
  $\{w_i\}$ of $V$ and $W$ respectively. Replace $v_1$ and $w_1$ with
  the following two vectors:
\begin{align*}
u_1 & = \sqrt{1-\delta} v_1 - \sqrt{\delta} w_1 \\
\hat{u_1} & = \sqrt{\delta} v_1 + \sqrt{1-\delta} w_1 \\
\end{align*}
Thus our basis will be $\{u_1,\hat{u}_1,v_2,\ldots,v_k,
w_2,\ldots,w_l\}$. For a vector $x=(x_1,\ldots,x_n)$ in the basis of
$\{v_i\}$ and $\{w_i\}$, we now have:
\begin{align*}
x=( \sqrt{1-\delta}x_1 - \sqrt{\delta} x_2, \sqrt{\delta} x_1 +
\sqrt{1-\delta} x_2,x_3,\ldots,x_n)
\end{align*}
which is simply a rotation (unitary transformation) in the first two
coordinates.

We evaluate the $m^{th}$ moment on the subspace orthogonal to
$u_1$. Let $\xi$ be a point on this orthogonal subspace: note that
$\xi$ has 0 component in the first coordinate:
\begin{align*}
f_m(\xi) & = \E{(\T{x}\xi)^m} \\
 &= \E{ \left( \sqrt{\delta} x_1 \xi_2 + \sqrt{1-\delta} x_2 \xi_2 + \sum_{i=2}^{k}
   x_{v_i} \xi_{v_i} + \sum_{i=2}^{l} x_{w_i} \xi_{w_i} \right)^m} \\
\end{align*}
We can break up the argument into two dot products, which are
independent of each other. Moreover, observe that the norm of the two
constituent parts of the $\xi$ vector taken together is still 1.
\begin{align*}
  \T{x}\xi = \T{(x_1,x_{v_2},\ldots,x_{v_k})}(\sqrt{\delta} \xi_2,
  \xi_{v_2},\ldots,\xi_{v_k}) +
  \T{(x_2,x_{w_2},\ldots,x_{w_l})}(\sqrt{1-\delta}\xi_2,\xi_{w_2},\ldots,\xi_{w_l})
\end{align*}
Hence, we can apply Lemma \ref{lemma:representation}: this gives a
perturbed version of Lemma \ref{lemma:representation}.
\begin{align*}
  f_m(\xi) = \left( \delta \xi_2^2 + \sum_{i=2}^k \xi_{v_i}^2
  \right)^{m/2} \E{(\T{(x_1,x_{v_2},\ldots,x_{v_k})}(\sqrt{\delta}
    \xi_2, \xi_{v_2},\ldots,\xi_{v_k})^0)^m-\gamma_m} + \\
  \left( (1-\delta) \xi_2^2 + \sum_{i=2}^l \xi_{w_i}^2 \right)^{m/2}
  \E{\left( \T{(x_2,x_{w_2},\ldots,x_{w_l})}
      (\sqrt{1-\delta}\xi_2,\xi_{w_2},\ldots,\xi_{w_l})^0 \right)^m -
    \gamma_m} + \gamma_m
\end{align*}

Fixing a point $\xi^\ast \in u_1^\perp \cap \mathbb{S}^{n-1}$: we will
give a curve $C$ which passes through this point and remains on the
unit sphere. We will analyse the value of the $g(\xi) = f_m(\xi) -
\gamma_m \norm{\xi}^m$ on this curve -- as before, every point which is
a local optimum on $\mathbb{S}^{n-1}$ has to be a local optimum on $C$
as well. Thus by studying the local optima over $C$, we will be able
to describe the strucure of the local optima in full space.

We may assume that all the $\xi^\ast_i$ are nonnegative -- otherwise
we can pick simply negate the associated basis vector. We take the
following as the components for $C$:
\begin{align*}
  \xi_v^\ast &= \frac{1}{\sqrt{\sum_{i=2}^k (\xi_{v_i}^\ast)^2}}(
  0,0,\xi_{v_2}^\ast,\ldots,\xi_{v_k}^\ast,0,\ldots,0) \\
  \xi_w^\ast & = \frac{1}{\sqrt{(1-\delta)(\xi_2^\ast)^2+\sum_{i=2}^l (\xi_{w_i}^\ast)^2}} (
  0,\sqrt{1-\delta}\xi_2^\ast,0,\ldots,0,\xi_{w_2}^\ast,\ldots,\xi_{v_l}^\ast)\\
  \xi_1^\ast & = ( 1, 0, \ldots, 0 )
\end{align*}
Since these are the only three directions that change along $C$, we
will use these three vectors as an orthonormal basis. Now, defining
the following quantity:
\begin{align*}
  \alpha = (\xi_2^\ast)^2 / \left((1-\delta)(\xi_2^\ast)^2 +
    \sum_{i=2}^l (\xi_{w_i}^\ast)^2 \right)
\end{align*}
we can write our curve $C$ as:
\begin{align*}
  C = \{ y \xi_v^\ast + z \xi_w^\ast + \sqrt{\alpha \delta} z
  \xi_1^\ast:y^2+(1+\delta \alpha )z^2=1,y,z\ge 0\}
\end{align*}
Specifically, we will use the basis $\xi_v^\ast$ and $(1+\alpha
\delta)^{-1} (\xi_w^\ast + \xi_1^\ast)$. Note that in this basis, $y$
is precisely the coordinate along the first basis vector and
$(1+\delta \alpha)^{1/2}z$ is the coordinate along the second basis
vector. Denote this latter quantity by $z'$, then by the chain rule,
we have that $\partial / \partial z' = (1+\delta
\alpha)^{-1/2} \partial / \partial z$. 

Restricted to $C$, the expectation terms simplify: note that
$(\sqrt{1-\delta}\xi_2,\xi_{w_2},\ldots,\xi_{w_l})^0$ remains constant
on $C$, so the second expectation term reduces to a constant, which we
will denote with $a_w$. The first expectation term does not remain
constant, because there is an additional component in the direction of
$v_1$, but this component always has a small magnitude. With a change
of basis, we can simplify this expression to involving only $y$ and
$z$:
\begin{align*}
  \E{(\T{(x_1,x_{v_2},\ldots,x_{v_k})}(\sqrt{\delta}
    \xi_2, \xi_{v_2},\ldots,\xi_{v_k})^0)^m} = \E{\left(
      (x_1,x_{\xi_v^\ast})^T(\sqrt{\alpha \delta} z, y) \right)^m}
\end{align*}
We will denote the first expectation term by $a_v$. In full, our
objective function on $C$ is given by:
\begin{align*}
  g(\xi) & = [ \delta \alpha z^2 + y^2 ]^{(m/2)} \E{\left(
      \T{(x_1,x_v)}(\sqrt{\alpha \delta}z,y)^0 \right)^m -
    \gamma_m } + a_w z^m \\
  & = [ \delta \alpha z^2 + y^2 ]^{(m/2)} a_v(y, z) +
  a_w z^m
\end{align*}
Next we will examine the local optima on $C$: let $\xi$ be the output
of of \textbf{LocalOpt}: we will show that $\xi$ has large projection
with either the $V$ or $W$ subspace (cf Lemma \ref{lemma:support}). We
will analyse the following cases:
\begin{enumerate}
\item $y^2 \le \delta^{1/4}$ or $z^2 \le \delta^{1/4}$.
\item $y^2 \ge \delta^{1/4}$ and $z^2 \ge 1/3$.
\item $z^2 \ge \delta^{1/4}$ and $y^2 \ge 1/3$.
\end{enumerate}
Case 1: Suppose that $y^2 \le \delta^{1/4}$, then we must have
$z \ge \sqrt{1-\delta^{1/4}-\alpha\delta}$. The approximate local optimum $u$
that we compute has projection at least $\sqrt{1-\delta}$ on this
local optimum, and hence, the projection of $u$ onto $w$ is at least:
\begin{align*}
  \norm{\pi_W(u)} & \ge \sqrt{(1-\delta)(1-\delta^{1/4}-\alpha\delta)} -
  \sqrt{\delta} \\
  & \ge 1-\delta/2 - \delta^{1/4}/2 - \alpha\delta - \sqrt{\delta} \\
  & \ge 1 - \delta^{1/4}
\end{align*}
In this case, for sufficiently small $\delta$, we have:
\begin{align*}
  \norm{\pi_W(u)}^2 \ge 1- \delta^{1/8}
\end{align*}
The argument for when $z^2 \le \delta^{1/4}$ is identical.  

Case 2: We will prove that \textbf{LocalOpt} can not terminate in this
region by carrying out the calculations of Lemma
\ref{lemma:exactapprox} whilst keeping track of errors. Thus, let
$\xi$ be a point in this range, we will show that if the first
derivative condition in \textbf{LocalOpt} is satisfied, then the
second derivative condition is unsatisfied, thus \textbf{LocalOpt} can
not terminate at $\xi$. First, let us examine how $f$ changes over
$C$:
\begin{claim}[First partials under perturbations] \label{lemma:one}
  In the range where $y^2 \ge \delta^{1/4}$ and $z^2 \ge 1/3$, 
  \begin{align*}
    \abs{\frac{\partial g}{\partial y} - ma_vy^{m-1}} & \le 3 m M
    \sqrt{\delta} \\
    \abs{ \frac{\partial g}{\partial z} - m a_w z^{m-1}}& \le 4 m M
    \sqrt{\delta}
  \end{align*}
\end{claim}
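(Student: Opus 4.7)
The plan is to differentiate $g(\xi) = L^m a_v(y,z) + a_w z^m$ with $L = \sqrt{\delta \alpha z^2 + y^2}$ directly, isolate the claimed leading terms $m a_v y^{m-1}$ and $m a_w z^{m-1}$, and bound each of the finitely many remainder pieces by a small multiple of $mM\sqrt{\delta}$. Throughout I will use the bounds available in the stated range: on the curve $C$ we have $y^2 + z^2 \le 1$ (so $L \le 1$), while $\alpha \le 1/(1-\delta)$, $|a_v|, |a_w| \le M$, and crucially $a_w$ is a constant of $(y,z)$ because the unit vector in the $W$-direction does not move along $C$; only $a_v$ depends on $(y,z)$.

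For $\partial g/\partial y$, the product rule gives $\partial g/\partial y = m L^{m-2} y\, a_v + L^m\, \partial a_v/\partial y$. The first summand differs from $m a_v y^{m-1}$ by $m a_v y (L^{m-2} - y^{m-2})$; since $L^2/y^2 = 1 + \delta\alpha z^2/y^2 = 1 + O(\delta^{3/4})$ in the range $y^2 \ge \delta^{1/4}$, a single Taylor step bounds this by $O(m^2 M \delta^{3/4})$, which for sufficiently small $\delta$ is absorbed into $mM\sqrt{\delta}$. For $\partial g/\partial z$ the analogous product-rule calculation yields an additional term $m L^{m-2} \delta \alpha z\, a_v$, bounded directly by $O(mM\delta)$.

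The main obstacle is the chain-rule term $L^m \partial a_v/\partial y$ (and its $\partial/\partial z$ analogue). Writing $a_v(y,z) = h(u(y,z))$ with $u(y,z) = (\sqrt{\alpha\delta}z, y)/L$ a unit vector in $\R^2$ and $h(u) = \E{(\T{(x_1, x_v)} u)^m} - \gamma_m$ a degree-$m$ polynomial on the circle (so $|h| \le M$ and $\|\nabla h\| \le mM$), direct differentiation of a unit vector gives $|\partial u/\partial y| = \sqrt{\delta\alpha}\, z / L^2$ and $|\partial u/\partial z| = \sqrt{\delta\alpha}\, y / L^2$. The chain rule then yields $|\partial a_v/\partial y| \le mM\sqrt{\delta}/L^2$, and multiplying by $L^m$ collapses to $L^{m-2} \cdot mM\sqrt{\delta} \le mM\sqrt{\delta}$ because $L \le 1$. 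The reason this bound holds without a spurious $1/y$ blowup is that $\partial u/\partial y$ already carries an explicit factor of $\sqrt{\delta}$ from the small first component of $u$, so the $1/L^2$ from the Jacobian of normalization is exactly cancelled by $L^m$; this is where the constraint $z^2 \ge 1/3$ (which keeps $y^2$ bounded away from $1$) is used to ensure the estimate is not vacuous.

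Combining the pieces, $\partial g/\partial y$ differs from $m a_v y^{m-1}$ by at most $O(m^2 M \delta^{3/4}) + mM\sqrt{\delta} \le 3mM\sqrt{\delta}$ for small enough $\delta$, and $\partial g/\partial z$ differs from $m a_w z^{m-1}$ by the same plus an extra $O(mM\delta)$ from the $m L^{m-2}\delta\alpha z\, a_v$ term, giving $4mM\sqrt{\delta}$, as claimed.
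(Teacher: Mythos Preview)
Your proof is correct and follows essentially the same product-rule decomposition as the paper: write $\partial g/\partial y = mL^{m-2}y\,a_v + L^m\,\partial a_v/\partial y$ (and similarly for $z$), bound the first summand against $ma_vy^{m-1}$, and bound the chain-rule piece separately. Your handling of $L^m\,\partial a_v/\partial y$ is a cleaner repackaging of the paper's computation: the paper expands $a_v = L^{-m}A(\phi)-\gamma_m$ and differentiates the tensor expression term by term, producing several pieces like $A(\phi,e_1)$ and $A(\phi,e_2)$ to normalize and bound individually; you instead write $a_v = h(u)$ with $u=\phi/L$ on the unit circle and invoke the closed-form $|\partial u/\partial y| = \sqrt{\alpha\delta}\,z/L^2$, which collapses those pieces into a single inequality. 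Both arrive at the same $mM\sqrt{\delta}$-scale bound.

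Two minor remarks. First, your bound $|mL^{m-2}y\,a_v - ma_vy^{m-1}| = O(m^2M\delta^{3/4})$ is looser than necessary; a direct mean-value estimate on $L^{m-2}-y^{m-2}$ gives $O(m^2M\delta)$ without invoking $y^2\ge\delta^{1/4}$ at all (the paper just asserts $mM\sqrt{\delta}$ here). Second, your aside that ``this is where the constraint $z^2\ge 1/3$ is used'' is not right: neither your argument nor the paper's actually uses $z^2\ge 1/3$ in proving this claim (it enters later in the second-derivative analysis). These do not affect correctness.
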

As a corollary, via the triangle inequality, we have that:
\begin{align*}
  \norm{ (g_y,g_{z})} \ge m \norm{ (a_v y^{m-1}, a_w z^{m-1})} - 5 mM
  \sqrt{\delta}
\end{align*}

\begin{claim}[Second partials under perturbations] \label{lemma:two}
  In the range where $y^2 \ge \delta^{1/4}$ and $z^2 \ge 1/3$:
  \begin{align*}
    \abs{ \frac{\partial^2 g}{\partial y^2} - m(m-1)a_v
      y^{m-2}} & \le
    c_{vv} m^2 M \sqrt{\delta} \\
    \abs{ \frac{\partial^2 g}{\partial z^2} - m(m-1)a_w z^{m-2}} &
    \le
    c_{ww} m^2 M \sqrt{\delta} \\
    \abs{ \frac{\partial^2 g}{\partial y \partial z}} & \le c_{vw}
    m^2 M \sqrt{\delta}
  \end{align*}
  where $c_{vv}$, $c_{vw}$ and $c_{ww}$ are absolute constants bounded
  by 20.
\end{claim}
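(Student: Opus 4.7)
The plan is to decompose $g$ into an explicit principal part plus a remainder with transparent $\sqrt\delta$ dependence. Starting from
\[
g(y,z)=\E{(\sqrt{\alpha\delta}z\,x_1+y\,x_v)^m}-\gamma_m(\alpha\delta z^2+y^2)^{m/2}+a_w z^m
\]
(obtained by multiplying $[\alpha\delta z^2+y^2]^{m/2}$ through the definition of $a_v(y,z)$), I multinomially expand the first expectation and separate off the $i=0$ summand $y^m\E{x_v^m}$. Since $y^2\ge\delta^{1/4}>0$ throughout the range, $(\alpha\delta z^2+y^2)^{m/2}$ is smooth and, by the binomial theorem for even $m$ or the mean-value theorem for odd $m$, equals $y^m$ plus a correction proportional to $\alpha\delta z^2$. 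Combining the two $y^m$ contributions yields
\[
g(y,z)=a_v y^m+a_w z^m+R(y,z),
\]
where $a_v=\E{x_v^m}-\gamma_m$ and every summand of $R$ carries an explicit factor $(\alpha\delta)^{i/2}$ with $i\ge 1$.

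Second partials of the principal piece $a_v y^m+a_w z^m$ give exactly $m(m-1)a_v y^{m-2}$, $m(m-1)a_w z^{m-2}$, and zero cross partial, which are precisely the main terms stated in the claim. It remains to bound $R_{yy}$, $R_{zz}$, and $R_{yz}$ by $O(m^2 M\sqrt\delta)$. A typical polynomial summand $\binom{m}{i}(\alpha\delta)^{i/2}z^i y^{m-i}\E{x_1^i x_v^{m-i}}$ has second partials bounded by $m^2\binom{m}{i}(\alpha\delta)^{i/2}M$, since $|y|,|z|\le 1$, the moment factor is at most $M$, and any derivative whose surviving exponent would turn negative vanishes identically (the relevant monomial is linear or constant in that variable). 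Telescoping $\sum_{i\ge 1}\binom{m}{i}(\alpha\delta)^{i/2}=(1+\sqrt{\alpha\delta})^m-1$ and using that $\alpha\delta$ is small produces the $\sqrt\delta$ factor. The Gaussian-correction part $\gamma_m[(\alpha\delta z^2+y^2)^{m/2}-y^m]$ is treated by differentiating $\phi(y,z)=(\alpha\delta z^2+y^2)^{m/2}$ directly, using $\phi_{yy}=m\,\phi_{m-2}+m(m-2)y^2\,\phi_{m-4}$ with $\phi_k=(\alpha\delta z^2+y^2)^{k/2}$, and subtracting the corresponding $y$-derivatives via the mean-value theorem, producing a bound of order $\alpha\delta$ times bounded factors.

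The main obstacle is the potentially singular factor $(\alpha\delta z^2+y^2)^{(m-4)/2}$ that appears inside $\phi_{yy}$ when $m\le 3$. The range assumption $y^2\ge\delta^{1/4}$ is exactly what is needed here: I bound $(\alpha\delta z^2+y^2)^{(m-4)/2}\le y^{m-4}\le\delta^{(m-4)/16}$, so the mean-value correction $\alpha\delta z^2\cdot\delta^{(m-4)/16}=\alpha\,\delta^{(m+12)/16}$ is at least of order $\alpha\sqrt\delta$ for every $m\ge 3$. Once the three bounds on $R_{yy}, R_{zz}, R_{yz}$ are in hand, they combine with the exact principal-part derivatives to give the three inequalities of the claim; the absolute constants $c_{vv}, c_{ww}, c_{vw}$ can then be read off from the leading binomial sums and the mean-value bookkeeping and verified to be at most $20$.
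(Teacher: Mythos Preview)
Your decomposition is genuinely different from the paper's, and as written it does not prove the claim as stated.

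The crucial point is that in the paper $a_v$ is \emph{not} the constant $\E{x_v^m}-\gamma_m$; it is the function
\[
a_v(y,z)=\E{\bigl((x_1,x_v)^{T}(\sqrt{\alpha\delta}\,z,y)^{0}\bigr)^{m}}-\gamma_m,
\]
evaluated at the current point. The claim asserts that $g_{yy}$ is close to $m(m-1)a_v(y,z)\,y^{m-2}$, not to $m(m-1)\bigl(\E{x_v^m}-\gamma_m\bigr)y^{m-2}$. In the range $y^{2}\ge\delta^{1/4}$ the unit vector $\phi^{0}=(\sqrt{\alpha\delta}\,z,y)^{0}$ can differ from $e_v$ by roughly $\sqrt{\alpha\delta}\,z/y\approx\delta^{3/8}$, so $\abs{a_v(y,z)-(\E{x_v^m}-\gamma_m)}$ can be of order $mM\delta^{3/8}$, which is \emph{larger} than $\sqrt\delta$. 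Your principal part is therefore not the one in the statement, and the two inequalities are not interchangeable at the precision claimed.

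Even granting your reinterpretation of $a_v$, the binomial argument loses a factor of $m$ in the constants. Using $\binom{m}{i}(m-i)(m-i-1)=m(m-1)\binom{m-2}{i}$, the polynomial part of $R_{yy}$ is
\[
m(m-1)\sum_{i=1}^{m-2}\binom{m-2}{i}(\alpha\delta)^{i/2}z^{i}y^{m-2-i}\E{x_1^{i}x_v^{m-i}},
\]
whose dominant $i=1$ term is of size $m(m-1)(m-2)M\sqrt{\alpha\delta}$. So you obtain $c_{vv}=O(m)$, not an absolute constant bounded by $20$. The paper avoids both issues by keeping $a_v$ as the function and differentiating the product $(\alpha\delta z^{2}+y^{2})^{m/2}a_v$ directly: the ``main'' piece $m(\alpha\delta z^{2}+y^{2})^{m/2-2}[\alpha\delta z^{2}+(m-1)y^{2}]\,a_v$ then differs from $m(m-1)a_v\,y^{m-2}$ only at order $\delta$ (a full power, not $\sqrt\delta$), while the correction terms involving $(a_v)_y$ and $(a_v)_{yy}$ are bounded termwise by $O(m^{2}M\sqrt\delta)$ via the tensor representation $A(\phi,\ldots,\phi,e_i,e_j)$. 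That is what produces the absolute constants.
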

Throughout the rest of this calculation, we will use the basis of
$n-1$ vectors consisting of $\{\xi_v^\ast$, $(1+\alpha \delta)^{-1}
(\xi_w^* + \xi_1^\ast)\}$, and any orthonormal extension to
$u_1^\perp$. In particular, in this basis, $\xi =
(y,z',0,\ldots,0)$.

As before, we will lower bound the contribution of the second
derivative term. Our direction of movement will be
$(-z',y,0,\ldots,0)$. This vector is clearly a unit vector orthogonal
to $\xi$.
where top eigenvalue is taken orthogonal to $\xi$.
\begin{align*}
  \lambda_{\max} (D^2g_\xi) & \ge (-z',y) D^2g_\xi
  \left( \begin{array}{c} -z' \\ y \end{array} \right) \\
&  \ge
  ( - \sqrt{1 + \alpha \delta} z, y)
\left( 
\begin{array}{cc}
g_{yy} & g_{z'y} \\
g_{z'y} & g_{z'z'}
\end{array}
\right)
\left(
\begin{array}{c}
- \sqrt{1+\delta \alpha}z \\
y
\end{array}
\right)
\end{align*}
We can further use Claim \ref{lemma:two} to simplify the other
components of the quadratic form:
\begin{align*}
  \lambda_{\max} (D^2g_\xi) & \ge  (1+\delta \alpha) z^2 g_{yy} +
  y^2 g_{z'z'} - 2c_{vw}m^2M\sqrt{\delta}\\
& \ge (1+\delta \alpha )m(m-1) ( a_v y ^{m-1}, a_w z^{m-1} ) \left( \begin{array}{c} z^2
    / y \\ y^2 / z \end{array} \right) - (1+\delta \alpha)(c_{zz} + c_{yy} + 2
c_{zw})m^2 M \sqrt{\delta}
\end{align*}
Our first derivative condition is given by:
\begin{align*}
  \frac{\innerprod{Dg_\xi}{\xi}}{\norm{Dg_\xi}} \ge 1 - \eps_1
\end{align*}
Since $\xi=(y,z',0,\ldots,0)$ has only two nonzero components, we need
only evaluate two components of the derivative: furthermore, we can
lower bound the norm $\norm{Dg_\xi} \ge \norm{ (g_y, g_{z'}) }$, which
gives the following lower bound:
\begin{align*}
  \frac{(g_y,g_{z'}) \left( \begin{array}{c} y \\ z' \end{array} \right)}{\norm{( g_y, g_{z'})}} \ge 1 - \eps_1
\end{align*}
Rearranging, and applying Claim \ref{lemma:one} yields:
\begin{align*}
  m (a_v y^{m-1}, a_w z^ {m-1}) \left( \begin{array}{c} y \\
      z \end{array} \right) & \ge (1 - \eps_1) \norm{(g_y, g_{z'})} - 7
  mM \sqrt{\delta} \\
  & \ge m (1-\eps_1) \norm{ (a_v y^{m-1}, a_w z^{m-1})} - 12 mM
  \sqrt{\delta} \\
  & \ge m ( 1 -\eps_1 - \frac{12M\sqrt{\delta}}{\eta} ) \norm{ (a_v y^{m-1}, a_w z^{m-1})}
\end{align*}
Thus, we can rewrite this relationship for unit vector $r$ orthogonal
to $(a_v y^{m-1}, a_w z^ {m-1})$ and $0 \le \eps \le \eps_1 +
\frac{12M\sqrt{\delta}}{\eta}$:
\begin{align*}
\left( \begin{array}{c} a_v y^{m-1} \\ a_w z^{m-1} \end{array} \right)
= (1-\eps) \norm{ (a_v y^{m-1}, a_w z^{m-1})} \left( \begin{array}{c} y \\ z \end{array} \right) + \sqrt{2\eps-\eps^2} \norm{ (a_v y^{m-1}, a_w z^{m-1})}r
\end{align*}
Substituting this back into our lower bound for $\lambda_{\max}$
yields:
\begin{align*}
\lambda_{\max} & \ge (1+\delta \alpha) (1- \eps)\norm{ (a_v y^{m-1}, a_w
  z^{m-1})} (z^2 + y^2) - \sqrt{2\eps-\eps^2} \norm{ (a_v y^{m-1}, a_w
  z^{m-1})} \left( \frac{1}{y} + \frac{1}{z} \right) \\
& \qquad -80 m^2M \sqrt{\delta} \\
& \ge (1+\delta \alpha ) ( 1 - \delta^{1/6}) m (m-1)f(\xi) - \sqrt{2}
\delta^{1/24} - 80 m^2M \sqrt{\delta} \\
& \ge \frac{3}{4} m( m-1) f(\xi)
\end{align*}
where we used the Cauchy-Schwartz inequality for:
\begin{align*}
  \norm{ ( a_v y^{m-1}, a_w z_m-1)} \ge a_v y^m + a_w z^m \ge g(\xi) -
  m M \sqrt{\alpha \delta} 
\end{align*}
Case 3: This case follows from the exactly the same analysis as
above. It is in fact substantially easier, as the denominator terms
$\alpha \delta z + y$ are in fact all bounded by constants now, and
hence the numerator is small enough in almost all cases above to bound
the terms.

We now provide the proofs for the claims regarding the coefficients
$a_v$ and $a_w$. In explicitly taking derivatives, it is important to note the
following:
\begin{align*}
  a_v &= \E{\left( \T{(x_1,x_v)}(\sqrt{\alpha \delta}z,y)^0 \right)^m}
  -
  \gamma_m  \\
  &= \frac{1}{(\alpha \delta z^2 + y^2)^{(m/2)}} \E{\left(
      \T{(x_1,x_v)}(\sqrt{\alpha \delta}z,y) \right)^m} - \gamma_m
\end{align*}
For ease of notation, denote $\phi = (\sqrt{\alpha \delta}z, y)$, we
will suppress all but one $\phi$ argument in our moment tensors, thus
we will write $A(\phi)$ instead of $A(\phi,\ldots,\phi)$, and $A(\phi,
e_1)$ instead of $A(\phi,\ldots,\phi,e_1)$. If $A$ is a $m^{th}$ order
tensor, its derivative has components given by $ (D\hat{A}_\phi
)_i=mA(\phi,e_i) $ where $A$ takes $(m-1)$ copies of $\phi$. We also
have the Hessian $D^2$: $ (D^2A_\phi)_{ij} = m ( m- 1)
A(\phi,e_i,e_j)$. We can bound the spectral norm of $D^2A$ using
Claim \ref{claim:spectral}, which yields $\lambda_{max} (D^2A)
\le m(m-1)M$.
\begin{proof}[Proof of Claim \ref{lemma:one}]
Firstly, we have:
\begin{align*}
  \frac{\partial g}{\partial y} & = my(\delta \alpha z^2 +
  y^2)^{(m/2)-1}a_v + (\delta \alpha z^2 + y^2)^{(m/2)}
  \frac{\partial a_v}{\partial y} \\
  \frac{\partial g}{\partial z} & =mz^{m-1}a_w + m\alpha \delta z(\delta
  \alpha z^2 + y^2)^{(m/2)-1} a_v + (\delta \alpha z^2 + y^2)^{(m/2)}
  \frac{\partial a_v}{\partial z}
\end{align*}
The $m \alpha \delta z(\delta \alpha z^2 + y^2)^{(m/2)-1} a_v$ is
upper bounded in absolute value in $mM \delta$. Similarly, it is also
clear that:
\begin{align*}
  \abs{my(\delta \alpha z^2 + y^2)^{(m/2)-1}a_v - ma_v y^{m-1}} \le
  mM\sqrt{\delta}
\end{align*}
Thus it remains to show that the partial derivative terms are
small:
\begin{align*}
  (\delta \alpha z^2 + y^2)^{(m/2)}\frac{\partial a_v}{\partial y} &
  =(\delta \alpha z^2 + y^2)^{(m/2)}\left[ \frac{-my}{(\delta \alpha
      z^2 + y^2)^{(m/2)+1}}A(\phi,\phi) + \frac{m}{(\delta \alpha z^2
      + y^2)^{(m/2)}}A(\phi,e_1)\right]\\
  & = m \left(
    \frac{-y\sqrt{\alpha \delta} zA(\phi,e_2)
      -y^2A(\phi,e_1)}{\alpha \delta z^2 + y^2}+A(\phi,e_1)\right)\\
  & = m \frac{-y
    \sqrt{\alpha \delta}zA(\phi,e_2) + \alpha \delta z^2
    A(\phi,e_1)}{\alpha
    \delta z^2 + y^2}
\end{align*}
When we have a term like $A(\phi,\ldots,\phi,e_1)$, the arguments are
not normalised. In particular:
\begin{align*}
  A(\phi,\ldots,\phi,e_1) = (\delta \alpha z^2 +
    y^2)^{(m-1)/2} A(\phi^0,\ldots,\phi^0,e_1)
\end{align*}
Thus, normalising gives:
\begin{align*}
  \abs{(\delta \alpha z^2 + y^2)^{(m/2)}\frac{\partial a_v}{\partial
      y}} & \le mM\sqrt{\delta} + m\delta M \\
  & \le 2mM\sqrt{\delta}
\end{align*}
For the other partial derivative, we want to compute:
\begin{align*}
  (\delta \alpha z^2 + y^2)^{(m/2)} \frac{\partial a_v}{\partial z} &
  = (\delta \alpha z^2 + y^2)^{(m/2)}\left[ \frac{-m\alpha \delta
      z}{(\alpha \delta z^2 + y^2 )^{(m/2)+1}} A(\phi,\phi) + \frac{m
      \sqrt{\alpha
        \delta}}{(\alpha \delta z^2 + y^2 )^{m/2}}A(\phi,e_2) \right]\\
  & = m\sqrt{\alpha \delta} \left( \frac{-\sqrt{\alpha \delta} z
      A(\phi,\phi)+\alpha \delta z^2A(\phi,e_2)+y^2A(\phi,e_2)}{\alpha
      \delta z^2 + y^2}\right)
\end{align*}
Applying the same method:
\begin{align*}
  \abs{(\delta \alpha z^2 + y^2)^{(m/2)} \frac{\partial a_v}{\partial
      z}} \le 3mM\sqrt{\delta}
\end{align*}
Hence combining this with our earlier bound, we have the desired
inequality.
\end{proof}

\begin{proof}[Proof of Claim \ref{lemma:two}]
  By direct calculation, we obtain:
\begin{align*}
  \frac{\partial^2 g}{\partial y^2} & = (\delta \alpha z^2 +
  y^2)^{(m/2)}\frac{\partial^2 a_v}{\partial y^2} + 2my (\delta \alpha
  z^2 + y^2)^{(m/2)-1} \frac{\partial a_v}{\partial y} + ma_v (\delta
  \alpha z^2 + y^2)^{(m/2)-2} (\delta \alpha z^2 + (m-1)y^2)
\end{align*}
We now estimate the three terms in this sum -- the first two terms
will be of order $\sqrt{\delta}$, and the last term will give us
approximately $m(m-1)a_vy^{m-2}$.
\begin{align*}
  & (\delta \alpha z^2 + y^2)^{(m/2)}\frac{\partial^2
    a_v}{\partial
    y^2} \\
  &= (\delta \alpha z^2 + y^2)^{(m/2)} \left\{
    \frac{-m^2y}{(\alpha \delta z^2 + y^2)^{(m/2)+1}}
    \left[ \frac{-y\sqrt{\alpha \delta} zA(\phi,e_2)}{\alpha
        \delta z^2 + y^2} + \frac{\alpha \delta
        z^2A(\phi,e_1)}{\alpha \delta z^2 + y^2} \right] +
    \frac{m}{(\alpha \delta z^2 + y^2)^{m/2}} \left[
      \frac{-\sqrt{\alpha
          \delta } z A(\phi,e_2)}{\alpha \delta z^2 + y^2}\right. \right.\\
  & \left. \left.+ \frac{-(m-1)y\sqrt{\alpha \delta} z
        A(e_2,e_1)}{\alpha \delta z^2 + y^2} +
      \frac{y^2\sqrt{\alpha \delta}zA(\phi,e_2)}{(\alpha
        \delta z^2 + y^2)^2}+\frac{\alpha \delta z^2 (m-1)
        A(e_1,e_1)}{\alpha \delta z^2 +
        y^2}+\frac{-2y\alpha \delta z^2 A(\phi,
        e_1)}{(\alpha \delta z^2
        + y^2)^2}\right]\right\} \\
  & = (-m^2y)\left[ \frac{-y\sqrt{\alpha \delta}
      zA(\phi,e_2)}{(\alpha \delta z^2 + y^2)^2} +
    \frac{\alpha \delta z^2A(\phi,e_1)}{(\alpha \delta z^2 +
      y^2)^2} \right] + m\left[ \frac{-\sqrt{\alpha \delta } z
      A(\phi,e_2)}{\alpha \delta z^2 + y^2} +
    \frac{-(m-1)y\sqrt{\alpha \delta} z A(e_2,e_1)}{\alpha
      \delta z^2
      + y^2}\right. \\
  & + \left.  \frac{y^2\sqrt{\alpha
        \delta}zA(\phi,e_2)}{(\alpha \delta z^2 +
      y^2)^2}+\frac{\alpha \delta z^2 (m-1) A(e_1,e_1)}{\alpha
      \delta z^2 + y^2} + \frac{-2y\alpha \delta z^2
      A(\phi, e_1)}{(\alpha \delta z^2 + y^2)^2}\right]
\end{align*}
We will bound the magnitude of every term in this sum. Consider the
first term of the form:
\begin{align*}
  \abs{(-m^2y) \frac{-y\sqrt{\alpha \delta}
      zA(\phi,e_2)}{(\alpha \delta z^2 + y^2)^2}} \le m^2
  \abs{\frac{y^2 \sqrt{\delta}A(\phi,e_2)}{(\alpha \delta z^2+y^2)^2}}
\end{align*}
Thus, since $m \ge 3$:
\begin{align*}
  \abs{(-m^2y) \frac{-y\sqrt{\alpha \delta}
      zA(\phi,e_2)}{(\alpha \delta z^2 + y^2)^2}} \le 3m^2 M
  \abs{\frac{y^2 \sqrt{\delta}}{\alpha \delta z^2+y^2}}
\end{align*}
Now, $y^2 / (\alpha \delta z^2 + y^2) \le 1$, hence:
\begin{align*}
    \abs{(-m^2y) \frac{-y\sqrt{\alpha \delta}
      zA(\phi,e_2)}{(\alpha \delta z^2 + y^2)^2}} \le  3 m^2 M\sqrt{\delta}
\end{align*}
Of the seven terms in the sum, the first, third and fifth terms can be
analysed exactly as above, and their sum can be upper bounded by
$15m^2M\sqrt{\delta}$. For the remaining terms we have to use our
lower bound on $y$, for example:
\begin{align*}
  \abs{(-my)\frac{\alpha \delta z^2A(\phi,e_1)}{(\alpha \delta z^2 +
      y^2)^2}} & \le mM\abs{\frac{\delta y}{\alpha \delta z^2 +
      y^2}} \\
 & \le mM\abs{\frac{\delta}{y}} \\
 & \le mM \delta^{7/8}
\end{align*}
By this reasoning, we can bound all seven terms by
$m^2M\sqrt{\delta}$, hence this term in $\partial^2 g/\partial y^2$
contributes is bounded in absolute value by $7m^2M\sqrt{\delta}$. For
the second term in that expression, the analysis is almost identical
to the previous claim and gives 
\begin{align*}
  \abs{2my (\delta \alpha z^2 + y^2)^{(m/2)-1} \frac{\partial
      a_v}{\partial y}} & = 2m^2 \abs{y (\delta \alpha z^2 +
    y^2)^{(m/2)-1} \frac{(-y \sqrt{\alpha \delta}zA(\phi^0,e_2) +
      \alpha \delta z^2
      A(\phi^0,e_1))}{(\delta \alpha z^2 + y^2)^{3/2}}} \\
  & \le 2m^2\abs{ y\frac{(-y \sqrt{\alpha \delta}zA(\phi^0,e_2) +
      \alpha \delta z^2 A(\phi^0,e_1))}{(\delta \alpha z^2 + y^2)}} \\
& \le 2m^2M\sqrt{\delta} + 2m^2M \abs{\frac{\delta}{y}} \\
& \le 4m^2M \sqrt{\delta}
\end{align*}
Thus, we have:
\begin{align*}
  \abs{\frac{\partial^2 g}{\partial y^2} - ma_v (\delta \alpha z^2
    + y^2)^{(m/2)-2} (\delta \alpha z^2 + (m-1)y^2)} \le
  19m^2M\sqrt{\delta}
\end{align*}
By applying the triangle inequality:
\begin{align*}
   \abs{  ma_v (\delta \alpha z^2 + y^2)^{(m/2)-2} (\delta \alpha z^2
    + (m-1)y^2) - m(m-1)a_v y^{m-2}} \le m^2M\sqrt{\delta}
\end{align*}
Thus we have the desired result for the second partial with respect to
$y$. The other second derivatives are computed in a similar way.
\end{proof}
\end{proof}
Using the above, we are now examine what happens after $t$ iterations
of \textbf{FindBasis}.  }{} The following theorem shows that after $k$
iterations of \textbf{FindBasis}, our error blows up at most doubly
exponentially in $k$. The proof holds for \textbf{ExtendBasis} is as
well.
\begin{theorem}[Multiple iterations]\label{thm:approximate}
  Suppose \textbf{FindBasis} finds $j \le k$ orthogonal vectors
  $\{u_1,\ldots,u_j\}$ of $g(u)$ taking $\eps_1$ such that $\eta \le
  M\eps_1^{1/16^j}$ for each call of \textbf{LocalOpt}, then
  $\norm{\pi_V(u_j)}^2 \ge 1-\epsilon_1^{(1/16)^j}$.
\end{theorem}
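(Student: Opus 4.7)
My plan is to induct on $j$, the number of iterations of \textbf{FindBasis}, with the invariant that the $j$-th output $u_j$ satisfies $\norm{\pi_V(u_j)}^2 \ge 1-\eps_1^{(1/16)^j}$, and that the function value at the starting point in the $j$-th iteration is at least $\eta=M\eps_1^{1/16^j}$ as assumed. The base case $j=1$ is essentially Lemma \ref{lemma:exactapprox}: since there is no accumulated perturbation from previous vectors, \textbf{LocalOpt} with parameter $\eps_1$ returns $u_1$ with $\norm{\pi_V(u_1)}^2 \ge 1-16\eps_1 \ge 1-\eps_1^{1/16}$ for $\eps_1$ sufficiently small.

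For the inductive step, let $u_1,\dots,u_{j-1}$ be the vectors from previous iterations, and set $\delta = \eps_1^{(1/16)^{j-1}}$; by induction $\norm{\pi_V(u_i)}^2 \ge 1-\delta$. Writing each $u_i = \sqrt{1-\delta_i}\,v_i - \sqrt{\delta_i}\,w_i$ with $\delta_i\le\delta$, I would extend each to an orthogonal partner $\hat u_i = \sqrt{\delta_i}\,v_i+\sqrt{1-\delta_i}\,w_i$, exactly as in the proof of Lemma \ref{lemma:onestep}, and complete $\{v_i\},\{w_i\}$ to orthonormal bases of $V$ and $W$. In this combined basis, the orthogonal complement of $\spn{u_1,\ldots,u_{j-1}}$ is spanned by the $\hat u_i$ together with the remaining $V$- and $W$-directions, so a unit vector $\xi$ there decomposes as $\xi = \xi_V + \xi_W + \sum_i c_i\hat u_i$, and the $\hat u_i$ components contribute an order-$\sqrt\delta$ perturbation to both the $V$- and $W$-parts of $x^T\xi$.

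Substituting this decomposition into Lemma \ref{lemma:representation} yields a perturbed representation of $g(\xi)=f_m(\xi)-\gamma_m\norm{\xi}^m$ on the unit sphere in $\spn{u_1,\ldots,u_{j-1}}^\perp$, of the form $\|\xi_V'\|^m a_v + \|\xi_W'\|^m a_w$ plus cross-terms and error terms of size $O(mM\sqrt\delta)$ coming from the $\hat u_i$ components. Fixing a candidate local optimum $\xi^\ast$, I would construct a three-parameter curve $C$ on the sphere analogous to the one in Lemma \ref{lemma:onestep}, parametrised by the weight $y$ on $(\xi^\ast_V)^0$ and $z$ on $(\xi^\ast_W)^0$, with the last degree of freedom absorbing the $\hat u$-direction. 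I would then rederive the analogues of Claims 1 and 2 from the proof of Lemma \ref{lemma:onestep}, bounding the first partials of $g$ along $C$ by $m a_v y^{m-1}$ and $m a_w z^{m-1}$ up to additive error $O(mM\sqrt\delta)$, and the off-diagonal Hessian by $O(m^2 M\sqrt\delta)$.

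With these estimates in hand the termination analysis of Lemma \ref{lemma:onestep} carries over almost verbatim: the boundary case $y^2\le\delta^{1/8}$ (resp. $z^2\le\delta^{1/8}$) forces $\norm{\pi_W(\xi^\ast)}^2 \ge 1-\delta^{1/16}$ (resp. $\norm{\pi_V(\xi^\ast)}^2 \ge 1-\delta^{1/16}$), while in the interior $y^2,z^2\ge\delta^{1/8}$ a bordered-Hessian computation yields $\lambda_{\max}(D^2 g_{\xi^\ast}) \ge \tfrac{3}{4}m(m-1)g(\xi^\ast)$ on $\xi^{\ast\perp}$, contradicting termination of \textbf{LocalOpt}. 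Hence $\norm{\pi_V(u_j)}^2 \ge 1-\delta^{1/16} = 1-\eps_1^{(1/16)^j}$, closing the induction. The main obstacle I expect is keeping the error bookkeeping clean: with $j-1$ perturbed directions rather than one, the number of $\hat u_i$-cross-terms in the representation of $f_m$ grows with $j$, and one must verify that the constants in the perturbed versions of Claims 1 and 2 depend polynomially on $m$ only (not on $j$), so that the $\eps_1^{1/16^j}$ margin, though shrinking doubly exponentially, still dominates all error terms provided $\eps_1$ is chosen small enough at the outset.
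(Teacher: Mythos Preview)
Your proposal is correct and follows essentially the same route as the paper: set up a perturbed orthonormal basis after the first $j-1$ rounds, decompose $x^T\xi$ into its $V$- and $W$-parts plus order-$\sqrt{\delta}$ cross terms, apply Lemma~\ref{lemma:representation}, restrict to a two-parameter curve through the candidate optimum, and run the boundary/interior dichotomy of Lemma~\ref{lemma:onestep}. The one place where the paper is slightly more economical is that it does not rederive Claims~1 and~2 at each step; instead it observes that the accumulated perturbation (the coefficient of $z'^2$ in the curve representation) is bounded by $2t(\eps_1^{1/16})^t$ and then invokes Lemma~\ref{lemma:onestep} as a black box with that quantity playing the role of $\delta$, immediately yielding $\innerprod{u_{t+1}}{u^\ast}\ge 1-(2t(\eps_1^{1/16})^t)^{1/8}\ge 1-(\eps_1^{1/16})^{t+1}$ for $\eps_1$ small relative to $k$. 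Your plan to redo the partial-derivative estimates would work too, but packaging the $t$-step perturbation into a single $\delta$ and citing Lemma~\ref{lemma:onestep} directly spares you exactly the bookkeeping you flagged as the main obstacle.
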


\ifthenelse{\boolean{longversion}}{
\begin{proof}[Proof of Theorem \ref{thm:approximate}]
After $t$ iterations, we have a basis of orthonormal vectors
$\{u_1,\ldots,u_t\}$ where each $u_i$ is close to some vector in
$V$:
\begin{align*}
u_1 & = a_{11} v_1 + b_{11} w_1 \\
u_2 & = a_{21} v_1 + a_{22} v_2 + b_{21} w_1 + b_{22} w_2 \\
\vdots & \qquad \vdots \\
u_t &= a_{t1} v_1 + \cdots + a_{tt} v_t + b_{t1} w_1 + \cdots b_{tt} w_t
\end{align*}
We use the orthonormal basis $\{u_i\}$, $\{v_{t+1},\dots,v_k\}$, the
remaining vectors in $W$ $\{w_{t+1},\ldots,w_{n-k}\}$, and approximate
copies of $\{w_1,\ldots,w_t\}$. This last set is given by:
\begin{align*}
w_1' &= c_1 w_1 + \sum_{i=1}^t d_{1i} v_i + \sum_{i=1}^t e_{1i} w_i \\
\vdots & \qquad \vdots \\
w_t' &= c_t w_t + \sum_{i=1}^t d_{ti} v_i + \sum_{i=1}^t e_{ti} w_i \\
\end{align*}
In these sums we have $d_{ii} = e_{ii} = 0$, and we have
orthonormality between these vectors. Consider the inner product $x^T
\xi$, where $\xi$ is of unit length and lies in the space orthogonal
to $\{u_i\}$:
\begin{align*}
  x^T \xi & = \sum_{i=t+1}^k x_{v_i} \xi_{v_i} + \sum_{i=t+1}^{n-k}
  x_{w_i} \xi_{w_i} + \sum_{i=1}^{t} \xi_{w_i'} ( c_i x_{w_i} +
  \sum_{j=1}^i d_{ij} x_{v_j} + \sum_{j=1}^i e_{ij} x_{w_j} ) \\
  & = (x_{v_1},\ldots,x_{v_t},x_{v_{t+1}},\ldots,x_{v_k})^T
  (\sum_{i=1}^t \xi_{w_i'} d_{i1}, \ldots, \sum_{i=1}^t \xi_{w_i'}
  d_{it}, \xi_{v_{t+1}},\ldots, \xi_{v_{k}})+ \\
  & \qquad + (x_{w_1},\ldots,x_{w_t},x_{w_{t+1}},\ldots,x_{w_{n-k}})^T
  (\xi_{w_1'} c_1 + \sum_{i=1}^t \xi_{w_i}' e_{i1}, \xi_{w_t'} c_t +
  \sum_{i=1}^t \xi_{w_i'} e_{it}, \xi_{w_{t+1}},\ldots, \xi_{w_{n-k}})
\end{align*}
The two vectors formed from $\xi$ have total norm 1.
Now, we can apply Lemma \ref{lemma:representation}, to obtain:
\begin{align*}
  f_m(\xi') = \left( \sum_{j=t+1}^k \xi_{v_j}^2 + \sum_{j=1}^t
    \left(\sum_{i=1}^t \xi_{w_i'} d_{ij}\right)^2 \right)^{m/2} a_v +
  \left( \sum_{j=t+1}^{n-k} \xi_{w_j}^2 + \sum_{j=1}^t
    \left(\xi_{w_j'} c_j + \sum_{i=1}^t \xi_{w_i'} e_{ij}\right)^2
  \right)^{m/2} a_w + \gamma_m
\end{align*}
where the expectation term $a_v$ is given by:
\begin{align*}
a_v = \E{ \left[ (x_{v_1},\ldots,x_{v_t},x_{v_{t+1}},\ldots,x_{v_k})^T
  (\sum_{i=1}^t \xi_{w_i'} d_{i1}, \ldots, \sum_{i=1}^t \xi_{w_i'}
  d_{it}, \xi_{v_{t+1}},\ldots, \xi_{v_{k}})^0\right]^m - \gamma_m}
\end{align*}
(and similarly for $a_w$). As in the single iteration case, we
restrict to a curve. Fix an output $\xi^\ast$ of \textbf{FindBasis}:
we will fix the ratio of the components $\{\xi_{w_j}\}$ in the ratio
of $\xi^\ast$, and similarly, we will fix the ratios of
$\{\xi_{w_1'},\ldots,\xi_{w_t'},\xi_{w_{t+1}},\ldots,\xi_{w_{n-k}} \}$
according to $\xi^\ast$ as well. This gives the following restriction
on our curve after subtracting $\gamma_m \norm{\xi}^m$.
\begin{align*}
g(\xi') = a_v \left[ (y')^2 + (z')^2\left( \sum_{j=1}^t
     \left(\sum_{i=1}^t d_{ij} \xi_{w_i'}^\ast/l \right)^2 \right) \right]^{m/2} + a_w (z')^m
\end{align*}
where $l$ is a constant given by:
\begin{align*}
  l = \frac{1}{\left( \sum_{j=t+1}^{n-k} (\xi_{w_j}^\ast)^2 +
      \sum_{j=1}^t \left(\xi_{w_j'}^\ast c_j + \sum_{i=1}^t
        \xi_{w_i'}^\ast e_{ij}\right)^2 \right)}
\end{align*}
The coefficient of $z'^2$ is bounded by at most
$2t(\eps_1^{1/16})^t$, hence using the previous lemma for a single
iteration, the output produced here is a $(t+1)^{th}$ vector $u_{t+1}$
such that:
\begin{align*}
  \innerprod{u_{t+1}}{u^\ast} &  
  \ge 1 - \left( 2t ( \eps_1^{1/16})^t \right)^{1/8} \\
  & \ge 1 - (\eps_1^{1/16})^{t+1}
\end{align*}
for sufficiently small $\eps_1$ (relative to $k$).
\end{proof}
}{}\subsection{Algorithms}
Using \textbf{LocalOpt}, we have an algorithm for factoring (Problem
\ref{problem:factoring}). To deal with the errors introduced by
sampling and approximate local optima, we replace the Schwartz-Zippel
step in \textbf{FindBasis} with the robust version in Lemma
\ref{lemma:sz}, where we set the error parameter of the robust
Schwartz-Zippel lemma to be $(\eta-\norm{M^m}_2\eps)/n^m$.  We will
use the following robust version of the Schwartz-Zippel identity test.
\begin{lemma}[Robust Schwartz-Zippel]\label{lemma:sz}
  Let $p$ be a degree $m$ polynomial over $n$ variables and $K$ a
  convex body in $\R^n$. If there exists $x \in K$ such that
  $\abs{p(x)} > \eps (2cn)^m$, then for $l$ random points $s_i$,
  $\prob{\forall s_i: \abs{p(s_i)}\le \eps} \le 2^{-l}$.
\end{lemma}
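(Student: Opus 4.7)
The plan is to combine the Carbery--Wright anti-concentration inequality with a standard polynomial norm comparison on convex bodies, using the uniform measure on $K$ as the sampling distribution. Carbery--Wright states that for a degree-$m$ polynomial $p$ and a log-concave probability measure $\mu$ on $\R^n$,
\[
\mu\bigl\{x : |p(x)| \le \lambda\bigr\} \;\le\; C\, m \,\bigl(\lambda / \|p\|_{L^2(\mu)}\bigr)^{1/m}.
\]
The uniform measure on a convex body $K$ is log-concave, so this applies. My goal is to use the hypothesis $|p(x^\ast)| > \eps(2cn)^m$ for some $x^\ast \in K$ to push $\|p\|_{L^2(K)}$ well above $\eps$, so that the anti-concentration probability is at most $1/2$ per sample.

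The first step is the pointwise-to-$L^2$ passage. By a Chebyshev/Remez-type inequality for polynomials on convex bodies, there is an absolute constant $c$ such that
\[
\|p\|_{L^\infty(K)} \;\le\; (cn)^m \, \|p\|_{L^2(K)}.
\]
(This is where the $(cn)^m$ blowup in the hypothesis comes from: it precisely cancels the blowup in the norm comparison.) Combined with the hypothesis $\|p\|_{L^\infty(K)} \ge |p(x^\ast)| > \eps(2cn)^m$, this yields
\[
\|p\|_{L^2(K)} \;\ge\; 2^m \eps.
\]

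The second step is to feed this into Carbery--Wright with $\lambda = \eps$:
\[
\Pr_{s \sim K}\bigl[|p(s)| \le \eps\bigr] \;\le\; Cm\,\bigl(\eps / (2^m \eps)\bigr)^{1/m} \;=\; Cm \cdot 2^{-1}.
\]
With the constant $c$ in the statement chosen large enough (absorbing the $Cm$ factor by taking $2^m$ replaced by, say, $(2Cm)^m$ in the above — which the $(2cn)^m$ hypothesis accommodates since $n \ge m$ in our use-cases, or equivalently by reselecting $c$), this is at most $1/2$. Independence of the $l$ samples $s_1, \ldots, s_l$ then gives
\[
\Pr\bigl[\forall i:\; |p(s_i)| \le \eps\bigr] \;\le\; (1/2)^l \;=\; 2^{-l},
\]
which is the claimed bound.

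The main technical obstacle is bookkeeping the constants so that the factor $(2cn)^m$ in the hypothesis is precisely what is needed to force the per-sample detection probability above $1/2$ after both the $L^\infty$-to-$L^2$ Remez-type loss and the Carbery--Wright factor of $Cm$ are absorbed. Everything else (boosting by independence, log-concavity of uniform-on-$K$) is routine; the work is in invoking the correct form of Carbery--Wright and of the polynomial norm comparison.
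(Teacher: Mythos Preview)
Your approach is sound in spirit but takes an unnecessary detour, and the detour creates the very constant-bookkeeping problem you flag at the end. The paper's proof is shorter because it invokes a different form of Carbery--Wright (their Corollary~2) that already compares the small-value measure to the $L^\infty$ norm on the convex body:
\[
\max_{x\in K}\abs{p(x)}^{1/m}\,\eps^{-1/m}\,\mu\bigl(\{x\in K:\abs{p(x)}\le\eps\}\bigr)\;\le\;cn.
\]
From this, the argument is a one-line dichotomy: either $\mu(\{|p|\le\eps\}) \ge 1/2$, in which case $\max_{x\in K}|p(x)|\le (2cn)^m\eps$, contradicting the hypothesis; or $\mu(\{|p|\le\eps\}) < 1/2$, in which case independence gives $2^{-l}$ immediately. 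No Remez step, no residual $Cm$ factor.

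Your route --- $L^2$-Carbery--Wright plus a separate $L^\infty$-to-$L^2$ Remez/Chebyshev comparison --- works, but the $Cm$ prefactor in the $L^2$ form does not cleanly cancel against $(2cn)^m$; you are left needing $\|p\|_{L^2(K)}\ge (2Cm)^m\eps$ rather than $2^m\eps$, and ``absorb into $c$'' is not quite legitimate since $c$ is meant to be absolute while $m$ varies. More to the point, the Remez-type inequality you invoke is itself essentially a corollary of Carbery--Wright, so you are applying the same underlying theorem twice in two weaker forms instead of once in the right form. Switching to the $L^\infty$ corollary eliminates both the extra step and the constant issue.
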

\ifthenelse{\boolean{longversion}}{
\subsection{Robust Schwartz-Zippel lemma}\label{sec:schwartz}

\begin{proof}[Proof of Lemma \ref{lemma:sz}]
  Let $\mu$ denote the uniform measure over $K$, by Corollary 2 of
  Carbery and Wright \cite{Carbery}:
\begin{align*}
  \max_{ x \in K} \abs{p(x)}^{1/m} \eps^{-1/m} \mu(\{x\in
  K: \abs{p(x)} \le \eps\}) \le cn
\end{align*}
Consider our $l$ samples -- there are two possibilities:
\begin{enumerate}
\item $\mu(\{x\in K: \abs{p(x)}\le \eps\}) \ge 1/2$. In this case, we
  have $\abs{p(x)} \le \eps(2cn)$ from the bound above.
\item $\mu(\{x\in K: \abs{p(x)}\le \eps\}) \le 1/2$. Then,
  $\prob{\forall i\abs{p(x_i)}\le \eps} \le 1/2^l$.
\end{enumerate}
\end{proof}
We can of course amplify this probability by repeating the test (or
simply taking $l$ larger).  }{ Our running time includes the number of
samples needed to estimate the moment tensor to within error
$\eps$. This sample complexity problem is studied in the random
matrices literature for special distributions. The culmination of one
line of results \cite{adamczak, sriver, vershynin} shows that for
logconcave distributions $N = C_F (n, \eps) \le C_\eps n$ samples
suffice to achieve
\begin{align*}
\E{\norm{\frac{1}{N}\sum_{i=1}^N x_i x_i^T - I}} \le \eps.
\end{align*}
For higher moments, Guedon and Rudelson \cite{guedon} proved that that
$O(n^{m/2} \log(n))$ samples are sufficient to approximate moments in
all directions up to a $1 + \eps$ factor.
}

\ifthenelse{\boolean{pictures}}{
\begin{algorithm}
  \caption{FactorUnderGaussian}
  \begin{algorithmic}[1]
    \REQUIRE Highest moment $m$, distribution $F$.
    \STATE $B \leftarrow \boldsymbol{FindBasis(m,F)}$.
    \STATE $U \leftarrow \boldsymbol{ExtendBasis(m,F,B,\phi)}$.
    \RETURN $U$
  \end{algorithmic}
\end{algorithm}
}{}

\ifthenelse{\boolean{longversion}}{
\begin{proof}[Proof of Theorem \ref{thm:factoring-gaussian}]
  We choose $\eps_1$ (the first step local iteration) to be:
\begin{align*}
  (\eps_1)^{\left( \frac{1}{16} \right)^k} \le \min \{ \eps,
  \eta-\norm{M^m}_2\eps \}
\end{align*}
where $\norm{M^m}_2$ is the 2-norm (spectral norm) of the $m^{th}$
moment tensor. We take enough samples so that each estimated moment in
$W$ is within $\min (\eps_1, \eta - \norm{M^m}_2) / n^m)$ of the
Gaussian moment, and every moment in $V$ is off by at most $\min
(\eps_1/2, \eta/2 )$.  In particular, note that all sampled gradients
and Hessian matrices take a value which differ by no more than
$\eps_1/2$ from their true values. Thus, we can simply absorb this as
part of the error arising from local search. Also, this gives us an
upper bound on sample complexity -- the number of samples it takes to
estimate the $m^{th}$ moments of a Gaussian distribution to accuracy
$\eps$ in $\R^n$ is given as $C_m \eps^{-2} n^{m/2} \log n$
\cite{guedon}, which when evaluated becomes $n^{O(m)}$.

At each iteration of the algorithm, we run the Robust Schwartz-Zippel
test $\log (k/\delta)$ times with Schwartz-Zippel parameter
$\eta-\norm{M^m}_2\eps$. With probability at least $1-\delta$, either
each iteration produces a $u$, where $\abs{\E{(x^Tu)^m} - \gamma_m}\ge
\eta-\norm{M^m}_2\eps$ or we correctly deduce that there are no more
directions whose moments differ from a Gaussian by more than $( \eta -
\norm{M^m}_2 \eps) / n^m$. In the latter case, by moment
distinguishability, every vector in $P$, the minimally relevant
subspace, has large projection on the basis $\{ u_i\}$.

In the former case, we know that every unit vector in $\{ u_i\}^\perp$
with projection at least $1-\eps$ takes value which is bounded away
from $\gamma_m$ by at least $\eta - \norm{M^m}_2 \eps$, thus every
such vector is still moment distinguishable. Applying Theorem
\ref{thm:approximate} then, we sequentially generate a sequence of at
most $k$ orthogonal $u_i$ such that:
\begin{align*}
\abs{\innerprod{u_i}{\pi_V(u_i)}}\ge 1-(\eps_1)^{(1/16)^i}
\end{align*}

We need to show that in addition $d_m(F,\hat{F}_U \hat{F}_{U^\perp})
\le \eps$. Let $F'=\hat{F}_U \hat{F}_{U^{\perp}}$: the moment-distance
between the true and sampled distributions differ by at most $\eps_1$,
it suffices for us to prove that $d_m(F,F') \le \eps$. To this end, we
will apply the representation formula to $F'$ for some fixed unit
vector $u$. As before, we have:
\begin{align*}
  \EE{F'}{(x^Tu)^m} &= \EE{F'}{(x^T u_U)^m)} + \EE{F'}{(x^T
    u_{U^\perp})^m}
  - \gamma_m \norm{u_U}^m - \gamma_m \norm{u_{U^\perp}}^m + \gamma_m \\
  & = \EE{F}{(x^T u_U)^m)} + \EE{F}{(x^T u_{U^\perp})^m}
  - \gamma_m \norm{u_U}^m - \gamma_m \norm{u_{U^\perp}}^m + \gamma_m
\end{align*}
Hence, comparing with a similar expression for $\EE{F}{(x^Tu)^m}$:
\begin{align*}
& \abs{\EE{F'}{(x^Tu)^m} - \EE{F}{(x^Tu)^m}}  \le \abs{\EE{F}{(x^T
    u_U)^m)} - \EE{F}{(x^T u_V)^m)}} + \\
& \qquad+ \abs{ \EE{F'}{(x^T u_{U^\perp})^m}-\EE{F}{(x^T
      u_{V^\perp})^m} } \\
& \qquad + \abs{\gamma_m \norm{u_U}^m - \norm{u_V}^m} + \gamma_m \abs{\norm{u_U}^m-\norm{u_{U^\perp}}}
\end{align*}
Now, viewing $\EE{F}{(x^Tu)^m}$ as the tensor applied to $u$, we see
that we can bound these terms by the tensor spectral norm:
\begin{align*}
& \abs{\EE{F'}{(x^Tu)^m} - \EE{F}{(x^Tu)^m}}  \le
(\norm{M^m}_2 + m \gamma_m) \norm{u_U - u_V} + (\norm{M^m}_2 + m \gamma_m)
\norm{u_{U^\perp} - u_{V^\perp}}
\end{align*}
By choice of $U$, we have $\norm{u_U - u_V} \le \eps$, and similarly
for the othogonal component, thus we have our bound.
\end{proof}
}{}

We now apply these methods to learning the concept class $\HH$ (Problem
\ref{problem:learning}). After applying an isotropic transformation, $F$
will have Gaussian moments in every direction orthogonal to $V$, and
hence the output basis of \textbf{FindBasis} and \textbf{ExtendBasis}
returns only vectors in the $V$ subspace. 
\ifthenelse{\boolean{pictures}}{
\begin{algorithm}
  \caption{LearnUnderGaussian}
  \begin{algorithmic}[1]
    \REQUIRE Highest moment $m$, distribution $F$.

    \STATE $B_1 \leftarrow \boldsymbol{FindBasis(m,F)}$.
    \STATE $B_2 \leftarrow \boldsymbol{FindBasis(m,F^+)}$
    on the space orthogonal to $B_1$.
    \STATE Alternately run $\boldsymbol{ExtendBasis}$ on $F$ and $F^+$
    to find a basis $U$ of size at most $k$. Extend this to a basis of
    dimension $k$.
    \STATE Draw sufficient samples $S$ to learn $\HH$ on this $k$
    dimensional subspace. Project $S$ to $\spn{U}$.
    \STATE Learn $\HH$ over $U$.
  \end{algorithmic}
\end{algorithm}
}{}
\ifthenelse{\boolean{longversion}}{
The proof of this algorithm is straightforward given the proof of the
factoring algorithm under Gaussian noise and our robustness
assumptions.
}{}
We will use the following proposition on robust learnability (see e.g., \cite{vempala06}).
\begin{proposition}[VC dimension]\label{proposition:vc}
  Let $\HH$ be a hypothesis class with VC dimension $d$. Let $\ell \in
  \HH$ be a subspace junta with relevant subspace $V$, where
  $dim(V)=k$. Let $U$ be a $k$ dimensional subspace where
  $\ell(\pi_U)$ labels a $1-\eps$ fraction of points correctly. Then
  we can learn $\ell$ with sample complexity $ (1/\eps)^{c_2 d
    \log(1/\eps) + c_2 \log(2/\delta)}$ with probability at
  $1-\delta$.
\end{proposition}

\begin{proof}[Proof of Theorem \ref{thm:learning-gaussian-noise}]  
  $\HH$ is \emph{robust}; by assumption there exists $\eps'$ which is
  polynomial in $\eps$ such that $\eps'+g(\eps') \le \eps/2$. We will
  take this $\eps'$ and will use the following $\eps_1$ for all our
  calls to \textbf{LocalOpt}:
    \begin{align*}
       (\eps_1)^{\left(
          \frac{1}{16} \right)^k} \le \min \{ \eps',
      \eta-\norm{M^m}_2\eps \}
  \end{align*}
  Under these parameters, the proof for the factoring steps of Lines
  1-3 are as in \textbf{FactorUnderGaussian}. Thus with probability at
  least $1-\delta$ we will obtain an orthonormal basis $\{u_i\}$ where
  $\abs{\innerprod{u_i}{\pi_V(u_i)}}\ge 1-(\eps_1)^{(1/16)^i}$.

  By moment learnability, the set of $\{u_i\}$ discovered is
  approximately a basis for $P$, the minimal dimension relevant
  subspace. By our choice of $\eps_1$ above, we satisfy the robustness
  condition, i.e., $\eps_1^{16^k} \le \eps'$, in which
  case only $\eps/2$ fraction of the points are mislabeled over
  $\spn{\{u_i\}}$. Finally, we allow the remaining $\eps/2$ error to
  the learning algorithm, to obtain an output hypothesis which
  correctly labels $1-\eps$ fraction of $F$.
\end{proof}

\section{Moment distance}\label{subsection:fourier}
In our algorithms, we terminate if all remaining directions are
Gaussian in the $m^{th}$ moment (for some fixed $m$). We would like a
guarantee that when we do this, that the random variable is in fact
very close to being Gaussian. What follows is a set of results which
quantify this idea. We first restrict ourselves to one random variable
to introduce the analytic tools we need. In what follows, we use the
following normalisation for our fourier transforms in $\R^n$:
\begin{align*}
  \hat{f}(\xi) = \int e^{i \xi \cdot x} f(x) dx
\end{align*}
This implies that the Parseval/Plancherel theorem takes the following
form:
\begin{align*}
  \int \abs{f(x)}^2 dx = \frac{1}{(2\pi)^n}\int \abs{\hat{f}(\xi)}^2 d\xi
\end{align*}
for $f \in L^2(\R^n)$. 

The core of the proof is the following statement, whose proof employs
Fourier analytic techniques. We need the following standard theorem on
characteristic functions (see for example \cite{shiryaev}):
\begin{theorem}[Characteristic functions]\label{thm:characteristic}
  Let $\xi$ be a random variable with distribution function $F = F(x)$
  and $\phi(t) = \E{ e^{it\xi}}$ its characteristic function. Let
  $\E{\abs{\xi}^n} < \infty$ for some $n \ge 1$, then $\phi^{(r)}$
  exists for all $r \le n$ and
  \begin{align*}
    \phi^{(r)} (t) = \int (ix)^r e^{itx} dF (x)
  \end{align*}
  Moreover, we have an expression for the derivatives at 0:
  \begin{align*}
    \E{\xi^r } = \frac{\phi^{(r)} (0)}{ i^r }
  \end{align*}
  And finally we have the following Taylor series estimate with error:
  \begin{align*}
    \phi(t) = \sum_{r=0}^n \frac{(it)^r}{r!} \E{\xi^r} +
    \frac{(it)^n}{n!} \eps_n (t)
  \end{align*}
  where the error term $\eps_n(t) \to 0$ as $n \to \infty$ and is bounded:
  \begin{align*}
    \abs{\eps_n(t)} \le 3 \E{\abs{\xi}^n}
  \end{align*}
\end{theorem}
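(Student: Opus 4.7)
The plan is to prove the three assertions in order by routine Fourier / measure-theoretic tools, since this is a standard theorem on characteristic functions (cited to Shiryaev).

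For the existence and formula for $\phi^{(r)}$, I would proceed by induction on $r \le n$, differentiating under the integral via dominated convergence. Assuming the formula for $\phi^{(r-1)}$, write the difference quotient
\begin{equation*}
\frac{\phi^{(r-1)}(t+h) - \phi^{(r-1)}(t)}{h} = \int (ix)^{r-1} e^{itx} \cdot \frac{e^{ihx}-1}{h} \, dF(x).
\end{equation*}
The integrand converges pointwise to $(ix)^r e^{itx}$ as $h \to 0$ and is dominated in absolute value by $|x|^r$ (using $|e^{ihx}-1| \le |hx|$). Since $\E{|\xi|^n} < \infty$, Lyapunov's inequality yields $\E{|\xi|^r} < \infty$ for every $r \le n$, so DCT applies and delivers $\phi^{(r)}(t) = \int (ix)^r e^{itx} dF(x)$. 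Setting $t = 0$ then gives $\E{\xi^r} = \phi^{(r)}(0)/i^r$ immediately.

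For the Taylor expansion with explicit remainder, the key pointwise ingredient is the classical estimate
\begin{equation*}
\abs{e^{iu} - \sum_{r=0}^{n} \frac{(iu)^r}{r!}} \le \min\!\left(\frac{|u|^{n+1}}{(n+1)!}, \frac{2|u|^n}{n!}\right) \qquad (u \in \R),
\end{equation*}
proved by induction from integrating the lower-order estimate (for the first argument of the $\min$) and from $|e^{iu}| = 1$ with the triangle inequality applied to the partial sum (for the second). Substituting $u = t\xi$, taking expectation, and dividing through by $(it)^n/n!$ defines $\eps_n(t)$; the second argument of the $\min$ yields the uniform bound $|\eps_n(t)| \le 2 \E{|\xi|^n} \le 3 \E{|\xi|^n}$, as claimed. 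The vanishing $\eps_n(t) \to 0$ (as $t \to 0$; I read the ``$n \to \infty$'' in the statement as a typo for this) follows by dominated convergence: the integrand $|\xi|^n \min(|t\xi|/(n+1),\, 2)$ is bounded above by $2|\xi|^n$ (integrable by hypothesis) and tends pointwise to $0$ as $t \to 0$.

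The only real subtlety is that the naive Taylor remainder $|u|^{n+1}/(n+1)!$ grows in $|t|$ without bound, so it cannot by itself give a uniform-in-$t$ constant; the flat bound $2|u|^n/n!$ obtained from unitarity of $e^{iu}$ is what makes the $\min$ useful, delivering both the uniform constant and the pointwise vanishing. Everything else --- iterated DCT, Lyapunov monotonicity of moments, and the swap of expectation with the pointwise Taylor bound --- is mechanical.
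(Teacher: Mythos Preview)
The paper does not actually prove this theorem: it is stated as a standard result with a citation to Shiryaev, and then used as a black box in the proof of Lemma~\ref{lemma:L2gaussian}. Your proof is correct and is essentially the textbook argument (indeed, it is the proof given in Shiryaev): differentiate under the integral via dominated convergence using the lower-moment bound from Lyapunov, then control the Taylor remainder via the pointwise estimate $\min(|u|^{n+1}/(n+1)!,\,2|u|^n/n!)$ for $e^{iu}$. Your reading of ``$n\to\infty$'' as a typo for ``$t\to 0$'' is also the right call --- the hypothesis fixes a single $n$, and the standard formulation has the remainder vanish at the origin.
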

Now:
\begin{lemma}[$L^2$ distance from a Gaussian]\label{lemma:L2gaussian}
  Let $f \in L^2 (\R)$ be a probability density over $\R$ whose first
  $m$ moments match those of a standard Gaussian (whose probability
  density we will denote $g$). Suppose that the Fourier transform
  $\hat{f}$ satisifies a tail bound that $\abs{\hat{f}(\xi)} < c /
  \abs{\xi}$ for some $c>0$, then:
  \begin{align*}
    \int_{\R} \abs{f(x)-g(x)}^2 dx \le \frac{c'}{m^{1/8}}
  \end{align*}
\end{lemma}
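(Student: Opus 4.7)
The plan is to move to the Fourier side via Parseval, split into low and high frequencies, and use moment matching on the low side and the tail hypothesis on the high side. Concretely, by Plancherel,
\begin{align*}
  \int_{\R} \abs{f(x)-g(x)}^2 dx = \frac{1}{2\pi} \int_{\R} \abs{\hat{f}(\xi)-\hat{g}(\xi)}^2 d\xi.
\end{align*}
I would fix a threshold $T>0$ (to be optimized later as a function of $m$) and bound the integrals over $[-T,T]$ and its complement separately.

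For the low-frequency piece $|\xi| \le T$, the idea is to expand both $\hat{f}$ and $\hat{g}$ using Theorem \ref{thm:characteristic}. Since the first $m$ moments of $f$ agree with those of $g$, the Taylor polynomials of $\hat{f}$ and $\hat{g}$ of degree $m-1$ are identical, and the theorem gives
\begin{align*}
  \abs{\hat{f}(\xi)-\hat{g}(\xi)} \;\le\; \frac{|\xi|^m}{m!}\bigl(\abs{\eps^f_m(\xi)} + \abs{\eps^g_m(\xi)}\bigr) \;\le\; \frac{3|\xi|^m}{m!}\bigl(\E{|X|^m}+\E{|Y|^m}\bigr),
\end{align*}
where $X\sim f$ and $Y\sim g$. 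Squaring and integrating over $[-T,T]$ yields a bound of the form $C_m\, T^{2m+1}/(m!)^2$ where $C_m$ absorbs the moment constants (which for the Gaussian grow like $m^{m/2}$, and for $f$ can be controlled from the matching together with a crude moment estimate).

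For the high-frequency piece $|\xi| > T$, I would use the triangle inequality and the hypotheses
\begin{align*}
  \abs{\hat{f}(\xi)-\hat{g}(\xi)}^2 \;\le\; 2\abs{\hat{f}(\xi)}^2 + 2\abs{\hat{g}(\xi)}^2 \;\le\; \frac{2c^2}{\xi^2} + 2 e^{-\xi^2},
\end{align*}
since $\hat{g}(\xi)=e^{-\xi^2/2}$. Integrating over $|\xi|>T$ gives $O(1/T)$ from the $c/|\xi|$ term plus a superexponentially small Gaussian tail. Adding the two pieces, one gets
\begin{align*}
  \norm{f-g}_2^2 \;\le\; \frac{C_m\, T^{2m+1}}{(m!)^2} + \frac{C'}{T},
\end{align*}
and one then chooses $T$ (roughly $T \sim (m!)^{2/(2m+2)} C_m^{-1/(2m+2)}$, which by Stirling is a small polynomial in $m$) to balance the two contributions. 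Working through Stirling will yield a bound of the form $c'/m^{\alpha}$ for some positive $\alpha$; the lemma only needs $\alpha \ge 1/8$, so the optimization can afford to be lossy.

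The main obstacle I anticipate is controlling $\E{|X|^m}$: matching the first $m$ moments of $f$ with those of a Gaussian constrains $\E{X^m}$ but not obviously $\E{|X|^m}$. I would address this either by (i) invoking an auxiliary moment bound (e.g.\ a crude estimate from the tail hypothesis on $\hat f$, since Fourier decay implies some smoothness/tail control on $f$), or (ii) upper-bounding the error term $\eps_m^f$ in the Taylor expansion of $\hat f$ directly through the high-frequency hypothesis rather than through $\E{|X|^m}$, turning the low-frequency estimate into one that only uses the quantities already at hand. Once this moment bound is in place, the balancing of the two pieces is a routine Stirling calculation.
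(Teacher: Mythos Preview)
Your approach is essentially the paper's: Parseval, split at a threshold, Taylor-with-remainder on the low side (via Theorem \ref{thm:characteristic}), and the $c/|\xi|$ tail plus Gaussian decay on the high side. The paper simply fixes the threshold at $L=m^{1/8}$ rather than optimizing, and your ``main obstacle'' dissolves by taking $m$ even (as the paper does), since then $\E{|X|^m}=\E{X^m}=\gamma_m$ by the moment-matching hypothesis and no auxiliary estimate is needed.
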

\begin{proof}
We will assume for the sake of simplicity that $m$ is even. By
Parseval's formula, we have:
\begin{align*}
\int \abs{f(x)-g(x)}^2 dx = \frac{1}{\sqrt{2\pi}} \int \abs{(f-g)\hat{}(\xi)}^2 d\xi
\end{align*}
Both $f$ and $g$ have tail bounds: $f$ by hypothesis, and $g$ because
the Fourier transform of a Gaussian is still a Gaussian. Thus if we
truncate the tails in an interval $[-L,L]$ where $L =m^{1/8}$:
\begin{align*}
  \int_{\R / [-L,L]} \abs{\hat{f}(\xi)}^2 d\xi & \le 2
  \int_{L}^{\infty} \frac{1}{\xi^2} d\xi \\
  & \le \frac{4}{L}
\end{align*}
The Fourier transform of a Gaussian is a Gaussian, and by applying a
standard Gaussian tail bound \cite{Feller}:
\begin{align*}
  \frac{1}{\sqrt{2\pi}} \int_{x}^{\infty} e^{-t^{2}/2} dt \le \left(
    \frac{1}{x} \right) \frac{e^{-x^{2}/2}}{\sqrt{2\pi}}
\end{align*}
We can then combine these estimates using the triangle inequality:
\begin{align*}
  \int_{\R / [-L,L]} \abs{\hat{f-g}(\xi)}^2 d\xi & \le  \int_{\R /
    [-L,L]} \abs{\hat{f}(\xi)}^2 + \abs{\hat{g}(\xi)}^2d\xi \\
  & \le \frac{6}{L}
\end{align*}
In the interval $[-L,L]$, we now apply Theorem \ref{thm:characteristic}:
\begin{align*}
  (\hat{f}-\hat{g})(\xi) & = \sum_{k=0}^m
  \frac{\EE{f}{x^k}-\EE{g}{x^k}}{k!} (i\xi)^k + (\eps_f(t) -
  \eps_g(t))\frac{(i\xi)^m}{m!} \\
  & = (\eps_f(t) - \eps_g(t))\frac{(i\xi)^m}{m!}
\end{align*}
Now we can bound the integral:
\begin{align*}
  \int_{-L}^L \abs{(f - g)\hat{}(\xi)}^2 d\xi & \le \int_{-L}^{L}
  \abs{(\eps_f(\xi) - \eps_g(\xi))\frac{(i\xi)^m}{m!}}^2 d\xi \\
  & \le 6\left( \frac{\E{x^m}}{m!}\right)^2 \int_{-L}^{L} t^{2m} dt \\
  & \le \frac{6}{(2^{m/2}(m/2)!)^2}\frac{2L^{2m+1}}{2m+1} \\
  & \le \frac{12}{2m+1} \exp \left( (2m+1)\log ( L ) - m \log (2)
  - m \log \left( \frac{m}{2} \right) +m \right) \\
  & \le \frac{c}{m} e^{-m}
\end{align*}
\end{proof}
We can also give an approximate version of this theorem:
\begin{lemma}[Approximate moments]\label{lemma:L2approximate}
  Fix $\eps>0$, let $f \in L^2 (\R)$ be a probability density over
  $\R$ whose first $m$ moments satisfy:
  \begin{align*}
    \abs{\EE{f}{x^k} - \gamma_k} \le \eps
  \end{align*}
  Suppose that the Fourier transform $\hat{f}$ satisifies a tail bound
  that $\abs{\hat{f}(\xi)} < c / \abs{\xi}$ for some $c>0$, then (for
  a standard Gaussian $g$);
  \begin{align*}
    \int_{\R} \abs{f(x)-g(x)}^2 dx \le \frac{c'}{m^{1/8}} + c'' m^2
    \eps^2 e^m
  \end{align*}
\end{lemma}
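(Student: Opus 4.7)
The plan is to mimic the proof of Lemma \ref{lemma:L2gaussian} and use Parseval's identity, splitting the frequency integral into a bounded window $[-L,L]$ with $L = m^{1/8}$ and its complement. The tail estimate is identical to the exact-moment case: since the assumption $\abs{\hat f(\xi)} < c/\abs{\xi}$ and the Gaussian tail bound on $\hat g$ do not depend on the matching of moments, we again get
\begin{align*}
\int_{\R \setminus [-L,L]} \abs{\widehat{(f-g)}(\xi)}^2\, d\xi \le \frac{c'''}{L} = \frac{c'''}{m^{1/8}}.
\end{align*}

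Inside $[-L,L]$, I would apply Theorem \ref{thm:characteristic} to both $\hat f$ and $\hat g$ to obtain the Taylor expansion
\begin{align*}
(\hat f - \hat g)(\xi) = \sum_{k=0}^m \frac{\EE{f}{x^k} - \gamma_k}{k!}(i\xi)^k + \bigl(\eps_f(\xi) - \eps_g(\xi)\bigr)\frac{(i\xi)^m}{m!}.
\end{align*}
Unlike the exact case, the polynomial sum no longer vanishes: each coefficient is bounded in absolute value by $\eps/k!$ by hypothesis. The Taylor remainder term is handled exactly as in Lemma \ref{lemma:L2gaussian}, contributing at most $(c/m)e^{-m}$, which is swallowed into $c'/m^{1/8}$.

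The new work is to control the contribution of the polynomial error term. Using the elementary bound $\abs{\sum_{k=0}^m a_k}^2 \le (m+1)\sum_{k=0}^m \abs{a_k}^2$ and $\abs{\EE{f}{x^k} - \gamma_k} \le \eps$, I would estimate
\begin{align*}
\int_{-L}^{L} \left\lvert \sum_{k=0}^m \frac{\EE{f}{x^k} - \gamma_k}{k!}(i\xi)^k \right\rvert^2 d\xi
\le (m+1)\eps^2 \sum_{k=0}^m \frac{1}{(k!)^2} \cdot \frac{2 L^{2k+1}}{2k+1}.
\end{align*}
The dominant contribution comes from the largest terms in $k$; bounding the sum by $2L(m+1)\eps^2 \sum_{k=0}^m L^{2k}/(k!)^2$ and recognizing the latter as a truncated modified-Bessel-type series $I_0(2L)$, which grows at most like $e^{2L}$, already gives a bound far better than $e^m$ for $L = m^{1/8}$. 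A cruder but simpler route: use Stirling to bound $L^{2k}/(k!)^2 \le (eL/k)^{2k}$ maximized at $k = \lfloor L \rfloor$, again giving something bounded by $e^{O(m^{1/8})}$, which is absorbed into $m^2 \eps^2 e^m$.

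The main obstacle, such as it is, is purely bookkeeping: tracking factorials, the factor $(m+1)$, and the powers of $L$ so that the final constants line up with the loose form $c'' m^2 \eps^2 e^m$ stated in the lemma. Since the target bound is very generous (much weaker than the Bessel-series estimate actually yields), no subtle analytic input beyond what already appears in Lemma \ref{lemma:L2gaussian} is needed, and combining the interior estimate with the tail bound completes the proof.
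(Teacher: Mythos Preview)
Your proposal is correct and follows essentially the same architecture as the paper: Parseval, the split at $L=m^{1/8}$, the identical tail bound, the Taylor expansion from Theorem~\ref{thm:characteristic}, and the Cauchy--Schwarz step $\bigl|\sum a_k\bigr|^2 \le (m+1)\sum|a_k|^2$ on the polynomial part. The only difference is in the final bookkeeping: the paper partitions the moment indices $k$ into dyadic blocks $[m/2^{i+2},m/2^i]$ and bounds each block's contribution by $\eps^2\exp(m/2^{i-1})$, whereas you bound the full sum $\sum_k L^{2k}/(k!)^2$ directly via the Bessel-type series $I_0(2L)\le e^{2L}$; your route is more direct and in fact yields a tighter bound than the stated $c''m^2\eps^2 e^m$.
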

\begin{proof}
  We proceed as in the previous lemma. It suffices for us to bound
  the integral over the interval $[-L,L]$. We apply Cauchy-Schwarz for
  a termwise estimate.
\begin{align*}
  & \int_{-L}^{L} \abs{ \sum_{k=0}^m
  \frac{\EE{f}{x^k}-\EE{g}{x^k}}{k!} (i\xi)^k + (\eps_f(t) -
  \eps_g(t))\frac{(i\xi)^m}{m!}}^2 d\xi \\
&  \le m \int_{-L}^L \sum_{k=0}^m \left(\frac{\EE{f}{x^k}-\EE{g}{x^k}}{k!}
  \xi^k\right)^2 + \left((\eps_f(t)
  -\eps_g(t))\frac{\xi^m}{m!}\right)^2 d\xi
\end{align*}
We can now partition the moments into powers of 2, so consider the
moments where $k \in [m/2^{i+2}, m/2^{i}]$: the integral of each
contributing term is now:
\begin{align*}
\int_{-L}^L&  \left(\frac{\EE{f}{x^k}-\EE{g}{x^k}}{k!}
  \xi^k\right)^2 d\xi = \frac{2(\EE{f}{x^k}-\EE{g}{x^k})^2L^{2k+1}
}{(2k+1)k!} \\
& \le 2\left(\EE{f}{x^k}-\EE{g}{x^k}\right)^2 \exp \left( \frac{2k+1}{8}\log(m)-k \log k
  +k \right)\\
& \le 2\left(\EE{f}{x^k}-\EE{g}{x^k}\right)^2
\exp \left(\frac{(m/2^i)+2}{4}\log(m)-\frac{m}{2^{i+2}} \log \left(
  \frac{m}{2^{i+2}} \right) +
    \frac{m}{2^{i}} \right) \\
& \le 2\left(\EE{f}{x^k}-\EE{g}{x^k}\right)^2
\exp \left( \frac{m}{2^{i-1}} \right)
\end{align*}
\end{proof}

Both of our lemmas so far in this section use a tail bound for the
Fourier transform. One way to obtain such a tail-bound is to examine
logconcave probability densities:
\begin{lemma}[Log-concave densities]\label{lemma:logconcavetailbound}
  Let $f:\R \to \R$ be a logconcave density which is isotropic and
  differentiable, then $\abs{\hat{f}(\xi)} \le 2/\abs{\xi}$.
\end{lemma}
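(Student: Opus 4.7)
The plan is to turn the Fourier decay estimate into an $L^1$ bound on $f'$ via integration by parts, then exploit unimodality of log-concave densities to evaluate that $L^1$ norm exactly, and finally invoke a standard pointwise bound on the peak of an isotropic log-concave density.

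First, since $f$ is a log-concave density on $\R$, its tails decay at least exponentially, so $f(x)\to 0$ as $x\to\pm\infty$. Differentiability of $f$ lets us integrate by parts:
\[
\hat{f}(\xi) = \int e^{i\xi x} f(x)\,dx = \left[\frac{e^{i\xi x}}{i\xi}f(x)\right]_{-\infty}^{\infty} - \frac{1}{i\xi}\int e^{i\xi x} f'(x)\,dx = -\frac{1}{i\xi}\int e^{i\xi x} f'(x)\,dx.
\]
Taking absolute values gives $|\hat{f}(\xi)| \le \|f'\|_{L^1}/|\xi|$, reducing the claim to $\|f'\|_{L^1}\le 2$.

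Second, I use the unimodality forced by log-concavity. Writing $f=e^{-\phi}$ with $\phi$ convex, $f$ is increasing on $(-\infty, m^\ast)$ and decreasing on $(m^\ast,\infty)$, where $m^\ast$ is the (possibly non-unique, but then taken as any point in) mode. Hence
\[
\|f'\|_{L^1} = \int_{-\infty}^{m^\ast} f'(x)\,dx - \int_{m^\ast}^{\infty} f'(x)\,dx = f(m^\ast) + f(m^\ast) = 2 f(m^\ast),
\]
since $f$ vanishes at $\pm\infty$. So it remains to show $f(m^\ast)\le 1$.

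Third, and this is the main (classical) obstacle, I need the pointwise bound $\|f\|_\infty \le 1$ for every isotropic log-concave density on $\R$. The idea is to compare $f$ with a ``tent'' (or exponential) minorant sharing the same peak: by log-concavity, on any interval $[m^\ast-a,m^\ast+b]$ the function $f$ lies above the chord of $\log f$, which gives explicit exponential lower bounds in terms of $M:=f(m^\ast)$. Combining such a minorant with the normalization $\int f = 1$ and the second-moment constraint $\int x^2 f\,dx = 1$ forces $M\le 1$; equivalently, stretching the minorant until it has unit variance shows that the largest attainable peak at variance $1$ is realized by the shifted exponential $f(y)=e^{-(y+1)}\mathbb{1}_{y\ge -1}$, which has $M=1$ exactly. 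Plugging $f(m^\ast)\le 1$ into the second step yields $\|f'\|_{L^1}\le 2$ and hence $|\hat{f}(\xi)|\le 2/|\xi|$, as claimed.
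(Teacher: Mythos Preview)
Your proof is correct and follows essentially the same route as the paper: integrate by parts to reduce to $\|f'\|_{L^1}$, use unimodality of log-concave densities to evaluate $\|f'\|_{L^1}=2\max f$, and then invoke the bound $\|f\|_\infty\le 1$ for isotropic log-concave densities on $\R$. The only difference is that the paper cites this last bound as a black-box lemma from the literature, whereas you sketch why it holds via comparison with the extremal shifted exponential; your sketch is correct and the extremal example you give indeed has peak value exactly $1$.
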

\begin{proof}
  We start by bounding the magnitude of the Fourier transform by the
  integral of the derivative. 
  \begin{align*}
    \hat{f}(\xi) &= \int_\R e^{i \xi x} f(x) dx \\
    & = \int_\R \frac{1}{i\xi}
    \frac{d}{dx} e^{i \xi x} f(x) dx \\
    & = \int_\R \frac{1}{i \xi} e^{i \xi x} \frac{df(x)}{dx} dx
  \end{align*}
where the third line follows by integration by parts and noting that
in the limit $f(x) \to 0$ as $x \to \pm \infty$. This allows us to
bound $\hat{f}(\xi)$:
\begin{align*}
  \abs{\hat{f}(\xi)} \le \frac{1}{\abs{\xi}} \int_\R \abs{f'(x)} dx
\end{align*}
Let us now turn to logconcave densities. Since $f$ is logconcave, we
can write it as $f(x)=e^{h(x)}$ where $h$ is concave. Because $f$ is a
probability density, we must have $h(x) \to -\infty$ as $x \to \pm
\infty$, in which case since $h$ is concave there exists a unique
interval $[a,b]$ where $h(x)$ takes a maximum. This fully determintes
the sign of the derivative: $h'(x) = 0$ in this interval $h'(x) < 0$
for $x < a$ and $h'(x) > 0$ for $x >b $. The same signs pattern holds
for $f'$, as multiplication by $e^{-h(x)}$ does not change the
sign. We can now compute the integral by applying the fundamental
theorem of calculus:
\begin{align*}
  \int_R \abs{f'(x)} dx & = \int_{-\infty}^{a} f'(x) dx + \int_{a}^{b}
  f'(x) dx + \int_{b}^{\infty} -f'(x) dx \\
  & = \lim_{t \to \infty} \left( f(a) - f(-t)\right) +\left( f(b) -
    f(a) \right) + \left( - f(t) + f(b) \right) \\
  & = f(a) + f(b) \\
  & = 2f(a)
\end{align*}
We now apply the following lemma \cite{LV07}, which yields the desired
result.
\begin{lemma}[Upper bound on logconcave functions]\label{lemma:logconcavemax}
  Let $f$ be an isotropic logconcave density in one dimension, then
  $\abs{f(x)} \le 1$.
\end{lemma}
\end{proof}
Then as a corollary to Lemma \ref{lemma:L2gaussian}:
\begin{corollary}[$L^2$ distance for logconcave densities]
\label{corollary:moments}
  Let $f:\R \to \R$ be an isotropic logconcave density whose first
  $m$ moments match a Gaussian $g$, then:
  \begin{align*}
    \norm{f-g} \le \frac{c}{m^{1/8}}
  \end{align*}
\end{corollary}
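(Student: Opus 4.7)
The plan is to obtain the corollary as an immediate combination of Lemma~\ref{lemma:L2gaussian} with Lemma~\ref{lemma:logconcavetailbound}. Lemma~\ref{lemma:logconcavetailbound} asserts that any isotropic logconcave density $f$ on $\R$ satisfies the Fourier decay bound $|\hat{f}(\xi)| \le 2/|\xi|$, which is precisely the hypothesis $|\hat{f}(\xi)| \le c/|\xi|$ required by Lemma~\ref{lemma:L2gaussian}. Since the corollary additionally assumes that the first $m$ moments of $f$ agree with those of the standard Gaussian $g$, Lemma~\ref{lemma:L2gaussian} applies directly and yields
\[
\int_{\R} |f(x)-g(x)|^{2}\,dx \;\le\; \frac{c'}{m^{1/8}},
\]
which is the stated bound (interpreting $\|f-g\|$ as the $L^2$ norm, up to an absolute constant).

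The one technical wrinkle is that Lemma~\ref{lemma:logconcavetailbound} assumes $f$ is differentiable, whereas a general isotropic logconcave density is only guaranteed to be continuous and differentiable almost everywhere. The standard workaround is to convolve $f$ with a narrow Gaussian mollifier $\phi_\eta$: the smoothed density $f_\eta = f*\phi_\eta$ is smooth, still logconcave (a convolution of logconcave densities is logconcave by Prékopa–Leindler), and after rescaling to be isotropic its first $m$ moments differ from the original by $O(\eta)$. Applying the combined bound to $f_\eta$ and then sending $\eta\to 0$ transfers the inequality back to $f$.

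The main obstacle, such as there is one, is really just bookkeeping: checking that no other step of Lemma~\ref{lemma:L2gaussian} secretly exploits properties of $f$ beyond the Fourier tail bound and the moment-matching hypothesis. A glance at that proof confirms this — the tail contribution $|\xi|>L$ is controlled purely through $|\hat f(\xi)|\le c/|\xi|$ plus the analogous (much stronger) decay of $\hat g$, and the interval $[-L,L]$ is handled by Taylor-expanding the characteristic functions and cancelling the first $m$ moments. Hence the corollary follows with no further analytic work.
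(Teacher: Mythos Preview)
Your proposal is correct and follows essentially the same approach as the paper. The paper also combines Lemma~\ref{lemma:L2gaussian} with Lemma~\ref{lemma:logconcavetailbound} for the differentiable case, and for non-differentiable $f$ it likewise smooths by an independent Gaussian (writing $Y_\tau=(1-\tau)X+\sqrt{2\tau-\tau^2}\,Z$ so that isotropy is built in, rather than convolving and then rescaling as you do), checks that logconcavity and the first $m$ moments are preserved up to $O(\tau)$, applies the bound, and lets $\tau\to 0$.
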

\begin{proof}
  First, consider the case when $f(x)$ is differentiable. We already
  know that $f \in L^1(\R)$; since $f(x)$ is bounded by 1 (Lemma
  \ref{lemma:logconcavemax}), then we have that $f(x) \in L^2 (\R)$
  because $f(x)^2 \le \abs{f(x)}$. We can now apply Theorem
  \ref{lemma:L2gaussian} with the tail bound guaranteed by Lemma
  \ref{lemma:logconcavetailbound}.

  For the case when $f(x)$ is \emph{not} differentiable, we can
  perturb by a small Gaussian random variable: let $X \sim f$, and let
  $Z \sim N(0,1)$ be an independent normal variable. Fix a parameter
  $\tau \in [0,1]$:
  \begin{align*}
    Y_\tau = (1-\tau) X + \sqrt{2\tau+\tau^2} Z
  \end{align*}
  is isotropic. Moreover, since this the sum of two independent
  logconcave random variables, its density is also logconcave. Let
  $h_1$ denote the density of $(1-\tau)X$ and $h_2$ the density of
  $\sqrt{2\tau+\tau^2} Z$, then the density of our new random variable
  is given by:
  \begin{align*}
    h_1 \ast h_2(x) = \int_{-\infty}^{\infty} h_1(x-t) h_2(t) dt
  \end{align*}
  The convolution of these two distributions is (infinitely)
  differentiable because $h_2$ is (infinitely) differentiable:
  \begin{align*}
    \frac{d}{dx} \left(  h_1 \ast \right)= \left( \frac{d}{dx} h_1
    \right)\ast h_2 = h_1 \ast \left( \frac{d}{dx} h_2 \right)
  \end{align*}
  Thus $Y_\tau$ satisfies the hypotheses of Lemma
  \ref{lemma:logconcavetailbound}, and we have a tail bound for
  $Y_\tau$ as long as $\tau > 0$.

  The first $m$ moments of $Y$ are also close to those of $X$: if we
  compute the $j^{th}$ moment for example:
  \begin{align*}
    \E{ Y_\tau^j } & = \E{\left((1-\tau) X + \sqrt{2\tau+\tau^2} Z\right)^j} \\
    & = (1-\tau)^j \E{X^j} + \sum_{i=1}^j \binom{i}{j} (1-\tau)^j 
    (\sqrt{2\tau+\tau^2})^{i-j} \E{X^i} \E{Z^{i-j}}
  \end{align*}
  Thus we can pick $\tau$ small enough so that:
  \begin{align*}
    \abs{ \E{ Y_\tau^j } - \E{X^j}} \le \eps 
  \end{align*}
  for any $\eps > 0$. In the proof of Lemma \ref{lemma:L2gaussian}
  then, instead of the moment differences from the first $m$ terms of
  the characteristic function being 0, we can make them arbitrarily
  small by choosing smaller $\tau$. Thus we have the conclusion of
  Lemma \ref{lemma:L2gaussian} for $Z$. To conclude, we note that:
  \begin{align*}
    \lim_{\tau \to 0} \norm{h_1 \ast h_2 - f}_2 = 0
  \end{align*}
  in which case, taking $\tau$ small enough allows us to apply the
  triangle inequality to:
  \begin{align*}
    \norm{f-g} \le \norm{f-h} + \norm{h-g}
  \end{align*}
\end{proof}

We also need a lemma to convert our $L^2$ estimates to $L^1$
estimates. This is not general in possible, but since logconcave
functions have exponential tailbounds:
\begin{lemma}[$L^2$ to $L^1$]\label{lemma:L2toL1}
  Let $f,g:\R \to \R$ isotropic logconcave densities such that for
  some $m > 0$ that:
  \begin{align*}
    \int \abs{f(x)-g(x)}^2 dx \le \frac{1}{m}
  \end{align*}
  then:
  \begin{align*}
    \int \abs{f(x)-g(x)} dx \le \frac{c \log (m)}{\sqrt{m}}
  \end{align*}
  for some absolute constant $c > 0$.
\end{lemma}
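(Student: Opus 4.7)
The plan is to split the $L^1$ integral into a bounded window $[-T,T]$ and its complement, bound the window via Cauchy--Schwarz (using the hypothesis), and bound the tails by exploiting exponential decay of isotropic logconcave densities.

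First I would apply Cauchy--Schwarz on $[-T,T]$:
\[
\int_{-T}^{T}\abs{f(x)-g(x)}\,dx \;\le\; \sqrt{2T}\cdot\left(\int_{-T}^{T}\abs{f(x)-g(x)}^{2}\,dx\right)^{1/2} \;\le\; \sqrt{\frac{2T}{m}},
\]
using the hypothesis that $\int\abs{f-g}^{2}\le 1/m$. This gives the interior contribution.

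Next I would control the tails $\int_{\abs{x}\ge T}\abs{f-g}\,dx$ using the triangle inequality $\abs{f-g}\le f+g$. The key input is a standard fact about one-dimensional isotropic logconcave densities: they admit a uniform exponential tail bound of the form $\Pr_{X\sim f}[\abs{X}\ge t]\le C_{0}e^{-t/C_{0}}$ for an absolute constant $C_{0}$ (this follows from Lemma~\ref{lemma:logconcavemax}, which was already used above, together with the concavity of $\log f$ on each half-line). Applied to both $f$ and $g$, this yields
\[
\int_{\abs{x}\ge T}\abs{f(x)-g(x)}\,dx \;\le\; \int_{\abs{x}\ge T}(f+g)\,dx \;\le\; 4C_{0}\,e^{-T/C_{0}}.
\]

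Finally I would optimize $T$. Setting $T = C_{0}\log m$ makes the tail term $O(1/m)$, which is dominated by the interior term $\sqrt{2T/m} = O(\sqrt{\log m/m})$. Adding the two contributions gives the desired estimate
\[
\int \abs{f(x)-g(x)}\,dx \;\le\; O\!\left(\sqrt{\frac{\log m}{m}}\right) + O\!\left(\frac{1}{m}\right) \;\le\; \frac{c\log m}{\sqrt{m}}
\]
for an absolute constant $c>0$ and $m$ large enough (for small $m$ the bound is trivial since $\int\abs{f-g}\le 2$).

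The only non-routine step is the exponential tail bound for isotropic one-dimensional logconcave densities; this is a well-known fact (and can be derived directly by combining the pointwise bound $f\le 1$ from Lemma~\ref{lemma:logconcavemax} with the fact that $\log f$ is concave and the variance is $1$). Everything else is a balanced truncation argument.
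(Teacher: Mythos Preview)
Your proposal is correct and follows essentially the same approach as the paper: split at a threshold $T\asymp\log m$, apply Cauchy--Schwarz on $[-T,T]$ to get the $\sqrt{T/m}$ interior term, and bound the tails via an exponential decay estimate for isotropic logconcave densities together with $\abs{f-g}\le f+g$. The only cosmetic difference is that the paper invokes the Gu\'edon--Milman concentration bound for the tails whereas you derive the exponential tail from the pointwise bound of Lemma~\ref{lemma:logconcavemax} and concavity of $\log f$; both routes give the same $O(1/m)$ tail contribution.
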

\begin{proof}
  Fix $L = (\frac{1}{c})\log( m)$, then as before:
  \begin{align*}
    \int \abs{f(x)-g(x)} dx & =     \int_{\abs{
        x} \le L} \abs{f(x)-g(x)} dx  +
      \int_{\abs{x} > L} \abs{f(x)-g(x)} dx
  \end{align*}
  We can now use tail bound for logconcave functions over the tail \cite{gm11}, in
  particular, for isotropic logconcave random variables $X$ in
  $\R^n$, we have (for some fixed absolute constants $c,C > 0$:
  \begin{align*}
    \prob{ \abs{\norm{x}-\sqrt{n}} \ge t \sqrt{n}} \le C \exp \left(
      -cn^\frac{1}{2} \min (t,t^3) \right)
  \end{align*}
  In one dimension, this shows that the integral of our tail is
  bounded by $C / m$ (after application of triangle
  inequality). Now inside the interval $[-L,L]$, we will apply the
  Cauchy-Schwartz inequality:
\begin{align*}
  \int_{[-L,L]} \abs{f(x) - g(x) } dx & \le \left( \int_{[-L,L]}
    \abs{f(x)-g(x)}^2 dx \right)^{1/2} \left( \int_{[-L,L]} 1 dx
  \right)^{1/2} \\
  & \le \frac{\sqrt{2}}{c\sqrt{m}} \log (m) 
\end{align*}
\end{proof}
\begin{proof}[Proof of Theorem \ref{theorem:L1gaussian}]
The proof follows from Lemma \ref{lemma:L2approximate}, Corollary
\ref{corollary:moments} and Lemma \ref{lemma:L2toL1}, noting that the
the technique of Corollary \ref{corollary:moments} can be applied to
Lemma  \ref{lemma:L2approximate} in the same way as Lemma 
\end{proof}

\section{Applications}\label{sec:applications}
In this section, we give some applications of our general theorems and we
some explicit bounds for moment-learnable triples and the
running time of our algorithms on these triples. We make explicit in our
analysis the three key contributions to runtime -- how many moments
are required, how efficiently these moments can be sampled, and how
efficiently the hypothesis can be learned in the $k$-dimensional
relevant subspace.

\subsection{Moment estimation} \label{sec:moments}
In this section, we highlight some further consequences and subtleties
of using moments in algorithms. The use of moments is a very natural
way of studying random variables. For example, the inequalities of
Markov, Chebyshev and Chernoff are statements about the relationship
between a finite sequence of moments and the tail of a
distribution. If we consider an infinite sequence of moments, often
these will determine the distribution uniquely (the moments problem).

One of the critical terms in the runtime given in our main theorems is
$C_F(m, \eps)$: the sample complexity of approximating the $m^{th}$
moment tensor of distribution $F$ to within accuracy (in the moment
metric above). The competitiveness of our algorithm with other
learning algorithms depends on the number of moments we need (ie the
previous section), and the number of samples we need to attain the
required accuracy. This latter problem is well-studied, and there is
an impressive body of literature surrounding it. In particular, when
$m = 2$, the problem is of interest to random matrices community, who
have provided strong bounds in a number of important cases. We will
provide a brief overview of these results, but this by no means is
intended to be a comprehensive survey of the literature!  When the
distribution $F$ is isotropic and almost surely supported in a ball of
radius $O(\sqrt{n})$, Rudelson \cite{rudelson} gave a very strong
bound on $C_F (n, \eps)$ to achieve the following guarantee:
\begin{align*}
\E{\norm{\frac{1}{N}\sum_{i=1}^N x_i x_i^T - I}} \le \eps.
\end{align*}
Rudelson required only $O(n \log(n))$ samples when $F$ is almost
surely supported on a ball of radius $O(\sqrt{n})$, and where the
constant is dependent on $\eps$. Adamczak et al. \cite{adamczak} were
able to improve this bound of $O(n)$ samples. Their assumptions were
support on a ball of radius $O(\sqrt{n})$ as before, and a
subexponential moment condition:
\begin{align*}
\sup_{\norm{v}=1} \E{(x^Tv)^p}^{1/p} = O(p)
\end{align*}
As an application, they showed that logconcave distributions satisfy
these assumptions, and thus their covariance matrices can be sampled
very efficiently. Subsequent work by Vershynin and collaborators
\cite{sriver, vershynin} has broadened the class of efficiently
samplable covariance matrices to distributions where $2 + \eps$
moments exist and also to distributions where the $m^{th}$ moment is
bounded by $K^m$ for some constant $K$.

Finally, in the setting of higher moments, there is the result of
Guedon and Rudelson \cite{guedon}, which gives the sample complexity
of sampling for higher moments of logconcave distributions.  Their
result is that $O(n^{m/2} \log(n))$ samples are necessary to
approximate moments in all directions up to an $1 + \eps$ factor. In
particular, this leads to the observation that explicitly computing a
sample moment tensor from $n^{m/2}$ samples is actually less efficient
than simply storing the points, computing the inner products to the
appropriate powers and summing. This last result is used in our
applications in Section \ref{sec:applications}, as it allows us to
handle many distributions efficiently, including Gaussians and uniform
distributions over convex bodies.

\subsection{Robust learning}\label{sec:learning}

For learning over a $k$-dimensional subspace, we have the following
proposition:
\begin{proposition}[VC dimension]\label{proposition:vc}
  Let $\HH$ be a hypothesis class with VC dimension $d$. Let $\ell \in
  \HH$ be a subspace junta with relevant subspace $V$, where
  $dim(V)=k$. Let $U$ be a $k$ dimensional subspace where
  $\ell(\pi_U)$ labels a $1-\eps$ fraction of points correctly. Then
  we can learn $\ell$ with sample complexity $ (1/\eps)^{c_2 d
    \log(1/\eps) + c_2 \log(2/\delta)}$ with probability at
  $1-\delta$.
\end{proposition}
\begin{proof} To come up with a hypothesis over $U$, we take a new set
  of samples $S$ of size $m$ and project them onto $U$.  By robustness
  of $\HH$ under $F$, we know that $\prob{\ell(\pi_U(x)) =\ell(x)} \ge
  1-\eps$. Then we guess the correct labels by trying all relabelings
  of subsets of size $\eps m$. One of these relabelings will give us a
  labeling consistent with $\ell$ viewed as a function of the
  $k$-coordinates in $U$. For each relabeling we attempt to learn the
  labeling function. On the correct relabeling, we can learn $\ell$ to
  with at most $\eps$ fraction of errors. By the theorem above, our
  total error over $\R^n$ is $2\eps$.

To bound $m$, we apply an idea from \cite{behm89} via a slight
extension (Theorem 5 of \cite{vempala06}). The required bound is $m
\ge (32/\eps) \log(C[m]) + (32/\eps) \log (2/\delta)$ where $C[m]$ is
the maximum number of distinct labelings obtainable using concepts in
$\HH$ over $\R^k$. In particular, we have $C[m] \le \sum_{i=0}^d
\binom{m}{i}$, whence $C[m] \le m^d$. A computation reveals that $m
\ge c (d/\eps) \log(1/\eps) + (c'/\eps) \log( 2/\delta)$ suffices.
The number of relabelings is $\binom{m}{\eps m}$, which is upper
bounded by $(m/\eps)^{m\eps} \le (1/\eps)^{c_2 d \log(1/\eps) + c_2
  \log(2/\delta)}$.
\end{proof}

\ifthenelse{\boolean{longversion}}{
}{}

As mentioned previously, we can view the work of \cite{Vempala10} as a
specialization of our algorithms to the $j=m=2$ case in
\textbf{FindBasis}. We give examples here where the second moment does
\emph{not} suffice, and we must use higher moments to resolve the
relevant subspace $V$. Our examples are:
\ifthenelse{\boolean{longversion}}{ (\ref{example:rectangle})
  hyperrectangles (cuboids) in balls, }{} (\ref{ex:ball}) subsets of
balls, and (\ref{ex:compact}) concepts which have compact support. In
all our examples, the algorithm used is \textbf{LearnUnderGaussian}.
We will prove that we can find the relevant subspaces by running
\textbf{FindBasis} on either the full distribution or distribution
conditioned on positive labels (the ``positive'' distribution).

\ifthenelse{\boolean{longversion}}{
We use the uniform distribution over a ball in $\R^k$ in the relevant
subspace. We need the following elementary fact.
\begin{claim}[Isotropic balls]\label{claim:balls}
  Let $F$ be the uniform distribution (with density $\rho$) over
  $\B_R(0) \subset \R^n$ where $R=\sqrt{n+2}$, then $\E{(x^Tu)^2}=1$
  for any unit vector $u$.
\end{claim}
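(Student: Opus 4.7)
The plan is to exploit the spherical symmetry of the uniform distribution on $\B_R(0)$ to reduce the question to a one-dimensional integral. Because the ball is invariant under any orthogonal transformation, the covariance matrix $\Sigma = \E{xx^T}$ commutes with every orthogonal matrix, so it must be a scalar multiple of the identity. Hence $\E{(x^Tu)^2} = u^T \Sigma u$ is the same for every unit vector $u$, and it suffices to compute a single number, which is most conveniently expressed as $\E{\|x\|^2}/n$ via the identity $\E{\|x\|^2} = \sum_{i=1}^n \E{x_i^2} = n \E{x_1^2}$.

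Next I would compute $\E{\|x\|^2}$ directly. Switching to spherical coordinates, the uniform density factors so that for any radial function $h(\|x\|)$ we have
\begin{align*}
\E{h(\|x\|)} = \frac{\int_0^R h(r)\, r^{n-1}\, dr}{\int_0^R r^{n-1}\, dr}.
\end{align*}
Applying this with $h(r) = r^2$ yields
\begin{align*}
\E{\|x\|^2} = \frac{R^{n+2}/(n+2)}{R^n/n} = \frac{n R^2}{n+2}.
\end{align*}

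Finally, substituting $R = \sqrt{n+2}$ gives $\E{\|x\|^2} = n$, whence $\E{(x^Tu)^2} = \E{x_1^2} = 1$ for every unit vector $u$, which is the claim. There is no real obstacle here; the only thing one has to be a bit careful about is justifying the reduction to a scalar matrix, but this follows immediately from $O(n)$-invariance of the density, and the radial integral is elementary.
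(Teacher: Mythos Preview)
Your proof is correct. The paper does not actually supply a proof of this claim; it simply introduces it as ``the following elementary fact'' and moves on, so there is nothing to compare against beyond noting that your argument is exactly the standard verification one would expect: $O(n)$-invariance forces $\Sigma$ to be scalar, and the radial integral $\E{\|x\|^2} = nR^2/(n+2)$ finishes it.
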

By a hyperrectangle, we refer to a region of space which is the
Cartesian product of closed intervals i.e.\ $S= [a_i,b_i]\times\cdots
\times [a_k,b_k] \subset \R^k$:
\begin{example}[Hyperrectangles in balls]\label{example:rectangle}
  Let $F=F_VF_W$ where $F_V$ is a uniform distribution over a ball
  $\mathbb{B}$ and $k=dim(V)$, $F_W$ is any Gaussian over $n-k$
  dimensions. Let $S\subset \mathbb{B}$ denote a (hyper)rectangle in
  $V$. Take the hypothesis class $\HH=\{(\chi_S(\pi_V))(x): S \subset
  \mathbb{B}\}$ to be the set of functions which assigns positive
  labels to points whose projection to $V$ lies in the interior of
  rectangle $S$. 
\end{example}

\begin{proposition}
  The triple $(k,F,\HH)$ as defined in Application \ref{example:rectangle}
  is $(4,6/(5k))$ moment-learnable with time and sample complexity
  $\poly (k,1/\eps) + C_{k,\eps} n^2$.
\end{proposition}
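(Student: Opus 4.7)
The plan is to verify both hypotheses of Theorem \ref{thm:learning-gaussian-noise} for this triple: $(4, 6/(5k))$-moment-learnability and robustness with polynomially decaying $g$.

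Moment-learnability is immediate from the rotational symmetry of $F_V$. A standard Beta-distribution (or spherical-coordinates) computation for the uniform distribution on the isotropic ball of radius $\sqrt{k+2}$ in $\R^k$ gives $\EE{F_V}{(x^T u)^4} = 3(k+2)/(k+4)$ for every unit $u \in V$. Since $\gamma_4 = 3$, we obtain $\abs{\EE{F_V}{(x^T u)^4} - \gamma_4} = 6/(k+4) \ge 6/(5k)$ for all $k \ge 1$. Thus $F_V$ itself is $(4, 6/(5k))$-moment distinguishable in every direction of $V$, independently of which rectangle $S \subset \mathbb{B}$ defines $\ell$, so the third condition of moment-learnability holds along any minimal relevant subspace $P \subseteq V$; the first two are built into the definition of the triple.

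The delicate step is robustness. Given an orthonormal basis $\{u_i\}$ of some $U$ with $\abs{u_i^T \pi_V(u_i)} \ge 1-\eps$, the identity $u_i \cdot \pi_V(u_i) = \norm{\pi_V(u_i)}^2$ yields $\norm{\pi_W(u_i)}^2 \le \eps$. For $x = x_V + x_W \sim F$, write $x^T u_i = x_V^T \pi_V(u_i) + x_W^T \pi_W(u_i)$; each noise term $x_W^T \pi_W(u_i)$ is centered Gaussian with variance at most $\eps$. A union bound over $i \le k$ shows that with probability at least $1 - O(\eps^{1/2})$ every noise contribution is of size $O(\eps^{1/2})$, so under the natural near-identity isometric identification of $U$ with $V$ sending each $u_i$ to the unit vector in the direction of $\pi_V(u_i)$, the $U$-coordinates of $x$ match its $V$-coordinates up to additive error $O(\eps^{1/2})$. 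Consequently, the image rectangle hypothesis $S' \subset U$ corresponding to $S$ under this identification agrees with $\ell$ on every $x$ whose $V$-projection lies at distance more than $O(\eps^{1/2})$ from $\partial S$. Because $S \subset \mathbb{B}$ has $(k-1)$-dimensional surface volume bounded by a constant depending only on $k$, the uniform-ball measure of that boundary tube, and hence $g(\eps)$, is $O(\eps^{1/2})$ with a $k$-dependent constant.

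With both conditions verified, Theorem \ref{thm:learning-gaussian-noise} applies with $m = 4$, $\eta = 6/(5k)$, and $c = 1/2$. The low-dimensional learning problem $T(k, \eps)$ reduces to PAC-learning arbitrarily rotated hyperrectangles in $\R^k$, a concept class of VC dimension $O(k^2)$, so Proposition \ref{proposition:vc} yields $T(k, \eps) = \poly(k, 1/\eps)$. The moment bound $M$ depends only on $k$, so the $\poly(n, \ldots)$ factor in Theorem \ref{thm:learning-gaussian-noise} has $n$-degree independent of $k$, while by Guedon--Rudelson $C_F(n, 4, \eps)$ is $O(n^2 \log n)$ up to a $k$-dependent constant. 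Combining gives the claimed $\poly(k, 1/\eps) + C_{k, \eps} \cdot n^2$ complexity. The main obstacle is the robustness estimate, where care is required both to set up the natural identification $U \leftrightarrow V$ and to uniformly bound the tube-around-$\partial S$ measure in terms of the surface area of a rectangle inscribed in the ball.
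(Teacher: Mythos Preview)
Your approach is correct but differs substantially from the paper's. The paper establishes moment-distinguishability through the \emph{positive} distribution $F_V^+$ (uniform on the rectangle $S$): after isotropic transformation it computes $\E{x_i^2}=S_i^2/3$ along each axis, so second moments already reveal every axis with $S_i\neq\sqrt{3}$; on the remaining axes it computes $\E{x_i^4}=9/5$, and then iterates Lemma~\ref{lemma:representation} to obtain $\E{(x^Tu)^4}=(9/5-3)\sum u_i^4+3$, whence the gap is at least $(6/5)\cdot(1/k)=6/(5k)$. You instead exploit the ``either $F_V$ or $F_V^+$'' clause of moment-learnability and work with $F_V$ itself, computing the fourth moment of the isotropic uniform ball as $3(k+2)/(k+4)$; this gives the cleaner gap $6/(k+4)\ge 6/(5k)$ uniformly over $V$ and makes the argument independent of the particular rectangle $S$. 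Your route is shorter and is in fact the same computation the paper carries out for the more general Application~\ref{ex:ball}; what it sacrifices is the paper's illustration of how $F^+$ can be used when $F$ alone might seem insufficient. You also supply an explicit robustness argument (boundary-tube estimate yielding $g(\eps)=O(\eps^{1/2})$), which the paper omits; note, however, that the paper's robustness definition asks that $\ell(\pi_U(x))=\chi_S(\pi_V(\pi_U(x)))$ agree with $\ell(x)$, so your perturbation should be phrased as $\|\pi_V(\pi_U(x))-\pi_V(x)\|$ being small rather than via an auxiliary rectangle $S'\subset U$, though the estimate itself is the same.
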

\begin{proof}
  Without loss of generality, we may assume that $\mathbb{B} =
  \B_{\sqrt{n+2}}(0)$ after isotropic transformation, and that the
  Gaussian over $F_W$ is a standard $n$-dimensional
  Gaussian. Furthermore, we may assume that $S$ is centered on the
  origin as well (i.e.\ we apply Lemma \ref{lemma:translation} to the
  positively labeled points).

Suppose we now run \textbf{LearnUnderGaussian} on the positively
labeled samples. We start with the second moment ($r=2$) in our
algorithm \textbf{FindBasis}: the second moments of a uniform
distribution over a rectangle are fully determined by the second
moments along the axes of the rectangle. In particular,
\textbf{FindBasis} using the second moments will simply give us every
axis of the rectangle where the second moment is not 1. A simple
calculation of the moments of a uniform distribution over a rectangle
along axis $x_i$ where the rectangle has length $2S_i$ gives:
\begin{align*}
  \E{x_i^2} & = \int_{-S_i}^{S_i} x_i^2 \frac{1}{2S_i} dx_i = \frac{S_i^2}{3}.
\end{align*}
Thus, using the second moment will give us all the axes of our
hyperrectangle except where the rectangle has length $2S_i =
2\sqrt{3}$. Projecting orthogonally to these axes, we now consider the
third moments ($r=3$): the third moment of our uniform rectangle is
clearly 0 in every direction by symmetry of the rectangle. Thus, we
turn to the fourth moment -- note that fixing $S_i=\sqrt{3}$ fixes the
fourth moment along each axis of the rectangle, in particular:
\begin{align*}
  \E{x_i^4} & = \int_{-S_i}^{S_i} x_i^4 \frac{1}{2S_i} dx_i  = \frac{9}{5}.
\end{align*}
Unfortunately, the equality of the fourth moment along the axes of a
rectangle does not necessarily imply the same fourth moment in every
direction. However, iterating Lemma \ref{lemma:representation} allows us to
bound the fourth moments away from the fourth moment of a Gaussian
$\gamma_4=3$:
\begin{align*}
  \E{(x^Tu)^4} = \left(\frac{9}{5} - \gamma_4\right) \sum_{i \in R'}
  u_i^4 + \gamma_4
\end{align*}
where the sum is taken over directions corresponding to axes where
$S_i=\sqrt{3}$. Now by applying the Lagrangian style techniques of
Lemma \ref{lemma:support}, we can bound this by:
\begin{align*}
  \E{(x^Tu)^4} \le \gamma_4 - \frac{6}{5k}
\end{align*}
Thus, we have our moment learnability using only the fourth moment! 
Now that we have the relevant
subspace $V$, we can simply learn our rectangle in a dimension $k$
space, which takes $\poly(k)$ time. Moreover, note that since all the
distributions are logconcave, we can apply the moment sampling results
of Guedeon and Rudelson mentioned in Section \ref{sec:moments} -- in
particular, we can take the number of samples required to be
$C_F(m,\eps) = C_\eps n^2$. Thus this gives a final runtime of
$\poly(k) + C_{k,\eps} n^2$ where $C_{k, \eps}$.
\end{proof}
The key point here is that we have very low polynomial dependence in
$n$. This conforms well with our model where we think of $k$ as being
small compared to $n$. We can, in fact, prove a stronger result --- we
can always find the relevant subspace if $F_V$ is a uniform
distribution over a ball:
}{}
\begin{example}[Uniform distributions over balls]\label{ex:ball}
  Let $F=F_VF_W$ where $F_V$ is a uniform distribution over a ball
  $\B$ and $k=dim(V)$, $F_W$ is a Gaussian.  Let $\HH$ be a robust
  hypothesis class which we can learn with complexity bounded by $T(k,
  \eps)$.
\end{example}
\begin{proposition}
  The triple $(k,F,\HH)$ as defined in Application \ref{ex:ball} is
  $(4,\Omega(1))$ moment-learnable, with the time and sample
  complexity bounded by $T(k,\eps) + C_{k,\eps}n^2$.
\end{proposition}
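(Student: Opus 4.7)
The plan is to verify that the triple satisfies the definition of $(4,\Omega(1))$-moment-learnability and then invoke Theorem \ref{thm:learning-gaussian-noise}. The only non-routine step is exhibiting moment-distinguishability with the claimed parameter $\eta$; the rest is assembling pieces already developed in the paper and an off-the-shelf sample complexity bound for logconcave covariances.

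First I would reduce to the isotropic case by applying a single linear transformation (this does not change the $k$-subspace junta structure, only the hypothesis class). After this reduction $F_V$ is uniform on $\mathbb{B}_{\sqrt{k+2}}(0) \subset V$ and $F_W$ is a standard Gaussian on $W$. Because $\mathbb{B}_{\sqrt{k+2}}(0)$ is rotationally symmetric inside $V$, for every unit vector $u \in V$ the marginal of $F_V$ along $u$ is the same one-dimensional distribution. A direct polar-coordinate computation (factoring radius and direction as in Claim \ref{claim:balls}) gives
\begin{align*}
\EE{F_V}{(x^Tu)^4} \;=\; \E{r^4}\,\E{\omega_1^4} \;=\; \frac{(k+2)^2 \cdot 3}{(k+2)(k+4)} \;=\; \frac{3(k+2)}{k+4},
\end{align*}
so $|\EE{F_V}{(x^Tu)^4} - \gamma_4| = 6/(k+4) = \Omega(1)$ (viewing $k$ as fixed). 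Thus $F_V$ is $(4, 6/(k+4))$-moment-distinguishable along every unit vector in $V$, and in particular along every unit vector in any minimal relevant subspace $P \subseteq V$ of an $\ell \in \HH$. Condition (3) of moment-learnability is satisfied using $F_V$ itself, without ever needing the positive-sample distribution $F_V^+$.

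With moment-learnability in hand, Theorem \ref{thm:learning-gaussian-noise} applies directly and yields a hypothesis that agrees with $\ell$ on $1-\eps$ of $F$ in time
$T(k,\eps) + C_F(n,4,\eps)\cdot \poly(n, k, 1/\eta, 1/\eps, \log(1/\delta), M)$.
Since $F_V$ is uniform over a convex body and $F_W$ is Gaussian, $F$ is logconcave; the Guedon--Rudelson bound quoted in Section \ref{sec:moments} then gives $C_F(n,4,\eps) = O_{\eps}(n^2 \log n)$ samples to estimate the fourth moment tensor to the needed accuracy in every direction. Folding all factors that depend only on $k$ and $\eps$ into a single constant $C_{k,\eps}$ yields the claimed total complexity $T(k,\eps) + C_{k,\eps}\, n^2$.

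There is no real obstacle: the potentially delicate point would have been handling directions along which the second moment of $F_V$ matches that of the Gaussian (here it matches in every direction by isotropy), but the fourth moment of the ball is uniformly bounded away from $\gamma_4$ by exactly $6/(k+4)$, so the distinguishing moment is the same in every direction and moment-learnability follows with $m=4$ without ever resorting to the conditional distribution on positive examples or to higher-order correlations.
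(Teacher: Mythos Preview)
Your proposal is correct and follows the same high-level strategy as the paper: show that the fourth moment of the isotropic uniform ball in $V$ is bounded away from $\gamma_4=3$ along every direction, conclude $(4,\Omega(1))$-moment-learnability via $F_V$ alone, invoke Theorem~\ref{thm:learning-gaussian-noise}, and plug in the Guedon--Rudelson sample bound for the fourth moment tensor.

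The difference is in how the fourth moment is computed. You use the polar factorization $x=r\omega$ and the standard identities $\E{r^4}=k(k+2)^2/(k+4)$ and $\E{\omega_1^4}=3/(k(k+2))$, obtaining the closed form $\EE{F_V}{(x^Tu)^4}=3(k+2)/(k+4)$ and hence the explicit gap $\eta=6/(k+4)$. The paper instead integrates $x_1^4(1-x_1^2/R^2)^l$ directly, computes the ratio of ball volumes via Stirling's approximation, and argues qualitatively that the product stays bounded away from $3$; that computation is considerably longer and contains some arithmetic slips in the asymptotics (though the conclusion survives). Your route is more elementary, gives the exact constant, and makes transparent that $\eta$ shrinks like $1/k$ --- so the $\Omega(1)$ really does rely on $k$ being treated as fixed, which is consistent with how both you and the paper interpret the statement.
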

\ifthenelse{\boolean{longversion}}{
\begin{proof}
  We will examine what happens when we run \textbf{FindBasis} on the
  full distribution (as opposed to the positive distribution in the
  previous example).  We compute the fourth moment of a ball of radius
  $R=\sqrt{n+2}$. For simplicity, we will assume that $k=2l+1$ for
  some positive integer $l$ ie $k$ is odd:
  \begin{align*}
    \E{x_1^4} & = \int_{\B_R(0)} x_1^4 \rho dx \\
    & = \int_{-R}^{R} \int_{\B_{\sqrt{R^2-x_1^2}}^{k-1}(0)} x_1^4 \rho
    dx_2 \cdots dx_k dx_1 \\
    & = \frac{1}{\vol{\B_R^k(0)}} \int_{-R}^{R} x_1^4
    \vol{\B_{\sqrt{R^2-x_1^2}}^{k-1}(0)} dx_1 \\
    & = \frac{\vol{\B_R^{k-1}(0)}}{\vol{\B_R^k(0)}} \int_{-R}^{R} x_1^4
    \left( 1-\frac{x_1^2}{R^2}\right)^l dx_1
  \end{align*}
  We first examine the volume ratio: using the recurrence:
  \begin{align*}
    \vol{\B^{k}_R(0)} = \frac{2\pi R^2}{k} \vol{\B^{k-2}_R(0)}
  \end{align*}
  and unrolling the recurrence, we have:
  \begin{align*}
    \frac{\vol{2l}}{\vol{2l+1}} &=\frac{(2l+1)!!}{2R(2l)!!} \\
    & = \frac{1}{2R} \frac{(2l+2)!!}{(l+1)!2^{l+1} l! 2^l} \\
    & = \frac{1}{2R} \frac{(2l+2)!!}{(l+1)! l! 2^{2l+1}} \\
  \end{align*}
  Applying Stirling's approximation, we have:
\begin{align*}
  \frac{\vol{2l}}{\vol{2l+1}} &= \frac{1}{2R}
  \frac{\sqrt{2\pi(2l+2)}}{2\pi \sqrt{l(l+1)}} \frac{1}{2^{2l+1}}
  \left( \frac{2l+2}{e} \right)^{2l+2}
  \left( \frac{e}{l} \right)^l
  \left( \frac{e}{l+1} \right)^{(l+1)} \\
& = \frac{1}{2R} \frac{1}{\sqrt{\pi l}} \frac{2}{e} \left(
  \frac{l+1}{l} \right)^l (l+1) \\
& = \frac{1}{R\sqrt{\pi}} \frac{l+1}{\sqrt{l}}\\
& = \frac{1}{\sqrt{\pi}} \left( 1 + \sqrt{\frac{1}{l(l+2)}} \right)
\end{align*}
Returning to the integrand, we can simplify it somewhat:
\begin{align*}
\int_{-R}^{R} x_1^4
    \left( 1-\frac{x_1^2}{R^2}\right)^l dx_1 &= 2 \int_{0}^{R} x_1^4
    \left( 1-\frac{x_1^2}{R^2}\right)^l dx_1
\end{align*}
By explicitly taking the integral (using a computer algebra system), we have:
\begin{align*}
\int_{0}^{R} x_1^4
    \left( 1-\frac{x_1^2}{R^2}\right)^l dx_1 = \frac{3 \sqrt{\pi} (2 l+3)^{5/2}
     \Gamma ( l+1)}{8 \Gamma (l+7/2)}
\end{align*}
where $\Gamma$ here is the usual gamma function. The behavior of this
function is as follows:
\begin{align*}
  \lim_{l \to \infty} \frac{3 \sqrt{\pi} (2 l+3)^{5/2}
     \Gamma ( l+1)}{8 \Gamma (l+7/2)} = 3\sqrt{\frac{\pi}{2}}
\end{align*}
Moreover, the function is monotonic increasing for $l > 0$, and takes
on the value $56\sqrt{7}/45$ at $l=2$. Thus, combining these facts
with the estimate of the volume ratios, we can see that the fourth
moment of a ball is bounded away from the fourth moment of a standard
Gaussian by a constant, hence we can take $\eta = \Omega(1)$. Once we
have the relevant subspace $V$, we can project the samples to $V$ and
learn in time $T(k,\eps)$. The runtime in this case is $T(k,\eps)
+C_{k,\eps} n^2$.
\end{proof}
}{} As a specialization, when the positive examples are determined by
a convex subset of the unit ball, $T(k,\eps) \le (k/\eps)^{O(k)}$. In
a $k$-dimensional subspace, we can learn a convex subset of the ball
by simply taking the convex hull of $(k/\eps)^{O(k)}$ random positive
points. From the classical approximation theory of convex bodies
\cite{ConvGeomHandbook}, we obtain an approximation to the true convex
body to within relative error $\eps$, giving total runtime
$(k/\eps)^{O(k)} + C_{k,\eps}n^2$.
\ifthenelse{\boolean{longversion}}{ This complements \cite{Vempala10}
  which provides a PCA-based algorithm for learning convex bodies when
  the distribution in the relevant subspace is also Gaussian.  In that
  paper, it is mentioned that standard PCA fails if the full
  distributions is not a Gaussian.

}{} We now present an example that relies on boundedness -- either of
the full distribution in the relevant subspace, or the positive
distribution. This rather general result uses relatively many moments.

\begin{example}[Compact distribution in relevant subspace]\label{ex:compact}
  Let $F=F_VF_W$ where $F_W$ is any Gaussian over $n-k$
  dimensions. Take $\HH$ to be a robust hypothesis class learnable
  with complexity $T(k,\eps)$. Assume that either $F_V$ or $\HH$ has
  its support contained in $B_{g(k)}(0)$.
\end{example}

\begin{proposition} \label{proposition:compact}
  The triple $(k,F,\HH)$ described in Application \ref{ex:compact} is
  $(g(k),\Omega(1))$ moment-learnable with complexity $T(k,\eps) +
  C_{k,\eps} n^{O(g(k)^2)}$.
\end{proposition}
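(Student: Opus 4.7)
The plan is to verify the moment-learnability condition for the triple $(k, F, \HH)$ by choosing the moment order roughly $m = \Theta(g(k)^2)$, and then apply Theorem \ref{thm:learning-gaussian-noise} directly. The key observation is that a random variable with bounded support cannot match the high moments of a Gaussian, since Gaussian moments grow factorially while bounded moments grow only polynomially in the support radius.

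Concretely, I would first prove a one-dimensional lemma: if $Y$ is a real random variable with $|Y| \le R$ almost surely and $\E{Y^2} = 1$, then for every even $m \ge 2$ we have $|\E{Y^m}| \le R^{m-2}$, because $|Y|^m \le R^{m-2} Y^2$. On the other hand, by Stirling's formula $\gamma_m = (m-1)!! \sim \sqrt{2}(m/e)^{m/2}$ for even $m$. Hence taking $m$ to be the smallest even integer larger than $c\,g(k)^2$ for a sufficiently large absolute constant $c$, one obtains $\gamma_m \ge R^{m-2} + 1$, so $|\E{Y^m} - \gamma_m| \ge 1$. This establishes $(m, 1)$-moment-distinguishability for any one-dimensional marginal of an isotropic distribution supported in a ball of radius $R = g(k)$.

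Next I would apply this to the two cases. If $F_V$ has support in $B_{g(k)}(0)$, then after isotropizing $F$ the distribution $F_V$ remains supported in a ball of radius $O(g(k))$ (up to the conditioning factors from isotropization, which are absorbed into the constant), and for every unit $u \in V$ the one-dimensional projection $\pi_u F_V$ satisfies the hypothesis above. If instead the positive region of $\ell \in \HH$ lies in $B_{g(k)}(0)$, then the conditional distribution $F_V^+$ is supported in a ball and, after renormalising to unit variance along any chosen $u$, the same one-dimensional argument applies to $F_V^+$ in place of $F_V$. In either case the triple satisfies condition 3 of moment-learnability with $m = O(g(k)^2)$ and $\eta = \Omega(1)$. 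Combined with the robustness hypothesis on $\HH$ built into the Application statement, we can plug into Theorem \ref{thm:learning-gaussian-noise}.

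The final complexity bound then follows mechanically. By Theorem \ref{thm:learning-gaussian-noise}, the running time is $T(k, \eps) + C_F(n, m, \eps) \poly(n, \eta, k, 1/\eps, \log(1/\delta), M)$. Estimating the $m$-th moment tensor entrywise to accuracy $\eps$ requires $n^{O(m)}$ samples, and with $m = O(g(k)^2)$ this is $n^{O(g(k)^2)}$, yielding the claimed bound $T(k,\eps) + C_{k,\eps} \cdot n^{O(g(k)^2)}$. The main obstacle, and the place requiring the most care, is the normalization issue in Case B: when we condition on the positive region, the conditional variance of $F_V^+$ along a direction $u$ can be much smaller than the unconditional variance, so one must either rescale $F_V^+$ to unit variance in each direction before invoking the one-dimensional moment comparison, or equivalently argue that the moment gap $|\E{(x^Tu)^m} - \gamma_m|$ is preserved up to a harmless constant after such rescaling. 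Handling this rigorously and verifying that the resulting $\eta$ remains $\Omega(1)$ (not shrinking with $k$) is the technical heart of the proof; everything else is bookkeeping.
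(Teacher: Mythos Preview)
Your proposal is correct and follows essentially the same route as the paper: both arguments hinge on the elementary observation that a distribution supported in a ball of radius $g(k)$ has $m$-th moment at most $g(k)^m$, while $\gamma_m=(m-1)!!$ grows like $(m/e)^{m/2}$, so Stirling gives a constant gap once $m$ is on the order of $g(k)^2$; the complexity bound then follows by plugging into Theorem~\ref{thm:learning-gaussian-noise}. Your one-dimensional lemma using $|Y|^m\le R^{m-2}Y^2$ is a slight sharpening of the paper's cruder bound $\E{(x^Tu)^m}\le g(k)^m$, and your explicit flagging of the isotropization issue for $F_V^+$ in Case~B is a point the paper's own proof glosses over rather than resolves---so you are not missing anything the paper supplies.
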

\ifthenelse{\boolean{longversion}}{
\begin{proof}
  Suppose we run \textbf{FindBasis} on the full distribution or the
  positive distribution, whichever is contained in a ball of radius
  $g(k)$. Consider the relevant subspace. If we fix some even moment
  $m$ then we can give explicit bounds on the moments:
  \begin{align*}
    \E{(x_t)^m} & \le g(k)^m.
  \end{align*}
  On the other hand, the even moments of a Gaussian are given by
  $(m-1)!!=m!/(m/2)!2^{m/2}$ which grows much more rapidly. If we take
  logarithms on both sides, then we can find $m=m(k)$ such that:
  \begin{align*}
    m \log (g(k)) \le \log \left( \frac{m!}{(m/2)!2^{m/2}}
    \right)
  \end{align*}
  Applying Stirling's approximation yields:
  \begin{align*}
        \frac{m}{2}\log (g(k)^2) & \le m \log (m ) - m - \frac{m}{2} \log
        \left( \frac{m}{2} \right) + \frac{m}{2} - \frac{m}{2} \log
        (2) \\
        & \le \frac{m}{2} \log(m) - \frac{m}{2}
  \end{align*}
  So if we pick $m=2g(k)^2$, then the difference in the moments should
  be $\Omega (1)$.  Thus, simply running \textbf{FindBasis} on the
  full distribution will allow us to recover the relevant subspace, at
  which point we can learn $\HH$ in $\R^k$ (doable in time $T(k)$). It
  remains to prove that we can sample the first $2g(k)^2$ moments of a
  bounded distribution efficiently: since it is bounded, all moments
  exist. In particular, if we require $2g(k)^2$ moments, then the
  $4g(k)^2$ moment is bounded by $g(k)^{4g(k)^2}$. Then by applying
  Chebyshev's inequality, we see that we need at most
  $g(k)^{O(g(k)^2)}$ samples in the relevant subspace. The overall
  runtime for this algorithm is then $T(k,\eps)+C_{k,\eps}
  n^{O(g(k)^2)}$.
\end{proof}
}{}

\bibliographystyle{plain}
\bibliography{kplanes}

\begin{thebibliography}{10}

\bibitem{adamczak}
Radoslaw Adamczak, Alexander Litvak, Alain Pajor, and Nicole
  Tomczak-Jaegermann.
\newblock Quantitative estimates of the convergence of the empirical covariance
  matrix in logconcave ensembles.
\newblock {\em J. Amer. Math. Soc.}, 233:535--561, 2011.

\bibitem{vempala06}
Rosa~I. Arriaga and Santosh Vempala.
\newblock An algorithmic theory of learning: Robust concepts and random
  projection.
\newblock {\em Machine Learning}, 63(2):161--182, 2006.

\bibitem{bcs82}
J~W Barness, Y~Carlin, and M~L Steinberger.
\newblock Proc.~ the globecom conference.
\newblock pages 1251--1255, 1982.

\bibitem{Baum90}
Eric~B. Baum.
\newblock On learning a union of half spaces.
\newblock {\em J. Complexity}, 6(1):67--101, 1990.

\bibitem{Bellsejnowski}
Anthony~J. Bell and Terrence~J. Sejnowski.
\newblock An information-maximization approach to blind separation and blind
  deconvolution.
\newblock {\em Neural Comput.}, 7(6):1129--1159, November 1995.

\bibitem{BK93}
Avrim Blum and Ravi Kannan.
\newblock Learning an intersection of k halfspaces over a uniform distribution.
\newblock In {\em FOCS}, pages 312--320, 1993.

\bibitem{BK97}
Avrim Blum and Ravindran Kannan.
\newblock Learning an intersection of a constant number of halfspaces over a
  uniform distribution.
\newblock {\em J. Comput. Syst. Sci.}, 54(2):371--380, 1997.

\bibitem{Blum94}
Avrim~L. Blum.
\newblock Relevant examples and relevant features: Thoughts from computational
  learning theory.
\newblock In {\em AAAI Fall Symposium on `Relevance'}, 1994.

\bibitem{behm89}
Anselm Blumer, Andrezj Ehrenfeucht, David Haussler, and Manfred~K. Warmuth.
\newblock Learnability and the varpnik-chervovenkis dimension.
\newblock {\em Journal of the ACM}, 6:929--965, 1989.

\bibitem{BruThesis}
S.~Charles Brubaker and Santosh Vempala.
\newblock {\em Extensions of Principal Component Analysis}.
\newblock PhD thesis, Georgia Institute of Technology, 2009.

\bibitem{Bru09}
S.~Charles Brubaker and Santosh Vempala.
\newblock Random tensors and planted cliques.
\newblock In {\em RANDOM}, pages 406--419, 2009.

\bibitem{Carbery}
Anthony Carbery and James Wright.
\newblock Distributional and {$L^q$} norm inequalities for polynomials over
  convex bodies in {$R^n$}.
\newblock {\em Mathematical Research Letters}, 8:233--248, 2001.

\bibitem{cardoso96}
J.-F. Cardoso and B.H. Laheld.
\newblock Equivariant adaptive source separation.
\newblock {\em Signal Processing, IEEE Transactions on}, 44(12):3017 --3030,
  dec 1996.

\bibitem{cardoso89}
J.F. Cardoso.
\newblock Source separation using higher order moments.
\newblock In {\em International Conference on Acoustics, Speech, and Signal
  Processing}, 1989.

\bibitem{Comon91}
P.~Comon.
\newblock Independent {C}omponent {A}nalysis.
\newblock In {\em Proc. Int. Sig. Proc. Workshop on Higher-Order Statistics},
  pages 111--120, Chamrousse, France, July 10-12 1991.
\newblock Keynote address. Republished in {\em Higher-Order Statistics},
  J.L.Lacoume ed., Elsevier, 1992, pp 29--38.

\bibitem{Comon94}
P.~Comon.
\newblock Independent {C}omponent {A}nalysis, a new concept~?
\newblock {\em Signal Processing, Elsevier}, 36(3):287--314, April 1994.
\newblock Special issue on Higher-Order Statistics. hal-00417283.

\bibitem{dl95}
Nathalie Delfosse and Philippe Loubaton.
\newblock Adaptive blind separation of independent sources: A deflation
  approach.
\newblock {\em Signal Processing}, 45(1):59 -- 83, 1995.

\bibitem{Feller}
William Feller.
\newblock {\em An Introduction to Probability Theory and its Applications, vol
  1}.
\newblock John Wiley and sons, 1968.

\bibitem{Frieze96}
Alan Frieze, Mark Jerrum, and Ravi Kannan.
\newblock Learning linear transformations.
\newblock In {\em FOCS}, pages 359--368, 1996.

\bibitem{FK08}
Alan Frieze and Ravi Kannan.
\newblock A new approach to the planted clique problem.
\newblock In {\em Proceedings of FST and TCS}, 2008.

\bibitem{gm11}
Olivier Guedon and Emanuel Milman.
\newblock Interpolating thin-shell and sharp large-deviation estimates for
  isotropic log-concave measures.
\newblock {\em Geometric Functional Analysis}, to appear, 2011.

\bibitem{guedon}
Olivier Guedon and Mark Rudelson.
\newblock {$L_p$} moments of random vectors via majorizing measures.
\newblock {\em Advances in Mathematics}, 208:798--823, 2007.

\bibitem{fastica99}
A.~Hyvarinen.
\newblock Fast and robust fixed-point algorithms for independent component
  analysis.
\newblock {\em Neural Networks, IEEE Transactions on}, 10(3):626 --634, may
  1999.

\bibitem{ICA01}
Aapo Hyvarinen, Juha Karhunen, and Erkki Oja.
\newblock {\em Independent Component Analysis}.
\newblock John Wiley and Sons, 2001.

\bibitem{Jutten1991}
Christian Jutten and Jeanny Herault.
\newblock Blind separation of sources, part i: An adaptive algorithm based on
  neuromimetic architecture.
\newblock {\em Signal Processing}, 24(1):1 -- 10, 1991.

\bibitem{KLT09}
Adam~R. Klivans, Philip~M. Long, and Alex~K. Tang.
\newblock Baum's algorithm learns intersections of halfspaces with respect to
  log-concave distributions.
\newblock In {\em APPROX-RANDOM}, pages 588--600, 2009.

\bibitem{KOS08}
Adam~R. Klivans, Ryan O'Donnell, and Rocco~A. Servedio.
\newblock Learning geometric concepts via gaussian surface area.
\newblock In {\em FOCS}, pages 541--550, 2008.

\bibitem{kolda}
Tamara Kolda.
\newblock Shifted power for computing tensor eigenpairs.
\newblock arXiv:1007.1267v2, February 2011.

\bibitem{Kolda09}
Tamara~G. Kolda and Brett~W. Bader.
\newblock Tensor decompositions and applications.
\newblock {\em SIAM Rev.}, 51(3):455--500, August 2009.

\bibitem{lacoumeruiz}
Jean-Louis Lacoume and P.~Ruiz.
\newblock Separation of independent sources from correlated inputs.
\newblock {\em IEEE Transactions on Signal Processing}, 40(12):3074--3078,
  1992.

\bibitem{LV07}
L{\'a}szl{\'o} Lov{\'a}sz and Santosh Vempala.
\newblock The geometry of logconcave functions and sampling algorithms.
\newblock {\em Random Struct. Algorithms}, 30(3):307--358, 2007.

\bibitem{MOS03}
Elchanan Mossel, Ryan O'Donnell, and Rocco~A. Servedio.
\newblock Learning functions of k relevant variables.
\newblock {\em Journal of Computer and System Sciences}, 69:421--434, 2004.

\bibitem{nr06}
Phong~Q. Nguyen and Oded Regev.
\newblock Learning a parallelepiped: Cryptanalysis of ggh and ntru signatures.
\newblock In {\em EUROCRYPT}, 2006.

\bibitem{ConvGeomHandbook}
J.~M. Wills~(editors) P.~M.~Gruber.
\newblock {\em Handbook of convex geometry. Vol. A. B}.
\newblock North-Holland, Amsterdam, 1993.

\bibitem{rudelson}
Mark Rudelson.
\newblock Random vectors in the isotropic position.
\newblock {\em J. of Functional Analysis}, 164:60--72, 1999.

\bibitem{Schwartz}
Jack Schwartz.
\newblock Fast probabilistic algorithms for verification of polynomial
  identities.
\newblock {\em Journal of the ACM}, 27:701--717, 1980.

\bibitem{sw90}
O~Shalvi and E~Weinstein.
\newblock New criteria for blind deconvolution of nonminimum phase systems
  (channels).
\newblock {\em IEEE Transactions on Information Theory}, 36(2):312--321.

\bibitem{shiryaev}
Albert Shiryaev.
\newblock {\em Probability}.
\newblock Springer-Verlag, New York, NY, 1995.

\bibitem{sriver}
Nikhil Srivastava and Roman Vershynin.
\newblock Covariance estimates for distributions with {$2+\epsilon$} moments.
\newblock {\em submitted}, 2011.

\bibitem{PAC-Valiant}
L.~G. Valiant.
\newblock A theory of the learnable.
\newblock {\em Commun. ACM}, 27(11):1134--1142, November 1984.

\bibitem{Vempala97}
Santosh Vempala.
\newblock A random sampling based algorithm for learning the intersection of
  half-spaces.
\newblock In {\em FOCS}, pages 508--513, 1997.

\bibitem{Vempala10}
Santosh Vempala.
\newblock Learning convex concepts from gaussian distributions with pca.
\newblock In {\em FOCS}, pages 541--550, 2010.

\bibitem{Vem10}
Santosh Vempala.
\newblock A random sampling based algorithm for learning the intersection of
  half-spaces.
\newblock {\em JACM}, 57:32:1--32:14, 2010.

\bibitem{vershynin}
Roman Vershynin.
\newblock How close is the sample covariance to the actual covariance matrix?
\newblock {\em Journal of Theoretical Probability}, to appear, 2010.

\end{thebibliography}

\end{document}